\newcommand{\acli}[1]{\emph{\acl{#1}}}		
\newcommand{\acdef}[1]{\emph{\acl{#1}} \textup{(\acs{#1})}\acused{#1}}		
\colorlet{MyRed}{FireBrick!50!Crimson}
\colorlet{MyBlue}{DodgerBlue!75!black}
\colorlet{MyGreen}{DarkGreen!85!black}
\colorlet{MyViolet}{DarkMagenta}
\colorlet{MyLightBlue}{DodgerBlue!20}
\colorlet{MyLightGreen}{MyGreen!20}
\colorlet{PrimalColor}{MyBlue}
\colorlet{PrimalFill}{MyLightBlue}
\colorlet{DualColor}{MyRed}
\colorlet{RevColor}{blue}
\colorlet{LinkColor}{MediumBlue}
\newcommand{\afterhead}{.\;}		
\newcommand{\para}[1]{\medskip\paragraph{\textbf{#1\afterhead}}}
\newcommand{\itempar}[1]{\medskip\paragraph{\raisebox{1pt}{\small$\blacktriangleright$}\;\;\bfseries#1\afterhead}}
\newcommand{\citef}[2][]{\citeauthor{#2} \cite[#1]{#2}}
\newcommand{\EMAIL}[1]{\email{\href{mailto:#1}{#1}}}
\crefname{algo}{Algorithm}{Algorithms}
\crefname{assumption}{Assumption}{Assumptions}
\crefname{figure}{Fig.}{Figs.}
\theoremstyle{plain}
\newtheorem{theorem}{Theorem}		
\newtheorem{corollary}{Corollary}		
\newtheorem{lemma}{Lemma}		
\newtheorem{proposition}{Proposition}		
\newtheorem*{corollary*}{Corollary}		
\theoremstyle{definition}
\newtheorem{definition}{Definition}		
\newtheorem{assumption}{Assumption}		
\newtheorem{example}{Example}		
\newtheorem{algo}{Algorithm}		
\newtheorem*{definition*}{Definition}		
\newtheorem*{assumption*}{Assumption}		
\newtheorem*{blanket*}{Blanket assumption}		
\newtheorem*{example*}{Example}		
\theoremstyle{remark}
\newtheorem{remark}{Remark}		
\newtheorem*{remark*}{Remark}		
\def\endenv{\hfill{\small$\lozenge$}\smallskip}
\newcounter{proofpart}
\numberwithin{example}{section}		
\newcommand{\debug}[1]{#1}		
\newcommand{\explain}[1]{\tag*{\#\:#1}}
\newcommand{\newmacro}[2]{\newcommand{#1}{\debug{#2}}}		
\newcommand{\newop}[2]{\DeclareMathOperator{#1}{\debug{#2}}}		
\DeclarePairedDelimiter{\braces}{\{}{\}}		
\DeclarePairedDelimiter{\bracks}{[}{]}		
\DeclarePairedDelimiter{\parens}{(}{)}		
\DeclarePairedDelimiter{\abs}{\lvert}{\rvert}		
\DeclarePairedDelimiter{\clip}{[}{]}		
\DeclarePairedDelimiterX{\setdef}[2]{\{}{\}}{#1:#2}		
\DeclarePairedDelimiterXPP{\exclude}[1]{\mathopen{}\setminus}{\{}{\}}{}{#1}
\newcommand{\N}{\mathbb{N}}		
\newcommand{\R}{\mathbb{R}}		
\DeclareMathOperator*{\argmax}{arg\,max}		
\DeclareMathOperator*{\argmin}{arg\,min}		
\DeclareMathOperator*{\intersect}{\bigcap}		
\DeclareMathOperator*{\union}{\bigcup}		
\DeclareMathOperator{\bigoh}{\mathcal{O}}		
\DeclareMathOperator{\diam}{diam}		
\DeclareMathOperator{\dist}{dist}		
\DeclareMathOperator{\dom}{dom}		
\DeclareMathOperator{\im}{im}		
\DeclareMathOperator{\Jac}{Jac}		
\DeclareMathOperator{\one}{\mathds{1}}		
\DeclareMathOperator{\proj}{pr}		
\DeclareMathOperator{\relint}{ri}		
\DeclareMathOperator{\unif}{unif}		
\newcommand{\cf}{cf.\xspace}		
\newcommand{\eg}{e.g.,\xspace}		
\newcommand{\ie}{i.e.,\xspace}		
\newcommand{\textpar}[1]{\textup(#1\textup)}		
\newcommand{\txs}{\textstyle}		
\newcommand{\alt}[1]{#1'}		
\newcommand{\altalt}[1]{#1''}		
\newmacro{\dd}{\:d}		
\newcommand{\ddt}{\frac{d}{dt}}		
\newcommand{\eps}{\varepsilon}		
\newmacro{\const}{c}		
\newmacro{\Const}{C}		
\newmacro{\coef}{\lambda}		
\newmacro{\param}{\theta}		
\newmacro{\params}{\Theta}		
\newmacro{\pexp}{p}		
\newmacro{\qexp}{q}		
\newmacro{\rexp}{r}		
\newmacro{\stepexp}{\ell_{\step}}		
\newmacro{\mixexp}{\ell_{\mix}}		
\newmacro{\biasexp}{\ell_{\bias}}		
\newmacro{\noisexp}{\ell_{\sdev}}		
\newmacro{\beforestart}{0}		
\newmacro{\start}{1}		
\newmacro{\afterstart}{2}		
\newmacro{\running}{\start,\afterstart,\dotsc}		
\newmacro{\halfrunning}{1/2,1,3/2,\dotsc}		
\newmacro{\run}{n}		
\newmacro{\runalt}{k}		
\newmacro{\runaltalt}{m}		
\newmacro{\nRuns}{T}		
\newmacro{\runs}{\mathcal{\nRuns}}		
\newmacro{\state}{X}		
\newmacro{\statealt}{Y}		
\newmacro{\statealtalt}{Z}		
\newcommand{\init}[1][\state]{\debug{#1}_{\start}}		
\newcommand{\iter}[1][\state]{\debug{#1}_{\runalt}}		
\newcommand{\prev}[1][\state]{\debug{#1}_{\run-1}}		
\newcommand{\curr}[1][\state]{\debug{#1}_{\run}}		
\renewcommand{\next}[1][\state]{\debug{#1}_{\run+1}}		
\newcommand{\beforelead}[1][\state]{\debug{#1}_{\run-1/2}}		
\newcommand{\lead}[1][\state]{\debug{#1}_{\run+1/2}}		
\newmacro{\tstart}{0}		
\renewcommand{\time}{\debug{t}}		
\newmacro{\timealt}{s}		
\newmacro{\horizon}{T}		
\newcommand{\trajof}[2][]{\point_{#1}(#2)}
\newcommand{\dtrajof}[2][]{\dpoint_{#1}(#2)}
\newmacro{\flow}{\phi}		
\DeclarePairedDelimiterXPP{\flowof}[2]{\flow_{#1}}{(}{)}{}{#2}		
\newmacro{\basin}{\mathcal{B}}		
\newmacro{\dbasin}{\mathcal{D}}		
\newmacro{\limset}{\mathcal{L}}		
\newop{\Nash}{NE}		
\newop{\CE}{CE}		
\newop{\CCE}{CCE}		
\newop{\NI}{NI}		
\newop{\brep}{br}		
\newop{\reg}{Reg}		
\newop{\preg}{\overline{Reg}}		
\newop{\val}{val}		
\newmacro{\play}{i}		
\newmacro{\playalt}{j}		
\newmacro{\playaltalt}{k}		
\newmacro{\nPlayers}{N}		
\newmacro{\players}{\mathcal{\nPlayers}}		
\newmacro{\pure}{\alpha}		
\newmacro{\purealt}{\beta}		
\newmacro{\purealtalt}{\gamma}		
\newmacro{\nPures}{A}		
\newmacro{\pures}{\mathcal{\nPures}}		
\newmacro{\strat}{x}		
\newmacro{\stratalt}{\alt\strat}		
\newmacro{\strataltalt}{\altalt\strat}		
\newmacro{\strats}{\mathcal{X}}		
\newmacro{\intstrats}{\strats^{\circle}}		
\newcommand{\eq}{\sol}		
\newcommand{\eqs}{\sol[\points]}		
\newmacro{\loss}{\ell}		
\newmacro{\pay}{u}		
\newmacro{\payv}{v}		
\newmacro{\pot}{\Phi}		
\newmacro{\game}{\mathcal{G}}		
\newmacro{\gamefull}{\game(\players,\points,\pay)}		
\newmacro{\fingame}{\Gamma}		
\newmacro{\fingamefull}{\Gamma(\players,\pures,\pay)}		
\newmacro{\mixgame}{\simplex(\fingame)}
\newmacro{\gmat}{g}		
\newmacro{\gdist}{\dist_{\gmat}}
\newmacro{\mfld}{\mathcal{M}}		
\newmacro{\form}{\omega}		
\newmacro{\tvec}{z}		
\newmacro{\uvec}{u}		
\newmacro{\ball}{\mathbb{B}}		
\newmacro{\sphere}{\mathbb{S}}		
\newmacro{\vertex}{v}		
\newmacro{\vertexalt}{\alt\vertex}		
\newmacro{\vertexaltalt}{\altalt\vertex}		
\newmacro{\nVertices}{V}		
\newmacro{\vertices}{\mathcal{\nVertices}}		
\newmacro{\edge}{e}		
\newmacro{\edgealt}{\alt\edge}		
\newmacro{\edgealtalt}{\altalt\edge}		
\newmacro{\nEdges}{E}		
\newmacro{\edges}{\mathcal{\nEdges}}		
\newmacro{\graph}{\mathcal{G}}		
\newmacro{\graphall}{\graph(\vertices,\edges)}		
\newmacro{\vecspace}{\mathcal{Z}}		
\newmacro{\subspace}{\mathcal{W}}		
\newmacro{\bvec}{e}		
\newmacro{\bvecs}{\mathcal{E}}		
\newmacro{\coord}{k}		
\newmacro{\coordalt}{j}		
\newmacro{\coordaltalt}{l}		
\newmacro{\nCoords}{d}		
\newmacro{\dims}{\nCoords}		
\newmacro{\vdim}{\nCoords}		
\newmacro{\pvec}{z}		
\newmacro{\dvec}{w}		
\newmacro{\pspace}{\mathcal{V}}		
\newmacro{\dspace}{\mathcal{Y}}		
\newmacro{\ppoint}{x}		
\newmacro{\ppointalt}{\alt\ppoint}		
\newmacro{\ppointaltalt}{\altalt\ppoint}		
\newmacro{\ppoints}{\mathcal{X}}		
\newmacro{\pstate}{X}		
\newmacro{\dpoint}{y}		
\newmacro{\dpointalt}{\alt\dpoint}		
\newmacro{\dpointaltalt}{\altalt\dpoint}		
\newmacro{\dpoints}{\mathcal{Y}}		
\newmacro{\dstate}{Y}		
\newmacro{\dset}{\mathcal{D}}		
\newmacro{\dnhd}{\mathcal{W}}		
\newmacro{\dlyap}{F}		
\newmacro{\mat}{M}		
\newmacro{\hmat}{H}		
\newmacro{\ones}{\mathbf{1}}		
\newmacro{\eye}{I}		
\newmacro{\zer}{\mathbf{0}}		
\DeclarePairedDelimiter{\norm}{\lVert}{\rVert}		
\DeclarePairedDelimiterXPP{\dnorm}[1]{}{\lVert}{\rVert}{_{\ast}}{#1}		
\DeclarePairedDelimiterXPP{\onenorm}[1]{}{\lVert}{\rVert}{_{1}}{#1}		
\DeclarePairedDelimiterXPP{\twonorm}[1]{}{\lVert}{\rVert}{_{2}}{#1}		
\DeclarePairedDelimiterXPP{\supnorm}[1]{}{\lVert}{\rVert}{_{\infty}}{#1}		
\DeclarePairedDelimiterX{\braket}[2]{\langle}{\rangle}{#1,#2}		
\DeclarePairedDelimiterX{\inner}[2]{\langle}{\rangle}{#1,#2}		
\newcommand{\defeq}{\coloneqq}		
\newcommand{\eqdef}{\eqqcolon}		
\newcommand{\from}{\colon}		
\newop{\Opt}{Opt}		
\newop{\Sol}{Sol}		
\newop{\gap}{Gap}		
\newmacro{\orcl}{V}		
\newop{\err}{err}		
\newop{\IWE}{IWE}		
\newmacro{\obj}{f}		
\newmacro{\objalt}{g}		
\newmacro{\sobj}{F}		
\newmacro{\gvec}{g}		
\newmacro{\gbound}{G}		
\newcommand{\sol}[1][\point]{#1^{\ast}}		
\newcommand{\sols}{\sol[\points]}		
\newmacro{\vecfield}{v}		
\newmacro{\oper}{A}		
\newmacro{\vbound}{\gbound}		
\newmacro{\lips}{L}		
\newmacro{\strong}{\alpha}		
\newmacro{\smooth}{\beta}		
\newop{\tspace}{T}		
\newop{\tcone}{TC}		
\newop{\dcone}{\tcone^{\ast}}		
\newop{\ncone}{NC}		
\newop{\pcone}{PC}		
\newop{\hull}{\Delta}		
\newmacro{\cvx}{\mathcal{C}}		
\newmacro{\subd}{\partial}		
\newmacro{\minmax}{\mathcal{L}}		
\newmacro{\minvar}{\point_{1}}		
\newmacro{\minvaralt}{\alt\minvar}		
\newmacro{\minvars}{\points_{1}}		
\newmacro{\maxvar}{\point_{2}}		
\newmacro{\maxvaralt}{\alt\maxvar}		
\newmacro{\maxvars}{\points_{2}}		
\newop{\Eucl}{\Pi}		
\newop{\logit}{\Lambda}		
\newop{\dkl}{KL}		
\newmacro{\hreg}{h}		
\newmacro{\hconj}{\hreg^{\ast}}		
\newmacro{\breg}{D}		
\newmacro{\mprox}{P}		
\newmacro{\mirror}{Q}		
\newmacro{\fench}{F}		
\newmacro{\hstr}{K}		
\newmacro{\proxdom}{\points_{\hreg}}		
\newmacro{\proxdomi}{\points_{\hreg_{\play}}}		
\DeclarePairedDelimiterXPP{\proxof}[2]{\mprox_{#1}}{(}{)}{}{#2}		
\newmacro{\zone}{\mathbb{D}}		
\newmacro{\point}{x}		
\newmacro{\pointalt}{\alt\point}		
\newmacro{\pointaltalt}{\altalt\point}		
\newmacro{\points}{\mathcal{X}}		
\newmacro{\intpoints}{\relint\points}		
\newmacro{\base}{p}		
\newmacro{\basealt}{q}		
\newmacro{\basealtalt}{u}		
\newmacro{\open}{\mathcal{U}}		
\newmacro{\closed}{\mathcal{C}}		
\newmacro{\cpt}{\mathcal{K}}		
\newmacro{\set}{\mathcal{S}}		
\newmacro{\nhd}{\mathcal{U}}		
\newop{\ex}{\mathbb{E}}		
\newop{\prob}{\mathbb{P}}		
\newop{\Var}{Var}		
\newop{\simplex}{\hull}		
\providecommand\given{}		
\DeclarePairedDelimiterXPP{\exof}[1]{\ex}{[}{]}{}{
\renewcommand\given{\nonscript\,\delimsize\vert\nonscript\,\mathopen{}} #1}
\DeclarePairedDelimiterXPP{\probof}[1]{\prob}{(}{)}{}{
\renewcommand\given{\nonscript\:\delimsize\vert\nonscript\:\mathopen{}} #1}
\DeclarePairedDelimiterXPP{\oneof}[1]{\one}{(}{)}{}{
\renewcommand\given{\nonscript\,\delimsize\vert\nonscript\,\mathopen{}} #1}
\newmacro{\seed}{\theta}		
\newmacro{\seeds}{\Theta}		
\newmacro{\pdist}{P}		
\newmacro{\history}{\mathcal{H}}		
\newmacro{\sample}{\omega}		
\newmacro{\samples}{\Omega}		
\newmacro{\filter}{\mathcal{F}}		
\newmacro{\probspace}{(\samples,\filter,\prob)}		
\newcommand{\as}{\debug{\textpar{a.s.}}\xspace}		
\newmacro{\event}{\mathcal{E}}       
\newmacro{\eventalt}{\mathcal{H}}       
\newmacro{\mean}{\mu}		
\newmacro{\sdev}{\sigma}		
\newmacro{\variance}{\sdev^{2}}		
\newcommand{\est}[1]{\hat #1}		
\newmacro{\signal}{\hat\vecfield}		
\newmacro{\step}{\gamma}		
\newmacro{\learn}{\eta}		
\newmacro{\efftime}{\tau}		
\newcommand{\apt}[2][]{\dstate_{#1}(#2)}		
\newcommand{\papt}[2][]{\state_{#1}(#2)}		
\newmacro{\noise}{U}		
\newmacro{\snoise}{\xi}		
\newmacro{\noisepar}{\sdev}		
\newmacro{\noisevar}{\variance}		
\newmacro{\aggnoise}{\mathrm{\uppercase\expandafter{\romannumeral1}}}		
\newmacro{\supnoise}{\aggnoise_{\infty}}		
\newmacro{\maxnoise}{\aggnoise^{\ast}}		
\newmacro{\bias}{b}		
\newmacro{\bbound}{B}		
\newmacro{\sbias}{\chi}		
\newmacro{\aggbias}{\mathrm{\uppercase\expandafter{\romannumeral2}}}		
\newmacro{\supbias}{\aggbias_{\infty}}		
\newmacro{\maxbias}{\aggbias^{\ast}}		
\newmacro{\second}{\psi}		
\newmacro{\sbound}{M}		
\newmacro{\aggsecond}{\mathrm{\uppercase\expandafter{\romannumeral3}}}		
\newmacro{\supsecond}{\aggsecond_{\infty}}		
\newmacro{\maxsecond}{\aggsecond^{\ast}}		
\newmacro{\mix}{\delta}		
\newmacro{\unitvec}{w}		
\newmacro{\unitvar}{W}		
\newmacro{\perturb}{z}		
\newmacro{\purequery}{\est\pure}		
\newmacro{\query}{\est\state}		
\newmacro{\pivot}{\point}		
\newmacro{\querypoint}{\est\point}		
\DeclareMathOperator{\sech}{sech}
\newmacro{\kerspace}{\mathcal{W}}
\newmacro{\pureq}{\eq[\pure]}
\newop{\probalt}{\mathbb{Q}}		
\newmacro{\good}{\event}
\newmacro{\bad}{\mathcal{N}}
\newmacro{\lyap}{\pot}
\newmacro{\gauge}{\varphi}
\newmacro{\energy}{E}
\newmacro{\elvl}{a}
\newmacro{\ebound}{H}
\newmacro{\esmooth}{\beta}
\newmacro{\hien}{\energy_{+}}
\newmacro{\loen}{\energy_{-}}
\newmacro{\emax}{\hien}
\newmacro{\ground}{0}
\newmacro{\ediff}{\Delta\elvl}
\newmacro{\lvl}{c}
\newcommand{\sublvl}[2]{\debug{L}_{#1}^{-}(#2)}		
\newcommand{\suplvl}[2]{\debug{L}_{#1}^{+}(#2)}		
\newmacro{\conf}{\eta}		
\newmacro{\stoptime}{N}		
\newmacro{\texp}{\mu}		
\newmacro{\bexp}{b}		
\newmacro{\sexp}{s}		
\newmacro{\dev}{z}		
\newmacro{\iDev}{k}		
\newmacro{\nDevs}{m}		
\newmacro{\devs}{\mathcal{Z}}		
\newmacro{\pflow}{\chi}		
\DeclarePairedDelimiterXPP{\pflowof}[2]{\pflow_{#1}}{(}{)}{}{#2}		
\newmacro{\dflow}{\psi}		
\DeclarePairedDelimiterXPP{\dflowof}[2]{\dflow_{#1}}{(}{)}{}{#2}		
\newmacro{\score}{\dpoint}		
\newmacro{\cone}{\mathcal{C}}		
\newmacro{\poly}{\mathcal{P}}		
\begin{document}


\title
[A unified stochastic approximation framework for learning in games]
{A unified stochastic approximation\\framework for learning in games}		

\author
[P.~Mertikopoulos]
{Panayotis Mertikopoulos$^{\ast}$}
\address{$^{\ast}$\,%
Univ. Grenoble Alpes, CNRS, Inria, Grenoble INP, LIG, 38000 Grenoble, France.}
\EMAIL{panayotis.mertikopoulos@imag.fr}
\author
[Y.~P.~Hsieh]
{Ya-Ping Hsieh$^{\S}$}
\address{$^{\S}$\,%
Institute for Machine Learning, Universitaetstrasse 6, 8092 Zurich, Switzerland.}
\EMAIL{yaping.hsieh@inf.ethz.ch}
\author
[V.~Cevher]
{Volkan Cevher$^{\sharp}$}
\address{$^{\sharp}$\,%
Laboratory for Information and Inference Systems, IEL STI EPFL, 1015 Lausanne, Switzerland.}
\EMAIL{volkan.cevher@epfl.ch}

\subjclass[2020]{%
Primary 91A10, 91A26;
secondary 68Q32, 68T02.}

\keywords{%
Nash equilibrium;
continuous games;
finite games;
stochastic approximation;
variational stability;
primal attractors;
coherence.}

\thanks{The authors are grateful to Victor Boone, Pierre-Louis Cauvin, Angeliki Giannou, Kyriakos Lotidis, Sylvain Sorin, and Manolis Vlatakis for many fruitful discussions.
Part of this work was done while P.~Mertikopoulos was visiting the Simons Institute for the Theory of Computing.}

\newacro{LHS}{left-hand side}
\newacro{RHS}{right-hand side}
\newacro{iid}[i.i.d.]{independent and identically distributed}
\newacro{lsc}[l.s.c.]{lower semi-continuous}
\newacro{GFO}{generalized first-order oracle}
\newacro{SFO}{stochastic first-order oracle}
\newacro{whp}[w.h.p]{with high probability}
\newacro{wp1}[w.p.$1$]{with probability $1$}
\newacro{MSE}{mean squared error}
\newacro{ODE}{ordinary differential equation}

\newacro{GAN}{generative adversarial network}

\newacro{KKT}{Karush\textendash Kuhn\textendash Tucker}
\newacroplural{KKT}[KKT]{Karush\textendash Kuhn\textendash Tucker}
\newacro{FOS}{first-order stationarity}
\newacro{SOS}{second-order stationarity}
\newacro{SOSC}{second-order sufficient condition}
\newacro{NE}{Nash equilibrium}
\newacroplural{NE}[NE]{Nash equilibria}
\newacro{LNE}{local Nash equilibrium}
\newacroplural{LNE}[LNE]{local Nash equilibria}
\newacro{VI}{variational inequality}
\newacroplural{VI}{variational inequalities}
\newacro{SVI}{Stampacchia variational inequality}
\newacroplural{SVI}{Stampacchia variational inequalities}
\newacro{MVI}{Minty variational inequality}
\newacroplural{MVI}{Minty variational inequalities}

\newacro{ES}{evolutionarily stable}
\newacro{VS}{variationally stable}
\newacro{NS}{neutrally stable}
\newacro{GVS}{globally variationally stable}
\newacro{GNS}{globally neutrally stable}
\newacro{DSC}{diagonal strict concavity}

\newacro{APT}{asymptotic pseudotrajectory}
\newacroplural{APT}{asymptotic pseudotrajectories}
\newacro{ICT}{internally chain transitive}
\newacroplural{ICT}{internally chain transitive sets}

\newacro{DA}{dual averaging}
\newacro{FP}{fictitious play}
\newacro{FTRL}{``follow-the-regularized-leader''}
\newacro{FTGL}{``follow-the-generalized-leader''}
\newacro{EW}[\textsc{Hedge}]{exponential\,/\,multiplicative weights}
\newacro{MWU}{multiplicative weights update}
\newacro{IWE}{importance-weighted estimator}
\newacro{EXP3}{exponential weights for exploration and exploitation}
\newacro{EGD}{entropic gradient descent}
\newacro{GA}{gradient ascent}
\newacro{EG}{extra-gradient}
\newacro{OG}{optimistic gradient}
\newacro{SEG}{stocahsatic extra-gradient}
\newacro{BMD}{bandit mirror descent}
\newacro{DE}{dual extrapolation}
\newacro{MD}{mirror descent}
\newacro{MP}{mirror-prox}
\newacro{OGD}{online gradient descent}
\newacro{OMD}{optimistic mirror descent}
\newacro{OMW}{optimistic multiplicative weights}
\newacro{GDA}{gradient descent\,/\,ascent}
\newacro{SGDA}{stochastic gradient descent/ascent}
\newacro{IGD}[iGD]{individual gradient descent}
\newacro{seqGA}{sequential gradient ascent}
\newacro{SG}{stochastic gradient}
\newacro{SGD}{stochastic gradient descent}
\newacro{SGA}{stochastic gradient ascent}
\newacro{SPSA}{single-point stochastic approximation}
\newacro{DGA}{dampened gradient approximation}

\newacro{method}[RRM]{regularized Robbins\textendash Monro}
\newacro{MRM}{mirrored Robbins\textendash Monro}
\newacro{MAD}{mirror ascent dynamics}

\newacro{HR}{Hessian\textendash Riemannian}
\newacro{RM}{Robbins\textendash Monro}
\newacro{BDG}{Burkholder\textendash Davis\textendash Gundy}

\begin{abstract}
%
%
We develop a flexible stochastic approximation framework for analyzing the long-run behavior of learning in games (both continuous and finite).
The proposed analysis template incorporates a wide array of popular learning algorithms, including gradient-based methods, the \acl{EW} algorithm for learning in finite games, optimistic and bandit variants of the above, etc.
In addition to providing an integrated view of these algorithms, our framework further allows us to obtain several new convergence results, both asymptotic and in finite time, in both continuous and finite games.
Specifically, we provide a range of criteria for identifying classes of \aclp{NE} and sets of action profiles that are attracting with high probability, and we also introduce the notion of \emph{coherence}, a game-theoretic property that includes strict and sharp equilibria, and which leads to convergence in finite time.
Importantly, our analysis applies to both oracle-based and bandit, payoff-based methods \textendash\ that is, when players only observe their realized payoffs.
\end{abstract}
\maketitle

\allowdisplaybreaks		
\acresetall		

\section{Introduction}
\label{sec:introduction}

The prototypical setting of online learning in games can be summarized as follows:
\begin{enumerate}
\item
At each stage of a repeated decision process, every player selects an action.
\item
The players receive a reward determined by their chosen actions and their individual payoff functions \textendash\ assumed a priori unknown.
\item
Based on these payoffs and any other observed information, the players update their actions and the process repeats.
\end{enumerate}
\smallskip

A key question that arises in this general setting is whether the players eventually settle down to a stable profile from which no player has an incentive to deviate.
Put differently:
\begin{quote}
\centering
\emph{Does the players' learning process converge to a \acl{NE}?}
\end{quote}
This question has been at the forefront of game-theoretic research ever since the field's earliest steps, and it has recently received renewed attention owing to its connection to data science, multi-agent reinforcement learning, networks, and many other applications where agents are called to make decisions under uncertainty.
The first positive answer here was given by \citet{Bro51} and \citet{Rob51} who introduced the so-called \acf{FP} process and established its convergence in $2$-player zero-sum finite games.
Since then, a vast number of works have examined the convergence of a diverse array of learning procedures in different classes of games:
smoothed versions of \acl{FP} in potential, zero-sum, supermodular and $2\times m$ games \citep{HS02,LC06},
gradient methods in continuous min-max games \cite{AHU58,Kor76,Pop80},
the numerous variants of \acf{MD} and other regularized learning schemes in monotone \citep{Nes09,JNT11,MZ19,MLZF+19,TK19,TK19-CDC}, smooth \citep{SALS15}, and potential games \citep{LC05,CGM15,BBF20},
etc.

At the same time, the well-known impossibility results of \citet{HMC03,HMC06} rule out the prospect of an unconditionally positive answer:
there is no uncoupled learning rule \textendash\ deterministic \emph{or} stochastic \textendash\ that converges to \acl{NE} in all games.
As a result, contemporary research on the subject has focused on
extending the classes of games in which positive results can be obtained,
relaxing the feedback requirements of the players' learning process,
and
understanding the convergence failures of popular learning algorithms.
This has in turn revealed a very fragile convergence landscape:
for example, standard gradient methods are known to converge in \emph{strictly} monotone games \citep{MZ19}, but they may diverge in bilinear min-max games (which are monotone but not strictly so) \citep{DISZ18};
this failure can be overcome by means of an extra-gradient\,/\,optimistic correction term \citep{Kor76,Pop80}, but it re-emerges in the presence of randomness and uncertainty \citep{HIMM20};
and if the game is perturbed even slightly, all these methods \textendash\ gradient, extra-gradient and optimistic \textendash\ may end up converging to a spurious limit cycle containing no critical / equilibrium points whatsoever \citep{MPP18,HMC21}.

Since these negative results are all pointwise, it is natural to turn to \emph{sets} and instead ask:
\begin{center}
\itshape
Which sets of actions are stable and attracting under a given learning process?\\
Are these sets robust to the choice of method, initialization, or available information?
\end{center}
From a dynamic standpoint, the established notion of stability is that of an \emph{attractor}, which characterizes outcomes that are resilient to small perturbations in the dynamics' initialization.
However, the questions above call for much more:
ideally, the sets under consideration should be stable in a class of learning procedures which, other than a few broad unifying features, may have radically different update structures, feedback requirements, etc.
Our aim in this paper is to identify such sets and to quantify their stability and convergence properties.

\para{Our contributions}

The basis of our analysis is a flexible stochastic approximation framework which we call the \acdef{method} template in reference to the seminal method of \citet{RM51} and the \ac{FTRL} family of algorithms of \citet{SSS06}.
This framework hinges on an implicit regularization mechanism in the spirit of \citet{Nes09} and encompasses as special cases many popular learning algorithms:
gradient-based methods for continuous games \cite{AHU58,Nes09,MZ19},
the \acl{EW} family of algorithms for finite games \cite{Vov90,LW94,ACBFS95},
optimistic \cite{Kor76,Pop80,RS13-NIPS,MLZF+19,DISZ18,GBVV+19} and bandit, payoff-based variants of the above \cite{BLM18,TK19,HCM17,CGM15},
etc.
We then seek to analyze the long-run behavior of this ``parent scheme'' via a suitable dynamical system which captures its mean, continuous-time limit, and which is sufficiently rich to accommodate different types of feedback and update structures.

In this general context, our main results can be summarized along the following axes:
\begin{enumerate}
[\bfseries 1.]
\item
\textbf{Characterization of limit sets:}
First, we show that the limit sets of \ac{method} methods are \ac{ICT} in the associated mean dynamics, \ie they are invariant and contain no smaller attractors.
This property applies to all games satisfying a certain coercivity condition \textendash\ which we call ``\emph{subcoercivity}'' \textendash\ and it allows us to deduce a series of almost sure equilibrium convergence results for min-max and potential games.
\item
\textbf{Characterization of attractors:}
We further show that sets that admit a local energy function (relative to the mean dynamics mentioned above) are attracting with high probability \textendash\ or globally attracting \acl{wp1}, depending on the energy function.
As a corollary of this result, we readily infer convergence to \acl{NE} in all strictly monotone games, and we likewise derive a series of high probability convergence results to equilibria that satisfy a certain variational stability requirement.
\item
\textbf{Fast convergence to coherent sets:}
Finally, we introduce the notion of \emph{coherence} \textendash\ an algorithm-agnostic concept which covers strict \aclp{NE} in finite games, sharp equilibria in continuous games, linear programs, etc. \textendash\ and we show that \ac{method} methods converge to such sets under significantly weaker conditions for their runtime parameters (step-size, sampling radius, etc.).
In addition, we show that projection-based methods (as opposed to interior-valued ones) converge to coherent sets in a \emph{finite} number of iterations.
\end{enumerate}
\smallskip

An appealing feature of our analysis is that it applies to both \emph{first-order} (``oracle-based'') and \emph{zeroth-order} (``payoff-based'' or ``bandit'') methods.
More to the point, our results
can be easily adapted to many other learning algorithms in the literature, reducing in this way the number of ad hoc elements required to analyze a given method.
Of course, given the breadth of the relevant literature, it is impossible to include here \emph{all} methods covered by the proposed \ac{method} template \textendash\ or that \emph{could} be covered modulo minor modifications.
Our choice of examples is only meant to illustrate different trends in the literature, and to show how some algorithms that initially seem unrelated \textendash\ like the \ac{DGA} method of \cite{BBF20} \textendash\ can be included in our framework.

\para{Paper outline}

In \cref{sec:prelims}, we introduce the game-theoretic background of our work, including the various solution concepts that we use throughout our paper (critical points, \aclp{NE}, \acl{VS} states, etc.).
Subsequently, in \cref{sec:framework}, we introduce a range of well-known algorithms for learning in games, and we show how they can be seen as special instances of the \ac{method} blueprint.
Our analysis proper begins in \cref{sec:general}, where we introduce the notion of subcoercivity and present our \ac{ICT} convergence results.
Subsequently, in \cref{sec:primal,sec:sharp}, we state and prove our main convergence results for stochastically attracting and coherent sets respectively.
\acresetall

\section{Preliminaries}
\label{sec:prelims}

\subsection*{Notation}

In what follows,
$\pspace$ will denote a $\vdim$-dimensional real space with norm $\norm{\cdot}$.
We will also write
$\dspace \defeq \pspace^{\ast}$ for the dual space of $\pspace$,
$\braket{\dpoint}{\point}$ for the canonical pairing between $\dpoint\in\dspace$ and $\point\in\pspace$,
and
$\dnorm{\dpoint} \defeq \max \setdef{\braket{\dpoint}{\point}}{\norm{\point}\leq 1}$ for the induced dual norm on $\dspace$.
As is customary, if $\pspace$ is Euclidean, we will not distinguish between primal and dual vectors.
Finally, if $\obj\from\pspace\to\R\cup\{\infty\}$ is
a convex function on $\pspace$, we will write
$\dom\obj \defeq \setdef{\point\in\pspace}{\obj(\point) < \infty}$ for its effective domain,
$\subd\obj(\point) \defeq \setdef{\dpoint\in\dspace}{\obj(\pointalt) \geq \obj(\point) + \braket{\dpoint}{\pointalt - \point} \; \text{for all $\pointalt\in\pspace$}}$ for the subdifferential of $\obj$ at $\point$,
and
$\dom\subd\obj \defeq \setdef{\point\in\pspace}{\subd\obj(\point) \neq \varnothing}$ for the domain of subdifferentiability of $\obj$.
\acused{whp}
\acused{wp1}

\subsection{Games in normal form}

Throughout the sequel, we will focus on games with a finite number of players $\play \in \players=\{1,\dotsc,\nPlayers\}$, each selecting an \emph{action} $\point_{\play}$ from some closed convex subset $\points_{\play}$ of a $\vdim_{\play}$-dimensional normed space $\pspace_{\play}$.
Gathering all players together, we will write $\points = \prod_{\play}\points_{\play}$ for the space of all \emph{action profiles} $\point = (\point_{\play})_{\play\in\players}$
and $\vdim = \sum_{\play} \vdim_{\play}$ for the dimension of the ambient space $\pspace = \prod_{\play} \pspace_{\play}$.
Finally, when we want to distinguish between the action of the $\play$-th player and that of all other players, we will employ the shorthand $(\point_{\play};\point_{-\play})$.

Given an action profile $\point\in\points$, each player $\play\in\players$ is assumed to receive a reward $\pay_{\play}(\point) \equiv \pay_{\play}(\point_{\play};\point_{-\play})$ based on an associated \emph{payoff function} $\pay_{\play}\from\points\to\R$.
In terms of regularity, we will tacitly assume that $\pay_{\play}$ is differentiable and we will write
\begin{equation}
\label{eq:vecfield}
\vecfield_{\play}(\point)
	= \nabla_{\point_{\play}} \pay_{\play}(\point_{\play};\point_{-\play})
	\quad
	\text{and}
	\quad
\vecfield(\point)
	= (\vecfield_{\play}(\point))_{\play\in\players}
\end{equation}
for the players' \emph{individual payoff gradients} and the ensemble thereof.
Finally, unless explicitly mentioned otherwise, we will treat each $\vecfield_{\play}(\point)$ as an element of the corresponding dual space $\dspace_{\play} = \pspace_{\play}^{\ast}$ of $\pspace_{\play}$,
and we will make the following blanket assumption:

\begin{assumption}
\label{asm:game}
The players' payoff functions are \emph{Lipschitz continuous and smooth}, \ie
there exist constants $\vbound_{\play},\lips_{\play}\geq0$, $\play\in\players$, such that
\begin{equation}
\label{eq:Lips}
\abs{\pay_{\play}(\pointalt) - \pay_{\play}(\point)}
	\leq \vbound_{\play} \norm{\pointalt - \point}
	\quad
	\text{and}
	\quad
\dnorm{\nabla\pay_{\play}(\pointalt) - \nabla\pay_{\play}(\point)}
	\leq \lips_{\play} \norm{\pointalt - \point}.
\end{equation}
for all $\point,\pointalt\in\points$, $\play\in\players$.
For concision, we will also write $\vbound \defeq \max_{\play}\vbound_{\play}$ and $\lips \defeq \max_{\play}\lips_{\play}$.
\end{assumption}

A \emph{continuous game in normal form} is then defined as a tuple $\game \equiv \gamefull$ with players, actions and payoff functions as above.
For concreteness, we provide some examples below:

\begin{example}[Min-max games]
\label{ex:minmax}

Consider two players $\play\in\{1,2\}$ with action spaces $\minvars$ and $\maxvars$, and payoff functions $\pay_{1} = -\minmax = -\pay_{2}$ for some smooth function $\minmax\from\points_{1}\times\points_{2}\to\R$.
Player $1$ (the ``min'' player) seeks to minimize $\minmax = -\pay_{1}$ whereas Player $2$ (the ``max'' player) seeks to maximize $\minmax = \pay_{2}$.
In many applications, $\minmax$ is (strictly) convex-concave, in which case von Neumann's theorem asserts that the game $\min_{\minvar\in\minvars}\max_{\maxvar\in\maxvars} \minmax(\minvar,\maxvar)$ always admits a solution if $\points_{1}\times\points_{2}$ is compact.
\endenv
\end{example}

\begin{example}[Cournot oligopolies]
\label{ex:Cournot}

Consider $\nPlayers$ firms supplying the market with a quantity $\point_{\play}\in[0,C_{\play}]$ of some good up to each firm's capacity $C_{\play}$.
The good is priced as a function $P(\point) = a - b \sum_{\play=1}^{\nPlayers}\point_{\play}$ of the total quantity of the good in the market, so the net utility of the $\play$-th firm is $\pay_{\play}(\point) = \point_{\play} P(\point) - c_{\play} \point_{\play}$ where $a$, $b$ and $c_{i}$ are market-related positive constants.
The resulting game $\gamefull$ is known as a Cournot competition game and it plays a central role in economic theory.
\endenv
\end{example}


\begin{example}[Finite games]
\label{ex:mixed}
In a finite game $\fingame \equiv \fingamefull$, each player $\play\in\players$ chooses an action $\pure_{\play}$ from some finite set $\pures_{\play}$;
the players' payoffs are then determined by the action profile $\pure = (\pure_{1},\dotsc,\pure_{\nPlayers}) \in \pures \defeq \prod_{\play} \pures_{\play}$ and an ensemble of payoff functions $\pay_{\play}\from\pures\to\R$, $\play=1,\dotsc,\nPlayers$.
In the \emph{mixed extension} of $\fingame$, a player may pick an action according to a probability distribution $\point_{\play} \in \simplex(\pures_{\play})$:
this is known as a \emph{mixed strategy}, and the corresponding mixed payoff to the $\play$-th player is $\pay_{\play}(\point) = \sum_{\pure\in\pures} \point_{\pure} \pay_{\play}(\pure)$ where $\point_{\pure} = \prod_{\play} \point_{\play\pure_{\play}}$ is the probability of the action profile $\pure = (\pure_{1},\dotsc,\pure_{\nPlayers})$.

Letting $\points_{\play} = \simplex(\pures_{\play})$, the mixed extension of $\fingame$ is defined as the continuous game $\mixgame = \gamefull$.
For posterity, we note here that the ``payoff gradient'' of each player $\play\in\players$ is simply their mixed payoff vector, \ie $\vecfield_{\play}(\point) = \nabla_{\point_{\play}} \pay_{\play}(\point) = (\pay_{\play}(\pure_{\play};\point_{-\play}))_{\pure_{\play}\in\pures_{\play}}$.
%
\endenv
\end{example}

\subsection{Solution concepts}
\label{sec:solutions}
\acused{NE}
\acused{LNE}
\acused{FOS}
\acused{VS}
\acused{NS}
\acused{GVS}
\acused{GNS}
\acused{SVI}
\acused{MVI}

The standard solution concept in game theory is that of a \acli{NE}, \ie an action profile that is resilient to unilateral deviations.
Formally, $\eq\in\points$ is a \acl{NE} of a game $\game \equiv \gamefull$ if
\begin{equation}
\label{eq:NE}
\tag{NE}
\pay_{\play}(\eq)
	\geq \pay_{\play}(\point_{\play};\eq_{-\play})
	\quad
	\text{for all $\point_{\play}\in\points_{\play}$, $\play\in\players$}.
\end{equation}
\aclp{NE} always exist if $\points$ is compact and each $\pay_{\play}$ is individually concave in $\point_{\play}$ \citep{Deb52}.
Otherwise, equilibria may not exist, in which case the following relaxations become relevant:
\begin{enumerate}
\item
\emph{Local \aclp{NE}}, \ie profiles $\eq\in\points$ for which \eqref{eq:NE} holds locally:
\begin{equation}
\label{eq:LNE}
\tag{LNE}
\pay_{\play}(\eq)
	\geq \pay_{\play}(\point_{\play};\eq_{-\play})
	\quad
	\text{for all $\point$ in a neighborhood $\nhd$ of $\eq$ in $\points$}.
\end{equation}

\item
\emph{Critical points}, \ie profiles $\eq\in\points$ that satisfy the \acl{FOS} condition:
\begin{equation}
\label{eq:FOS}
\tag{FOS}
\txs
\left.\frac{d}{dt}\right|_{t=0^{+}} \pay_{\play}(\eq_{\play} + t(\point_{\play} - \eq_{\play});\eq_{-\play})
	\leq 0
	\quad
	\text{for all $\point_{\play}\in\points_{\play}$, $\play\in\players$}.
\end{equation}
\end{enumerate}
Equivalently, \eqref{eq:FOS} can be reformulated as a \acl{SVI} of the form
\begin{equation}
\label{eq:SVI}
\tag{SVI}
\braket{\payv(\eq)}{\point - \eq}
	\leq 0
	\quad 
	\text{for all $\point \in \points$}.
\end{equation}
The solutions of \eqref{eq:SVI} are precisely the fixed points of the ``linearized'' best-response correspondence
\(
\point
	\mapsto \argmax\nolimits_{\pointalt\in\points} \braket{\vecfield(\point)}{\pointalt}
\)
so, by standard fixed point arguments, the set of critical points of $\game$ is always nonempty if $\points$ is compact.

Dually to the above, the \acli{MVI} associated to $\game$ is
\begin{equation}
\label{eq:MVI}
\tag{MVI}
\braket{\payv(\point)}{\point - \eq}
	\leq 0
	\quad 
	\text{for all $\point \in \points$}.
\end{equation}
It is straightforward to verify that the solutions of \eqref{eq:MVI} comprise a convex set of \aclp{NE} of $\game$, so \eqref{eq:MVI} can be seen as an equilibrium refinement criterion for $\game$.
Taking this a step further, a state $\eq\in\points$ is said to be \acli{VS} if
\begin{align}
\label{eq:VS}
\tag{VS}
\braket{\vecfield(\point)}{\point - \eq}
	&< 0
	\quad
	\text{for all $\point\neq\eq$ in a neighborhood $\nhd$ of $\eq$ in $\points$}
\shortintertext{and $\eq$ is called \acli{NS} if the strict inequality ``$<$'' in \eqref{eq:VS} is relaxed to ``$\leq$'', \ie if}
\label{eq:NS}
\tag{NS}
\braket{\vecfield(\point)}{\point - \eq}
	&\leq 0
	\quad
	\text{for all $\point$ in a neighborhood $\nhd$ of $\eq$ in $\points$}.
\end{align}
Finally, we say that $\eq$ is \acli{GVS} [resp.~\acli{GNS}] if \eqref{eq:VS} [resp.~\eqref{eq:NS}] holds
with $\nhd=\points$ (\ie for all $\point\in\points$).

In general, the solution concepts discussed above are related as follows:
\begin{equation}
\label{eq:taxonomy}
\begin{aligned}
\begin{tikzcd}[column sep=small,row sep=small]
\hyperref[eq:VS]{\textup{(GVS)}}
	\arrow[draw=none]{r}[sloped,auto=false]{\implies}
	\arrow[draw=none]{d}[sloped,auto=false]{\implies}
	&\hyperref[eq:NS]{\textup{(GNS)}} \equiv \eqref{eq:MVI}
	\arrow[draw=none]{r}[sloped,auto=false]{\implies}
	\arrow[draw=none]{d}[sloped,auto=false]{\implies}
	&\eqref{eq:NE}
	\arrow[draw=none]{d}[sloped,auto=false]{\implies}
	&
	\\
\eqref{eq:VS}
	\arrow[draw=none]{r}[sloped,auto=false]{\implies}
	&\eqref{eq:NS}
	\arrow[draw=none]{r}[sloped,auto=false]{\implies}
	&\eqref{eq:LNE}
	\arrow[draw=none]{r}[sloped,auto=false]{\implies}
	&\eqref{eq:FOS} \equiv \eqref{eq:SVI}
\end{tikzcd}
\end{aligned}
\end{equation}
Without further assumptions, the implications in \eqref{eq:taxonomy} are all one-way;
in the next section, we discuss a number of cases where some (or all) of these implications become equivalences.

\begin{remark}
In the optimization literature, the direction of \eqref{eq:SVI}\,/\,\eqref{eq:MVI} is reversed because optimization problems are usually stated in terms of cost minimization.
The maximization viewpoint is more common in games, so we will maintain the ``$\leq$'' direction throughout.
\endenv
\end{remark}
\smallskip

\begin{remark}
The notion of variational stability was introduced in \cite{MZ19} and echoes the seminal concept of \emph{evolutionary stability} as introduced by \citet{MP73} in the context of population games.
Informally, ``variational stability'' is to games with a finite number of players and a continuum of actions what ``evolutionary stability'' is to games with a continuum of players and a finite number of actions;
for an in-depth discussion, \cf \cite{MZ19}.
\endenv
\end{remark}

\subsection{Special cases of interest}

We close this section with a discussion of some special cases and examples of the above definitions that will play a major role in the sequel.

\itempar{Monotone games}
A game is \emph{monotone} if it satisfies the monotonicity condition
\begin{equation}
\label{eq:monotone}
\tag{Mon}
\braket{\payv(\pointalt) - \payv(\point)}{\pointalt - \point}
	\leq 0
	\quad
	\text{for all $\point,\pointalt\in\points$}.
\end{equation}
The strict version of this requirement (\ie that equality holds if and only if $\point = \pointalt$) is sometimes referred to as \acdef{DSC}, a terminology due to \citet{Ros65}.
In monotone games, the solutions of \eqref{eq:MVI} and \eqref{eq:SVI} coincide, leading to the string of equivalences
\begin{equation}
\eqref{eq:MVI}
	\iff \eqref{eq:NE}
	\iff \eqref{eq:LNE}
	\iff \eqref{eq:FOS}
	\equiv \eqref{eq:SVI}
\end{equation}
By comparison, if a game is strictly monotone, every implication in \eqref{eq:taxonomy} becomes an equivalence, so the game admits a unique, \acl{GVS} \acl{NE}.
\Cref{ex:Cournot,ex:minmax} are both strictly monotone (assuming $\minmax$ is strictly convex-concave in \cref{ex:minmax});
other examples include
socially concave games \citep{EMN09},
Cournot oligopolies \citep{MS96},
Kelly auctions \citep{KMT98,Tul80},
congestion control \citep{EMN09},
and many other classes of problems.

\itempar{Potential games}
First formalized by \citet{MS96}, potential games admit a \emph{potential function} $\pot\from\points\to\R$ such that
\begin{equation}
\label{eq:pot}
\tag{Pot}
\pay_{\play}(\point_{\play};\point_{-\play}) - \pay_{\play}(\pointalt_{\play};\point_{-\play})
	= \pot(\point_{\play};\point_{-\play}) - \pot(\pointalt_{\play};\point_{-\play})
	\quad
	\text{for all $\point,\pointalt \in \points$ and all $\play\in\players$}.
\end{equation}
If $\game$ is a potential game, we have $\vecfield(\point) = \nabla\pot(\point)$ so
\begin{enumerate*}
[\itshape a\upshape)]
\item
any local maximum of $\pot$ is a \acl{LNE} of $\game$;
and
\item
any \emph{strict} local maximum of $\pot$ is \acl{VS}.
\end{enumerate*}
The Cournot oligopoly of \cref{ex:Cournot} is a textbook example of a potential game;
other examples include
finite congestion games \citep{Ros73},
power allocation in wireless networks \citep{SFPP10},
etc.

\itempar{\Acl{SOS}}
Our next example concerns critical points that satisfy a condition similar to \aclp{SOSC} in optimization, namely
\begin{equation}
\label{eq:SOS}
\tag{SOS}
\tvec^{\top} \Jac_{\vecfield}(\eq) \tvec
	< 0
	\quad
	\text{for all nonzero tangent vectors $\tvec$ to $\points$ at $\eq$}
\end{equation}
where $\Jac_{\vecfield}(\eq)$ denotes the Jacobian of $\vecfield$ at $\eq$.
In the context of saddle-point problems and continuous games, this condition has been studied extensively in the machine learning and control literatures, \cf \cite{RBS16,HIMM19,MHKC20,GLMV22,Ros65} and references therein.
Importantly, as we note below, \eqref{eq:SOS} is a special case of \eqref{eq:VS}.

\begin{proposition}[{\citeauthor{HIMM19}, \citeyear{HIMM19}, Lemma~A.4}]
\label{prop:regular}
Let $\eq$ be a critical point of $\game$ satisfying \eqref{eq:SOS}.
Then $\eq$ is variationally stable.
\end{proposition}


\itempar{Finite games}
As a last example, let $\game = \simplex(\fingame)$ be the mixed extension of a finite game $\fingame \equiv \fingamefull$.
Since each player's payoff function $\pay_{\play}(\point_{\play};\point_{-\play})$ is linear in $\point_{\play}$, we readily get
\begin{equation}
\eqref{eq:NE}
	\iff \eqref{eq:LNE}
	\iff \eqref{eq:FOS}
	\equiv \eqref{eq:SVI}
\end{equation}
In addition, we have the following characterization of stable states in finite games:

\begin{proposition}[{\citeauthor{MZ19}, \citeyear{MZ19}, Prop.~5.2}]
\label{prop:strictVS}
A mixed strategy profile $\eq$ is variationally stable if and only if it is a strict \acl{NE} of $\fingame$, \ie if and only if \eqref{eq:NE} holds as a strict inequality for all $\point\neq\eq$.
\end{proposition}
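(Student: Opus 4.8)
The plan is to prove the two implications separately, using throughout two elementary facts about the mixed extension $\game = \simplex(\fingame)$: first, $\vecfield_{\play}(\point)$ depends only on the opponents' profile $\point_{-\play}$, and it pairs with any mixed strategy $\pointalt_{\play}$ as $\braket{\vecfield_{\play}(\point)}{\pointalt_{\play}} = \pay_{\play}(\pointalt_{\play};\point_{-\play})$ (multilinearity); second, each $\pay_{\play}$ is continuous on $\points$. I would also record at the outset that a strict equilibrium of $\fingame$ is necessarily \emph{pure}: if $\supp(\eq_{\play})$ contained two distinct pure strategies, the better of the two would be an alternative best reply to $\eq_{-\play}$ — and $\eq_\play$, being mixed, would differ from the corresponding vertex — contradicting strictness. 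So we may write $\eq_{\play} = e_{\pure_{\play}^{\ast}}$ for pure strategies $\pure_{\play}^{\ast}\in\pures_{\play}$, and strictness reads $\pay_{\play}(\pure^{\ast}) > \pay_{\play}(\pure_{\play};\pure_{-\play}^{\ast})$ for every $\play\in\players$ and every $\pure_{\play}\neq\pure_{\play}^{\ast}$.

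\textbf{Sufficiency (strict \ac{NE} $\Rightarrow$ \ac{VS}).} Set $\delta = \min_{\play}\min_{\pure_{\play}\neq\pure_{\play}^{\ast}}[\pay_{\play}(\pure^{\ast}) - \pay_{\play}(\pure_{\play};\pure_{-\play}^{\ast})] > 0$. By continuity of the payoffs there is a neighborhood $\nhd$ of $\eq$ in $\points$ on which $\pay_{\play}(\pure_{\play}^{\ast};\point_{-\play}) - \pay_{\play}(\pure_{\play};\point_{-\play}) \geq \delta/2$ for all $\play$ and all $\pure_{\play}\neq\pure_{\play}^{\ast}$. Expanding $\braket{\vecfield_{\play}(\point)}{\point_{\play} - \eq_{\play}} = \pay_{\play}(\point_{\play};\point_{-\play}) - \pay_{\play}(\pure_{\play}^{\ast};\point_{-\play})$ as a convex combination over $\pure_{\play}$ and discarding the vanishing $\pure_{\play}=\pure_{\play}^{\ast}$ term gives $\braket{\vecfield_{\play}(\point)}{\point_{\play} - \eq_{\play}} \leq -\tfrac{\delta}{2}(1 - \point_{\play\pure_{\play}^{\ast}})$; summing over $\play\in\players$ yields $\braket{\vecfield(\point)}{\point - \eq} \leq -\tfrac{\delta}{2}\sum_{\play}(1 - \point_{\play\pure_{\play}^{\ast}}) < 0$ for every $\point\in\nhd$ with $\point\neq\eq$, which is precisely \eqref{eq:VS}.

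\textbf{Necessity (\ac{VS} $\Rightarrow$ strict \ac{NE}).} By the taxonomy \eqref{eq:taxonomy} together with the equivalence \ac{LNE}$\iff$\ac{NE} for finite games recorded above, $\eq$ is at least a \acl{NE}. Suppose, for contradiction, it is not strict: then some player $\play$ has a deviation $\pointalt_{\play}\neq\eq_{\play}$ with $\pay_{\play}(\pointalt_{\play};\eq_{-\play}) = \pay_{\play}(\eq)$ (equality, since the reverse inequality is the equilibrium condition). Because $\eq$ is a \acl{NE}, the set $B_{\play}$ of pure best replies to $\eq_{-\play}$ contains both $\supp(\eq_{\play})$ and $\supp(\pointalt_{\play})$; as these support two distinct points of $\simplex(\pures_{\play})$, necessarily $\abs{B_{\play}}\geq 2$, so we can choose a pure $\pure_{\play}^{\circ}\in B_{\play}$ with $e_{\pure_{\play}^{\circ}}\neq\eq_{\play}$. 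Now consider the segment $\point(t) = (\eq_{\play} + t(e_{\pure_{\play}^{\circ}} - \eq_{\play});\,\eq_{-\play})$ for $t\in(0,1]$: it stays in $\points$, never equals $\eq$, and converges to $\eq$; but since $\vecfield_{\play}$ sees only $\point_{-\play}(t)=\eq_{-\play}$, we get $\braket{\vecfield(\point(t))}{\point(t)-\eq} = t\,[\pay_{\play}(\pure_{\play}^{\circ};\eq_{-\play}) - \pay_{\play}(\eq)] = 0$, contradicting \eqref{eq:VS}. Hence $\eq$ is strict.

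\textbf{Main obstacle.} The computations are routine; the delicate point is in the necessity direction, namely distilling from a generic (possibly mixed) profitable-or-indifferent deviation a \emph{single} vertex $e_{\pure_{\play}^{\circ}}\neq\eq_{\play}$ along which the stability pairing fails to be negative. This is where one leans on $\eq$ already being a \acl{NE} — so that every pure strategy in $\supp(\eq_{\play})\cup\supp(\pointalt_{\play})$ achieves exactly the equilibrium payoff — and on the fact that the face of $\simplex(\pures_{\play})$ carried by $B_{\play}$ cannot be a single vertex once it contains two distinct strategies. The sufficiency direction's only subtlety, the uniform continuity estimate over $\nhd$, is immediate from multilinearity of the $\pay_{\play}$.
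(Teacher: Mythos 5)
Your proof is correct. Both directions check out: the sufficiency estimate $\braket{\vecfield_{\play}(\point)}{\point_{\play}-\eq_{\play}} \leq -\tfrac{\delta}{2}(1-\point_{\play\pure_{\play}^{\ast}})$ is valid by multilinearity and a finite-minimum/continuity argument, and in the necessity direction the reduction from a non-strict (possibly mixed) alternative best reply to a single vertex $\pure_{\play}^{\circ}$ with $e_{\pure_{\play}^{\circ}}\neq\eq_{\play}$, followed by the segment $\point(t)$ along which the stability pairing vanishes identically, genuinely contradicts \eqref{eq:VS}. The preliminary observation that strict equilibria are pure is also sound (the phrase ``the better of the two'' should be read as ``a best pure strategy in the support,'' whose payoff against $\eq_{-\play}$ is at least the mixture's, which already kills strictness).

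Note, however, that the paper does not prove this proposition at all: it invokes the folk result of evolutionary game theory that, in (multi-population) finite games, the evolutionarily stable states are exactly the strict equilibria, combined with the Taylor\textendash Hofbauer\textendash Sigmund equivalence between evolutionary stability and \eqref{eq:VS}, and omits the argument. Your route is therefore different in character: a direct, self-contained verification from the definition of the mixed payoff field, using only multilinearity and continuity. What the paper's citation buys is brevity and a conceptual link to the evolutionary-stability literature; what your argument buys is independence from that literature and an explicit quantitative neighborhood (the $\delta/2$ bound) in the sufficiency direction, which is the kind of estimate one could reuse elsewhere. One cosmetic remark: in the necessity step you appeal to the taxonomy \eqref{eq:taxonomy} to get that $\eq$ is a Nash equilibrium; for mixed extensions this is immediate anyway (test \eqref{eq:VS} on unilateral segments $(\eq_{\play}+t(\point_{\play}-\eq_{\play});\eq_{-\play})$ and use linearity in $t$), so you could make the proof entirely self-contained at no extra cost.
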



We will use all this freely in the sequel.

\section{The learning framework}
\label{sec:framework}

We now proceed to detail our online learning framework, beginning with the general model in \cref{sec:method} and continuing with a range of learning algorithms that can be seen as special cases thereof in \cref{sec:algorithms}.
The reader interested only in the general theory can skip \cref{sec:algorithms}.
\setcounter{remark}{0}

\subsection{\Acl{method} processes}
\label{sec:method}

Our basic learning framework will hinge on the \acli{method} template
\begin{equation}
\label{eq:method}
\tag{RRM}
\begin{aligned}
\next[\dstate]
	= \curr[\dstate] + \curr[\step] \curr[\signal]
	\qquad
\next
	= \mirror(\next[\dstate]),
\end{aligned}
\end{equation}
where:
\begin{enumerate}
\item
$\curr = (\state_{\play,\run})_{\play\in\players} \in \points$ denotes the players' action profile at each stage $\run=\running$
\item
$\curr[\signal] = (\signal_{\play,\run})_{\play\in\players} \in \dspace$ is a sequence of individual ``gradient-like'' signals.
\item
$\curr[\dstate] = (\dstate_{\play,\run})_{\play\in\players} \in \dspace$ is an auxiliary state variable aggregating individual gradient steps.
\item
$\curr[\step] > 0$ is a step-size sequence,
for which we will assume throughout that $\sum_{\run} \curr[\step] = \infty$
(typically the method is run with $\curr[\step] \propto 1/\run^{\stepexp}$ for some $\stepexp\geq0$).
\item
$\mirror\from \dspace \to \points$ is a ``generalized projection'' map that mirrors gradient steps in $\dspace$ to action updates in $\points$;
we will refer to $\mirror$ throughout as the players' \emph{mirror map}.
\end{enumerate}
As far as terminology is concerned,
the term ``\acli{RM}'' refers to the seminal stochastic approximation method of \citet{RM51},
while the adjective ``regularized'' alludes to the \acdef{FTRL} family of algorithms of \citet{SSS06} \textendash\ which, in turn, is intimately related to the \acdef{MD} framework of \citet{NY83}.
To streamline our presentation, we detail each of these elements below and defer a list of examples to \cref{sec:algorithms}.


\itempar{The sequence of gradient signals}

To keep track of the sequence of events in \eqref{eq:method}, we will view $\curr$ as a stochastic process on some complete probability space $\probspace$, and we will write $\curr[\filter] \defeq \filter(\init,\dotsc,\curr) \subseteq \filter$ for the history of play up to stage $\run$ (inclusive).
Since we tacitly assume that $\curr[\signal]$ is generated \emph{after} each player has selected an action at round $\run$ but \emph{before} the $(\run+1)$-th update has been triggered, we also posit that $\curr[\signal]$ is $\next[\filter]$-measurable but not necessarily $\curr[\filter]$-measurable.
In this way, we may decompose $\curr[\signal]$ as
\begin{equation}
\label{eq:signal}
\curr[\signal]
	= \vecfield(\curr) + \curr[\noise] + \curr[\bias]
\end{equation}
where
\begin{equation}
\label{eq:bias-noise}
\curr[\noise]
	= \curr[\signal] - \exof{\curr[\signal] \given \curr[\filter]}
	\quad
	\text{and}
	\quad
\curr[\bias]
	= \exof{\curr[\signal] \given \curr[\filter]} - \vecfield(\curr).
\end{equation}

Since $\exof{\curr[\noise] \given \curr[\filter]} = 0$ by construction, $\curr[\noise]$ can be intepreted as a random, zero-mean error relative to $\vecfield(\curr)$;
by contrast, $\curr[\bias]$ is $\curr[\filter]$-measurable, so it captures any systematic \textendash\ and possibly \emph{non-random} \textendash\ offset of $\curr[\signal]$ relative to $\vecfield(\curr)$.
We will quantify all this by assuming that $\curr[\bias]$, $\curr[\noise]$ and $\curr[\signal]$ are bounded for some $\qexp\geq2$ as
\begin{equation}
\label{eq:errorbounds}
\exof{\dnorm{\curr[\bias]} \given \curr[\filter]}
	\leq \curr[\bbound]
	\qquad
\exof{\dnorm{\curr[\noise]}^{\qexp} \given \curr[\filter]}
	\leq \curr[\sdev]^{\qexp}
	\qquad
	\text{and}
	\qquad
\exof{\dnorm{\curr[\signal]}^{\qexp} \given \curr[\filter]}
	\leq \curr[\sbound]^{\qexp}
\end{equation}
where the sequences $\curr[\bbound]$, $\curr[\sdev]$ and $\curr[\sbound]$, $\run=\running$, are to be construed as deterministic upper bounds on the bias, fluctuations, and magnitude of $\curr[\signal]$ respectively (with the case $\qexp=\infty$ taken to mean that the various quantites are bounded \acs{wp1}).
Accordingly, depending on these bounds, a gradient signal with $\curr[\bbound]=0$ will be called \emph{unbiased}, and an unbiased signal with $\curr[\sdev] = 0$  will be called \emph{perfect}.

\begin{remark}
We should stress here that $\curr[\signal]$ \emph{should not} be interpreted narrowly as the output of a black-box oracle for $\vecfield(\curr)$, but as a ``model-agnostic'' surrogate thereof.
In particular, the noise term $\curr[\noise]$ can model raw observational noise, but also inner randomizations of the algorithm;
analogously, the bias term $\curr[\bias]$ is intended to capture situations where $\curr[\signal]$ results from actions other than $\curr$, the inclusion of corrective terms in a learning algorithm, etc.
These modeling aspects are crucial to include in our analysis optimistic and extra-gradient methods;
we explain this issue in detail in \cref{sec:algorithms}.
\endenv
\end{remark}

\begin{remark}
By \cref{asm:game} and the inequality $\parens*{\sum_{i=1}^{m} a_{i}}^{\qexp} \leq m^{\qexp-1} \sum_{i=1}^{m} a_{i}^{\qexp}$, the decomposition \eqref{eq:signal} of $\curr[\signal]$ shows that we can always pick $\curr[\sbound]^{\qexp} = 3^{\qexp-1} \parens*{\vbound^{\qexp} + \curr[\bbound]^{\qexp} + \curr[\sdev]^{\qexp}}$ in \eqref{eq:errorbounds}.
This makes the last part of \eqref{eq:errorbounds} redundant, but we will maintain the explicit bound $\curr[\sbound]$ for $\curr[\signal]$ to simplify the presentation.
\endenv
\end{remark}

\itempar{The players' mirror map}

The second defining element of \eqref{eq:method} is the ``mirror map'' $\mirror_{\play}\from\dspace_{\play}\to\points_{\play}$ of each player
\textendash\ or, in aggregate form, the product map $\mirror = (\mirror_{\play})_{\play\in\players} \from \dspace \to \points$.
This is defined by means of a ``\emph{regularizer}'' on $\points$ as follows:%
\footnote{The authors thank S.~Sorin for proposing this definition.}

\begin{definition}
\label{def:hreg}
We say that $\hreg_{\play}\from\pspace_{\play}\to\R\cup\{\infty\}$ is a \emph{regularizer} on $\points_{\play}$ if:
\begin{enumerate}
\item
$\hreg_{\play}$ is \emph{supported} on $\points_{\play}$, \ie $\dom\hreg_{\play} = \setdef{\point_{\play}\in\pspace_{\play}}{\hreg_{\play}(\point_{\play}) < \infty} = \points_{\play}$.

\item
$\hreg_{\play}$ is continuous and \emph{strongly convex} on $\points_{\play}$, \ie there exists a constant $\hstr_{\play}>0$ such that
\begin{equation}
\label{eq:hstr}
\hreg_{\play}(\coef \point_{\play} + (1-\coef) \pointalt_{\play})
	\leq \coef\hreg_{\play}(\point_{\play}) + (1-\coef) \hreg_{\play}(\pointalt_{\play})
		- \tfrac{1}{2} \hstr_{\play} \coef(1-\coef) \norm{\pointalt_{\play} - \point_{\play}}^{2}
\end{equation}
for all $\point_{\play},\pointalt_{\play} \in \points_{\play}$ and all $\coef\in[0,1]$.
\end{enumerate}
The \emph{mirror map} associated to $\hreg_{\play}$ is defined
for all $\dpoint_{\play}\in\dspace_{\play}$ as
\begin{equation}
\label{eq:mirror}
\mirror_{\play}(\dpoint_{\play})
	= \argmax\nolimits_{\point_{\play}\in\points_{\play}}
		\{ \braket{\dpoint_{\play}}{\point_{\play}} - \hreg_{\play}(\point_{\play}) \}
\end{equation}
and the image $\proxdomi = \im\mirror_{\play}$ of $\mirror_{\play}
$ is called the \emph{prox-domain} of $\hreg_{\play}$.
Finally, we will also say that $\hreg_{\play}$ is \emph{steep} when $\dom\subd\hreg_{\play} = \relint\points_{\play}$.
\end{definition}

For concision, we will write $\hreg(\point) = \sum_{\play}\hreg_{\play}(\point_{\play})$ for the players' aggregate regularizer and $\mirror = (\mirror_{\play})_{\play\in\players}$ for the induced mirror map.
We provide three examples of this construction below:

\begin{example}[Euclidean projection]
\label{ex:mirror-Eucl}
Consider the quadratic regularizer $\hreg(\point) = \twonorm{\point}^{2}/2$, $\point\in\points$.
Then the induced mirror map is the standard Euclidean projector
\begin{equation}
\label{eq:Eucl}
\mirror(\dpoint)
	= \Eucl_{\points}(\dpoint)
	\equiv \argmin_{\point\in\points} \twonorm{\dpoint-\point}.
\end{equation}
As a special case, in unconstrained settings (\ie when $\points = \pspace$), we have $\mirror(\dpoint) = \dpoint$.
\end{example}

\begin{example}[Entropic regularization on the simplex]
\label{ex:mirror-logit}
Let $\pures_{\play}$, $\play=1,\dotsc,\nPlayers$, be an ensemble of pure strategies,
set $\points_{\play} = \simplex(\pures_{\play})$,
and
let $\hreg_{\play}(\point_{\play}) = \sum_{\pure_{\play}\in\pures_{\play}} \point_{\play\pure_{\play}}\log\point_{\play\pure_{\play}}$ be the (negative) Gibbs\textendash Shannon entropy on $\points_{\play}$.
By standard arguments, the resulting mirror map of each player $\play\in\players$ is the logit choice map
\begin{equation}
\label{eq:logit}
\mirror_{\play}(\dpoint_{\play})
	= \logit_{\play}(\dpoint_{\play})
	\equiv \frac
		{(\exp(\dpoint_{\play\pure_{\play}}))_{\pure_{\play}\in\pures_{\play}}}
		{\sum_{\pure_{\play}\in\pures_{\play}} \exp(\dpoint_{\play\pure_{\play}})}
\end{equation}
This choice map plays a central role in finite games;
we will revisit it several times in \cref{sec:algorithms}.
\end{example}

\begin{example}[Regularization on the orthant]
\label{ex:mirror-orthant}
Let $\points_{\play} = [0,\infty)$ and set $\hreg_{\play}(\point_{\play}) = \point_{\play}\log\point_{\play} - \point_{\play}$ for all $\point_{\play}\in\points_{\play}$, $\play\in\players$.
By a straightforward calculation, the induced mirror map is $\mirror_{\play}(\dpoint_{\play}) = \exp(\dpoint_{\play})$.
As we discuss in \cref{sec:algorithms}, this provides the relevant setup for games with half-space constraints.%
\footnote{Strictly speaking, the regularizer $\point \log\point - \point$ is not strongly convex over $\R_{+}$ but it \emph{is} strongly convex over any bounded subset of $\R_{+}$ \textendash\ and it can be made strongly convex over all of $\R_{+}$ by adding a small quadratic penalty of the form $\eps \point^{2} / 2$.
This issue does not change the essence of our results, so we sidestep the details.}
\end{example}

\subsection{Specific algorithms}
\label{sec:algorithms}

We now proceed to describe a representative range of learning algorithms that can be incorporated as specific instances of the general framework \eqref{eq:method}.
Depending on the information available to the players, we classify the algorithms under study as \emph{oracle-based} or \emph{payoff-based}.

\itempar{Oracle-based methods}

In the first batch of methods under consideration, the players are assumed to have access to a \acdef{SFO}, that is, a ``black-box'' feedback mechanism that returns an estimate of their individual payoff gradients at the chosen action profile.
Formally, when queried at $\point\in\points$, an \ac{SFO} outputs a random vector of the form
\begin{align}
\label{eq:SFO}
\tag{SFO}
\orcl(\point;\seed)
	&= \vecfield(\point)
	+ \err(\point;\seed),
\end{align}
where
$\seed$ is a random variable taking values in some measurable space $\seeds$
and
$\err(\point;\seed)$ is an umbrella error term capturing all sources of uncertainty in the model.
In practice, \eqref{eq:SFO} is queried repeatedly at a sequence of action profiles $\curr\in\points$, $\run=\running$, possibly with a different random seed $\curr[\seed]$ each time.%
\footnote{In some cases, the index set may be enlarged to include all positive half-integers ($\run = \halfrunning$).}
For concreteness, we will assume that the noise in \eqref{eq:SFO} is zero-mean and bounded in $L^{\qexp}$ for some $\qexp\geq2$, \ie
\begin{equation}
\label{eq:oracle}
\exof{\err(\point;\curr[\seed]) \given \curr[\filter]}
	= 0
	\quad
	\text{and}
	\quad
\exof{\dnorm{\err(\point;\curr[\seed])}^{\qexp} \given \curr[\filter]}
	\leq \sdev^{\qexp}
\end{equation}
for some $\sdev\geq0$ and all $\point\in\points$ (with $\qexp=\infty$ taken to mean that $\err(\point;\seed)$ is bounded \acs{wp1}).

We are now in a position to introduce the array of \emph{oracle-based} methods under study;
to lighten notation, we present some of these policies in an unconstrained setting.

\begin{algo}[\Acl{SGA}]
\label{alg:SGA}

Perhaps the most basic iterative policy for multi-agent online learning is the standard (individual) gradient ascent method
\begin{align}
\label{eq:SGA}
\tag{SGA}
\next
	= \curr + \curr[\step] \orcl(\curr;\curr[\seed])
\end{align}
with $\curr[\seed]$ drawn \acs{iid} from $\seeds$.
From a loss minimization viewpoint, \eqref{eq:SGA} is a multi-agent analogue of the standard \acl{SGD} algorithm;
in min-max games, \eqref{eq:SGA} is sometimes referred to as the Arrow\textendash Hurwicz method \citep{AHU58}.
Clearly, \eqref{eq:SGA} is immediately recovered from \eqref{eq:method} if the latter is run with the sequence of gradient signals $\curr[\signal] \gets \orcl(\curr;\curr[\seed])$ and the trivial mirror map $\mirror(\dpoint) = \dpoint$.
\endenv
\end{algo}

\begin{algo}[\Acl{EG}]
\label{alg:EG}
Going a step further from \eqref{eq:SGA}, the (stochastic) \acdef{EG} algorithm of \citet{Kor76} is based on the following principle:
starting at some ``base'' state $\curr$, the players first take a gradient step to an interim, ``leading'' state $\lead$;
subsequently, to anticipate their payoff landscape, they update the base state $\curr$ with gradient information from $\lead$ instead of $\curr$,
and the process repeats.
Formally, this leads to the policy
\begin{equation}
\label{eq:EG}
\tag{\acs{EG}}
\begin{aligned}
\lead
	&= \curr + \curr[\step] \orcl(\curr;\curr[\seed])
	\\
\next
	&= \curr + \curr[\step] \orcl(\lead;\lead[\seed])
\end{aligned}
\end{equation}
with $\curr[\seed],\lead[\seed]$ drawn \acs{iid} from $\seeds$.
Accordingly, \eqref{eq:EG} is readily recovered from \eqref{eq:method} by taking $\curr[\signal] \gets \orcl(\lead;\lead[\seed])$.
\endenv
\end{algo}

\begin{algo}[\Acl{OG}]
\label{alg:OG}
A computational drawback of \eqref{eq:EG} is that it requires two oracle queries per update
\textendash\ and hence, more overhead per iteration.
One way to overcome this hurdle is to reuse past gradient information in the hope that it provides a good enough approximation of the present;
this leads to the \acli{OG} policy
\begin{equation}
\label{eq:OG}
\tag{\acs{OG}}
\begin{aligned}
\lead
	&= \curr + \curr[\step] \orcl(\beforelead;\prev[\seed])
	\\
\next
	&= \curr + \curr[\step] \orcl(\lead;\curr[\seed])
\end{aligned}
\end{equation}
Similarly to \eqref{eq:EG}, \eqref{eq:OG} is recovered from \eqref{eq:method} by setting $\curr[\signal] \gets \orcl(\lead;\curr[\seed])$.
This ``gradient reuse'' idea goes back at least to \citet{Pop80}, and it has resurfaced several times in the literature since then, \cf \cite{RS13-NIPS,DISZ18,GBVV+19,HIMM19} and references therein.
To simplify our presentation, we will assume in the sequel that \eqref{eq:OG} is run with an \ac{SFO} satisfying \eqref{eq:oracle} with $\qexp = \infty$.
\endenv
\end{algo}

The next method concerns learning in mixed extensions of finite games.

\begin{algo}[\Acl{EW}]
\label{alg:EW}
Let $\game = \mixgame$ be the mixed extension of a finite game $\fingamefull$ as per \cref{ex:mixed}.
In this setting, the players' learning process typically unfolds as follows:
at each stage $\run=\running$, every player selects a mixed strategy $\state_{\play,\run} \in \simplex(\pures_{\play})$ and draws a pure strategy $\pure_{\play,\run} \in \pures_{\play}$ according to $\state_{\play,\run}$.
Then, depending on the amount of information available to the players, we have the following oracle models:
\begin{subequations}
\label{eq:SFO-finite}
\begin{enumerate}
\item
\emph{Full information feedback:}
in this case, players observe their \emph{mixed} payoff vectors, \ie
\begin{align}
\label{eq:SFO-finite-full}
\orcl_{\play}(\curr;\curr[\pure])
	&= \payv_{\play}(\state_{\play,\run};\state_{-\play,\run}).
\shortintertext{%
\item
\emph{Realization-based feedback:}
here, players instead observe their \emph{pure} payoff vectors, \ie
} 
\label{eq:SFO-finite-realized}
\orcl_{\play}(\curr;\curr[\pure])
	&= \payv_{\play}(\pure_{\play,\run};\pure_{-\play,\run}).
\end{align}
\end{enumerate}
\end{subequations}
Both models can be seen as \acp{SFO} with seed $\curr[\pure]$, \ie the (pure) action profile chosen by the players at stage $\run$;
the oracle \eqref{eq:SFO-finite-full} is deterministic, while the oracle \eqref{eq:SFO-finite-realized} is stochastic and satisfies \eqref{eq:oracle} with $\qexp=\infty$.

In this context, one of the most widely used learning methods is the so-called \acli{EW} algorithm \textendash\ or \acs{EW} \textendash\ which unfolds iteratively as
\begin{equation}
\label{eq:EW}
\tag{\acs{EW}}
\begin{aligned}
\dstate_{\play,\run+1}
	&= \dstate_{\play,\run} + \curr[\step] \orcl_{\play}(\curr;\curr[\pure])
	\\
\state_{\play,\run+1}
	&= \logit_{\play}(\dstate_{\play,\run+1})
\end{aligned}
\end{equation}
with $\logit_{\play}$ denoting the logit choice map of \eqref{eq:logit} and $\orcl_{\play}$ given by \eqref{eq:SFO-finite-full} or \eqref{eq:SFO-finite-realized} depending on the information available to the players.
In both cases, \eqref{eq:EW} is recovered immediately from \eqref{eq:method} by letting $\curr[\signal] \gets \orcl(\curr;\curr[\pure])$ and $\mirror_{\play} = \logit_{\play}$.
For an overview of the method's history and its applications, see \cite{CBL06,LS20} and references therein.
\endenv
\end{algo}

\itempar{Payoff-based methods}

Moving forward, it is important to recall that \crefrange{alg:SGA}{alg:EW} all assume that players have access to a black-box oracle mechanism, but do not specify how this could be achieved in practice.
Albeit commonplace, this assumption is not realistic in many applications where players may only be able to observe their realized payoffs and have no information about the strategies of other players or actions they did not play.
To bridge this disconnect, we describe below a range of \emph{payoff-based} policies where players estimate their individual payoff gradients indirectly, from their realized, ``in-game'' payoffs.

\begin{algo}[\Acl{SPSA}]
\label{alg:SPSA}
\acused{SPSA}
A straightforward way of reconstructing gradients from zeroth-order feedback is via the \acli{SPSA} framework of \citet{Spa92}.
In the unconstrained case ($\points=\pspace$), the relevant update step is:
\begin{equation}
\label{eq:SPSA}
\tag{SPSA}
\begin{aligned}
\query_{\play,\run}
	&= \state_{\play,\run} + \curr[\mix] \unitvar_{\play,\run},
	\\
\state_{\play,\run+1}
	&= \state_{\play,\run}
		+ \curr[\step] (\pay_{\play}(\curr[\query]) / \curr[\mix]) \,  \unitvar_{\play,\run}.
\end{aligned}
\end{equation}
In \eqref{eq:SPSA}, each player's ``query state'' $\query_{\play,\run}$, $\play=1,\dotsc,\nPlayers$, is a perturbation of the ``base state'' $\state_{\play,\run}$ by a step of magnitude $\curr[\mix]>0$ along a random direction $\unitvar_{\play,\run}$ drawn from the ensemble of signed basis vectors
$\bvecs_{\play} \defeq \{(\pm\bvec_{1},\dotsc,\pm\bvec_{\vdim_{\play}})\}$.
In this manner, \eqref{eq:SPSA} can be seen as a special case of \eqref{eq:method} with $\signal_{\play,\run} \gets (\pay_{\play}(\curr[\query]) / \curr[\mix]) \,  \unitvar_{\play,\run}$ for all $\play\in\players$.%
\footnote{This formulation of \eqref{eq:SPSA} is tailored to unconstrained problems.
In this case, to ensure that the resulting gradient estimator remains bounded, it is customary to include an indicator of the form $\oneof{\norm{\curr[\query]} \leq \curr[R]}$ for some suitably chosen sequence $\curr[R]\to\infty$ \cite{Spa92}.
This would lead to the same analysis but at the cost of heavier notation so, instead, we will assume that the players' payoff functions are bounded when discussing \eqref{eq:SPSA}.
For a detailed discussion of how to adapt \eqref{eq:SPSA} in the presence of constraints, we refer the reader to \citet{BLM18} who show that the relevant entries of \cref{tab:algorithms} apply verbatim when $\points$ is compact.}
\endenv
\end{algo}


\begin{algo}[\Acl{DGA}]
\label{alg:DGA}
\acused{DGA}
An alternative approach to \eqref{eq:SPSA} is the two-point, ``explore-then-update'' approach of \citet{BBF20} who focused on games with $\points_{\play} = [0,\infty)$ for all $\play\in\players$ and introduced the \acli{DGA} policy
\begin{equation}
\label{eq:DGA}
\tag{DGA}
\begin{aligned}
\state_{\play,\run+1/2}
	&= \state_{\play,\run} + (1/\run) \unitvar_{\play,\run}
	\\
\state_{\play,\run+1}
	&= \state_{\play,\run} \bracks{1 + (\pay_{\play}(\lead) - \pay_{\play}(\curr)) \unitvar_{\play,\run}}
\end{aligned}
\end{equation}
In the above, the ``exploration direction'' $\unitvar_{\play,\run}$ is sampled uniformly at random from $\{\pm 1\}$ at each $\run$.
In other words, \eqref{eq:DGA} is a two-stage process where players first ``explore'' their individual payoff functions at a nearby state, and then use this information to estimate their individual payoff gradients and update their base state.

To include \eqref{eq:DGA} in the framework of \eqref{eq:method},
take $\mirror_{\play}(\dpoint_{\play}) = \exp(\dpoint_{\play})$ as per \cref{ex:mirror-orthant}.
Then, letting $\curr[\dstate] = \log\curr$, we get
\begin{equation}
\next[\dstate]
	= \curr[\dstate] + \log\parens{1 + (\pay_{\play}(\lead) - \pay_{\play}(\curr)) \unitvar_{\play,\run}}.
\end{equation}
We may therefore view \eqref{eq:DGA} as an instance of \eqref{eq:method} with $\curr[\step] = 1/\run$ and gradient signals given by $\signal_{\play,\run} \gets \run\cdot\log\parens{1 + (\pay_{\play}(\lead) - \pay_{\play}(\curr)) \unitvar_{\play,\run}}$.
\endenv
\end{algo}

\begin{algo}[The \acs{EXP3} algorithm]
\label{alg:EXP3}
In our final example, we return to finite games, and we focus on the ``bandit'' case where players only observe the payoffs of the pure strategies that they played.
In this setting, it is common to employ the \acli{IWE}
\begin{equation}
\label{eq:IWE}
\tag{IWE}
\orcl_{\play\pure_{\play}}(\curr[\query];\curr[\purequery])
	= \frac
		{\oneof{\purequery_{\play,\run} = \pure_{\play}}}
		{\query_{\play\pure_{\play},\run}}
			\pay_{\play}(\purequery_{\play,\run};\purequery_{-\play,\run})
	\qquad
	\text{for all $\pure_{\play}\in\pures_{\play}$, $\play\in\players$},
\end{equation}
where each player $\play\in\players$ draws an action $\purequery_{\play,\run}$ from $\pures_{\play}$ according to a mixed strategy $\query_{\play,\run}  \in \simplex(\pures_{\play})$.
Then, plugging \eqref{eq:IWE} into \eqref{eq:EW}, we obtain the method known as \acdef{EXP3}, viz.
\begin{align}
\label{eq:EXP3}
\tag{EXP3}
\begin{split}
\dstate_{\play,\run+1}
	&=\dstate_{\play,\run} + \curr[\step] \orcl_{\play}(\curr[\query];\curr[\purequery]),
	\\
\state_{\play,\run+1}
	&= \logit_{\play}(\dstate_{\play,\run+1}),
\end{split}
\shortintertext{where the sampling strategy of the $\play$-th player at stage $\run$ is given by}
\label{eq:explore}
\query_{\play,\run}
	&= (1-\curr[\mix]) \state_{\play,\run} + \curr[\mix] \unif(\pures_{\play}).
\end{align}
In the above, $\curr[\mix] \geq 0$ is an ``explicit exploration'' parameter that determines the mixing between $\state_{\play,\run}$ and the uniform distribution $\unif(\pures_{\play})$ on $\pures_{\play}$.
Accordingly, \eqref{eq:EXP3} can be seen as an instance of \eqref{eq:method} with $\mirror_{\play} = \logit_{\play}$ and $\curr[\signal]$ given by \eqref{eq:IWE} with pure strategies $\curr[\purequery]$ drawn according to $\curr[\query]$.
\endenv
\end{algo}

\itempar{Runtime parameters}

The above justifies the characterization of \eqref{eq:method} as a ``parent scheme'' for \crefrange{alg:SGA}{alg:EXP3}.
In particular, thanks to the explicit expressions for $\curr[\signal]$ derived in each case, we can likewise estimate the error bounds $\curr[\bbound]$, $\curr[\sdev]$ and $\curr[\sbound]$ of each method.


\begin{table}[tbp]
\footnotesize
\centering
\renewcommand{\arraystretch}{1.25}
\setlength{\tabcolsep}{.75em}

\begin{tabular}{ccccccc}
\toprule
\textbf{Algorithm}
	&\textbf{Actions ($\points_{\play}$)}
	&\textbf{Mirror Map ($\mirror$)}
	&\textbf{Feedback}
	&\textbf{Bias ($\curr[\bbound]$)}
	&\textbf{Magnitude ($\curr[\sbound]$)}
	\\
\midrule
\eqref{eq:SGA}
	&$\R^{\vdim_{\play}}$
	&$\dpoint$
	&oracle
	&$0$
	&$\bigoh(1)$
	\\
\eqref{eq:EG}\,/\,\eqref{eq:OG}
	&$\R^{\vdim_{\play}}$
	&$\dpoint$
	&oracle
	&$\bigoh(1/\run^{\stepexp})$
	&$\bigoh(1)$
	\\
\eqref{eq:EW}
	&$\simplex(\pures_{\play})$
	&$\logit(\dpoint)$
	&oracle
	&$0$
	&$\bigoh(1)$
	\\
\eqref{eq:SPSA}
	&$\R^{\vdim_{\play}}$
	&$\dpoint$
	&payoff
	&$\bigoh(1/\run^{\mixexp})$
	&$\bigoh(\run^{\mixexp})$
	\\
\eqref{eq:DGA}
	&$[0,\infty)$
	&$\exp(\dpoint)$
	&payoff
	&$\bigoh(1/\run)$
	&$\bigoh(1)$
	\\
\eqref{eq:EXP3}
	&$\simplex(\pures_{\play})$
	&$\logit(\dpoint)$
	&payoff
	&$\bigoh(1/\run^{\mixexp})$
	&$\bigoh(\run^{\mixexp})$
	\\
\bottomrule
\end{tabular}
\caption{The algorithms of \cref{sec:algorithms} as instances of \eqref{eq:method}.
Where applicable, the methods' step-size and sampling parameters are taken to be of the form $\curr[\step] \propto 1/\run^{\stepexp}$ and $\curr[\mix] \propto 1/\run^{\mixexp}$ for some $\stepexp\in[0,1]$ and $\mixexp\in(0,1/2)$ respectively.
}
\label{tab:algorithms}
\vspace{-2ex}
\end{table}


\begin{restatable}{proposition}{errorbounds}
\label{prop:algorithms}
Suppose that \crefrange{alg:SGA}{alg:EXP3} are run with step-size $\curr[\step] \propto 1/\run^{\stepexp}$, $\stepexp\in[0,1]$, and, where applicable, a sampling parameter $\curr[\mix] \propto 1/\run^{\mixexp}$, $\mixexp\in(0,1/2)$. 
Then the corresponding sequence of gradient signals $\curr[\signal]$ in \eqref{eq:method} enjoys the bounds:
\begin{itemize}
\item
For \cref{alg:SGA,alg:EW}:
	\tabto{11em}
	$\curr[\bbound] = 0$, $\curr[\sdev] = \bigoh(1)$, and $\curr[\sbound] = \bigoh(1)$.
\item
For \cref{alg:EG,alg:OG}:
	\tabto{11em}
	$\curr[\bbound] = \bigoh(1/\run^{\stepexp})$, $\curr[\sdev] = \bigoh(1)$, and $\curr[\sbound] = \bigoh(1)$.
\item
For \cref{alg:SPSA,alg:EXP3}:
	\tabto{11em}
	$\curr[\bbound] = \bigoh(1/\run^{\mixexp})$, $\curr[\sdev] = \bigoh(\run^{\mixexp})$, and $\curr[\sbound] = \bigoh(\run^{\mixexp})$.
\item
For \cref{alg:DGA}:
	\tabto{11em}
	$\curr[\bbound] = \bigoh(1/\run)$, $\curr[\sdev] = \bigoh(1)$, and $\curr[\sbound] = \bigoh(1)$.
\end{itemize}
\end{restatable}

For ease of reference, we summarize the above in \cref{tab:algorithms} (for the proof of \cref{prop:algorithms}, see \cref{app:proofs}).
Of course, we can also ``mix'n'match'' different methods to include other algorithms considered in the literature:
for instance,
coupling \eqref{eq:SPSA} with a general mirror map leads to the \acl{BMD} algorithm of \citet{BLM18};
incorporating the gradient reuse step of \eqref{eq:OG} in the setup of \eqref{eq:EW} yields the \acf{OMW} method of \citet{DP19};
etc.
In the sections to come, we will exploit the expressive power of \eqref{eq:method} to provide a synthetic analysis for all these policies.

\section{Stochastic approximation and first results}
\label{sec:general}


\subsection{Mean dynamics and stochastic approximation}
\label{sec:SA}

In this section, we derive a series of convergence results for \eqref{eq:method} by treating it as a ``noisy'' discretization of the \emph{mean dynamics}
\begin{equation}
\label{eq:MD}
\tag{MD}
\dot\dpoint
	= \vecfield(\point)
	\qquad
\point
	= \mirror(\dpoint).
\end{equation}
In this continuous-time interpretation, $\dot\dpoint$ represents the limit of the finite difference quotient $(\next[\dstate] - \curr[\dstate])/\curr[\step]$.
As such, if $\curr[\step]$ is ``sufficiently small'' and the gradient signal $\curr[\signal]$ is a ``good enough'' approximation of $\vecfield(\curr)$, it is plausible to expect that the iterates of \eqref{eq:method} and the solutions of \eqref{eq:MD} will eventually come together.

Following \cite{BH96,Ben99}, this heuristic can be made precise as follows:
First, let $\dflow\from\R\times\dspace\to\dspace$ denote the \emph{flow} associated to \eqref{eq:MD}, \ie the map which sends an initial condition $\dpoint\in\dspace$ to the point $\dflowof{\time}{\dpoint} \in \dspace$ obtained by following the orbit of \eqref{eq:MD} starting at $\dpoint$ for time $\time\in\R$.
Then, to compare the sequence of iterates $\curr[\dstate]$ generated by \eqref{eq:method} with the solution orbits of \eqref{eq:MD}, define the \emph{effective time} $\curr[\efftime] = \sum\nolimits_{\runalt=\start}^{\run} \iter[\step]$ and the associated \emph{affine interpolation} $\apt{\time}$ of $\curr[\dstate]$ as
\begin{equation}
\label{eq:interpolation}
\apt{\time}
	= \curr[\dstate]
		+ \frac{\time - \curr[\efftime]}{\next[\efftime] - \curr[\efftime]} (\next[\dstate] - \curr[\dstate])
	\quad
	\text{for all $\time \in [\curr[\efftime],\next[\efftime]]$, $\run=\running$}
\end{equation}
We then have the following notion of ``asymptotic closeness'' between \eqref{eq:method} and \eqref{eq:MD}:

\begin{definition}[\citeauthor{Ben99}, \citeyear{Ben99}]
\label{def:APT}
The sequence $\curr[\dstate]$ is an \acdef{APT} of \eqref{eq:MD} if
\acused{APT}
\begin{equation}
\label{eq:APT}
\tag{APT}
\txs
\lim_{\time\to\infty}
	\sup_{0 \leq \timealt \leq \horizon}
		\dnorm{\apt{\time + \timealt} - \dflowof{\timealt}{\apt{\time}}}
	= 0
	\quad
	\text{for all $\horizon>0$}.
\end{equation}
\end{definition}

In words, \cref{def:APT} posits that $\curr[\dstate]$ asymptotically tracks the orbits $\dtrajof{\time}$ of \eqref{eq:MD} with arbitrary precision over windows of arbitrary length.
In our setting,
the following proposition can be used as an explicit criterion guaranteeing this property:

\begin{restatable}{proposition}{APT}
\label{prop:APT}
Suppose that \eqref{eq:method} is run with step-size and gradient signal sequences such that
\begin{enumerate*}
[\itshape a\upshape)]
\item
$\curr[\step] \to 0$;
\item
$\curr[\bbound] \to 0$;
and
\item
$\sum_{\run} \curr[\step]^{1+\qexp/2} \curr[\sbound]^{\qexp} < \infty$.
\end{enumerate*}
Then, \acl{wp1}, the sequence $\curr = \mirror(\curr[\dstate])$ is an \ac{APT} of \eqref{eq:MD}.
\end{restatable}

\Cref{prop:APT} shadows a basic result of \citet[Prop.~4.1, \cf Eq.~(13) and onwards]{Ben99} so we omit its proof.
For our purposes, it is more important to note that a tandem application of \cref{prop:APT,prop:algorithms} immediately yields the following concrete conditions for \crefrange{alg:SGA}{alg:EXP3}:

\begin{corollary}
\label{cor:APT}
Suppose that \crefrange{alg:SGA}{alg:EXP3} are run with parameters as in \cref{prop:algorithms}.
Then the sequence $\curr[\dstate]$ comprises an \ac{APT} of \eqref{eq:MD} provided that:
\begin{itemize}
\item
For \cref{alg:SGA,alg:EG,alg:OG}:
	\tabto{11em}
	$\stepexp > 2/(2+\qexp)$
\item
For \cref{alg:EW}:
	\tabto{11em}
	$\stepexp > 0$
\item
For \cref{alg:SPSA,alg:EXP3}:
	\tabto{11em}
	$\stepexp > 2\mixexp > 0$
\end{itemize}
\end{corollary}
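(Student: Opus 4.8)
The plan is to read off \cref{cor:APT} directly from \cref{prop:APT,prop:errorbounds}. \Cref{prop:errorbounds} already supplies, for each of \crefrange{alg:SGA}{alg:EXP3}, explicit rates for the bias bound $B_n$ and the magnitude bound $M_n$ entering \eqref{eq:errorbounds}, so all that remains is to check that conditions (a) and (b) of \cref{prop:APT} hold under the stated parameter ranges. As a preliminary observation, with $\gamma_n = \gamma/n^p$ and $p\in[0,1]$ we automatically have $\sum_n\gamma_n=\infty$ (the standing hypothesis on the \eqref{eq:MRM} template), while $\gamma_n\to0$ --- \ie condition (a) --- is equivalent to $p>0$. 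Hence everything reduces to making the bias requirement $B_n\to0$ and the summability requirement $\sum_n\gamma_n^{1+q/2}M_n^q<\infty$ explicit for each algorithm family.

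For \cref{alg:SGA,alg:EW} we have $B_n=0$ and $M_n=\bigoh(1)$, and for \cref{alg:seqGA,alg:EG,alg:OG,alg:MP} we have $B_n=\bigoh(1/n^p)$ and $M_n=\bigoh(1)$; in either family the bias condition holds as soon as $p>0$, while $\sum_n\gamma_n^{1+q/2}M_n^q=\bigoh\bigl(\sum_n n^{-p(1+q/2)}\bigr)$ is a $p$-series, which converges iff $p(1+q/2)>1$, \ie iff $p>2/(2+q)$. This is exactly the threshold stated for \cref{alg:SGA,alg:seqGA,alg:EG,alg:OG,alg:MP}, and since $2/(2+q)>0$ for finite $q$ it automatically subsumes $B_n\to0$. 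When the oracle feedback is bounded ($q=\infty$), as for \cref{alg:OG,alg:EW}, one interprets the $L^q$-bound in \eqref{eq:errorbounds} as holding for \emph{every} finite exponent $q'\geq2$; then $2/(2+q')\to0$ as $q'\to\infty$, so a large enough choice of $q'$ makes the series converge for any $p>0$, which recovers the listed conditions ($p>2/(2+q)$, vacuous for \cref{alg:OG} since $p>0$ is already forced, and $p>0$ for \cref{alg:EW}).

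For the payoff-based perturbation schemes \cref{alg:SPSA,alg:EXP3}, \cref{prop:errorbounds} gives $B_n=\bigoh(1/n^r)$ and $M_n=\bigoh(n^r)$, with the gradient signal bounded almost surely (so again $q=\infty$ and we may work with an arbitrarily large finite moment exponent $q'$). The bias condition $B_n\to0$ is automatic from $r>0$. For summability, $\sum_n\gamma_n^{1+q'/2}M_n^{q'}=\bigoh\bigl(\sum_n n^{\,q'(r-p/2)-p}\bigr)$, and the exponent $q'(r-p/2)-p$ can be pushed below $-1$ by taking $q'$ large precisely when $r-p/2<0$, \ie $p>2r$; if instead $p\leq2r$, that exponent is $\geq-p>-1$ for every $q'$ and the series diverges. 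This is exactly the stated requirement $p>2r>0$. Finally, \cref{alg:DGA} is absent from the list because it is run with $\gamma_n=1/n$ by construction ($p=1$) together with $B_n=\bigoh(1/n)$ and $M_n=\bigoh(1)$, whence $\gamma_n\to0$, $\sum_n\gamma_n=\infty$, $B_n\to0$ and $\sum_n n^{-1-q'/2}<\infty$ all hold unconditionally, so the \acs{APT} property is automatic; once all requirements of \cref{prop:APT} are verified, the conclusion follows.

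The only mildly delicate point in the whole argument is the bookkeeping around the $q=\infty$ convention --- one must check that passing to a sufficiently large finite moment exponent is legitimate and that the thresholds $2/(2+q)$ and $2r$ emerge correctly in that limit --- but this is routine and poses no genuine obstacle.
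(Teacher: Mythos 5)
Your derivation is correct and is exactly the route the paper intends: \cref{cor:APT} is stated as an immediate combination of \cref{prop:APT,prop:errorbounds}, and your verification of conditions (a) and (b) with $\curr[\step]\propto 1/\run^{\pexp}$, including the interpretation of the $\qexp=\infty$ (almost surely bounded) case via arbitrarily large finite moment exponents to recover the thresholds $\pexp>0$ and $\pexp>2\rexp$, is precisely the bookkeeping the paper leaves implicit. No gaps.
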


\cref{cor:APT} provides a minimal set of hypotheses under which \eqref{eq:MD} is a faithful representation of \crefrange{alg:SGA}{alg:EXP3}.
For some of these algorithms, this property is already known in the literature, see \eg
\cite{Ben99} for \eqref{eq:SGA}
and 
\cite{BBF20} for \eqref{eq:DGA}.
For others, the link with \eqref{eq:MD} appears to be new:
especially in the case of \eqref{eq:EG}\,/\,\eqref{eq:OG}, \cref{cor:APT} settles a standing question in the  literature concerning the mean dynamics of \acl{OG} methods.

\subsection{The primal-dual dichotomy}
\label{sec:primaldual}

To proceed, we will need some basic definitions from the theory of dynamical systems.
Specifically, given
a flow $\flow\from\R\times\mfld\to\mfld$ on some metric space $\mfld$
and
a nonempty compact subset $\set$ of $\mfld$, we say that:
\begin{enumerate}
\addtolength{\itemsep}{\smallskipamount}
\item
$\set$ is \emph{invariant} under $\flow$ if $\flowof{\time}{\set} = \set$ for all $\time\in\R$.
\item
$\set$ is an \emph{attractor} of $\flow$ if it admits a neighborhood $\dnhd \subseteq \dspace$ such that $\dist(\flowof{\time}{\dpoint},\set)\to0$ uniformly in $\dpoint\in\dnhd$ as $\time\to\infty$.
\item
$\set$ is \acdef{ICT} if it is invariant and $\flow\vert_{\set}$ has no attractors except $\set$.
\acused{ICT}
\end{enumerate}
With all this in hand, the general theory of \citef{BH96} yields the following convergence result when applied to \eqref{eq:MD}:

\begin{theorem}
[\citeauthor{BH96}, \citeyear{BH96}]
\label{thm:Benaim}
Let $\curr[\dstate]$, $\run=\running$, be an \ac{APT} of \eqref{eq:MD} with $\sup_{\run} \dnorm{\curr[\dstate]} < \infty$.
Then $\curr[\dstate]$ converges to an \ac{ICT} set of \eqref{eq:MD}.
\end{theorem}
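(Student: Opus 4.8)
The plan is to recognize this as the classical fact that the \emph{limit set} of a precompact \ac{APT} is \acl{ICT}, in the style of \cite{BH96}. Concretely, write $\limset \defeq \intersect_{\timealt\geq0} \cl\setdef{\apt{\time}}{\time\geq\timealt}$ for the limit set of the affine interpolation $\apt{\cdot}$ of $\curr[\dstate]$ from \eqref{eq:interpolation}. The hypothesis $\sup_{\run}\dnorm{\curr[\dstate]}<\infty$ renders the range of $\apt{\cdot}$ precompact, so $\limset$ is automatically nonempty, compact and connected, and $\dist(\curr[\dstate],\limset)\to0$; the content of the theorem then reduces to showing that $\limset$ is \acl{ICT} in the sense of \cref{def:dynamics}.

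First I would establish invariance. Given $\dpoint\in\limset$, pick $\time_{\run}\to\infty$ with $\apt{\time_{\run}}\to\dpoint$; for any fixed $\timealt\in\R$, the \ac{APT} estimate \eqref{eq:APT} forces $\apt{\time_{\run}+\timealt}$ and $\flowof{\timealt}{\apt{\time_{\run}}}$ together, while continuity of the flow gives $\flowof{\timealt}{\apt{\time_{\run}}}\to\flowof{\timealt}{\dpoint}$; hence $\flowof{\timealt}{\dpoint}\in\limset$. Ranging over $\timealt\in\R$ and using compactness of $\limset$ upgrades this to $\flowof{\timealt}{\limset}=\limset$ for every $\timealt$.

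The heart of the argument — and the step I expect to be most delicate — is internal chain transitivity. Fix $\dpoint,\dpointalt\in\limset$ and $\eps,\horizon>0$; I would choose a time $\timealt_{0}$ past which the \ac{APT} property holds uniformly over windows of length $\leq2\horizon$ up to error $\eps/3$, and past which $\apt{\cdot}$ stays within $\eps/3$ of $\limset$. Since $\limset$ is precisely the limit set, the curve $\apt{\cdot}$ returns $\eps/3$-close to $\dpoint$, and later $\eps/3$-close to $\dpointalt$, at arbitrarily large times; I would pick two such times more than $\horizon$ apart, partition the intervening time interval into finitely many subintervals each of length between $\horizon$ and $2\horizon$, and replace each partition instant by a nearby point of $\limset$. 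Combining, via the triangle inequality, the flow-tracking bound over each subinterval with the two $\eps/3$ projection gaps shows that consecutive replacement points lie within $\eps$ of the corresponding flowed point, i.e.\ they form an $(\eps,\horizon)$-pseudo-orbit in $\limset$; absorbing the residual discrepancy at the two ends lets me take the endpoints to be exactly $\dpoint$ and $\dpointalt$. As $\dpoint,\dpointalt,\eps,\horizon$ were arbitrary, $\limset$ is internally chain transitive.

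Finally I would bridge the chain-transitivity formulation to the definition actually used in \cref{def:dynamics}: by Conley's theory (as exploited in \cite{BH96}), a compact invariant set on which the flow is internally chain transitive admits no attractor other than itself, which is exactly what \cref{def:dynamics} demands. The two genuine obstacles are therefore (i) arranging the partition of the return interval so that every sub-window length stays in the range over which the \ac{APT} bound is uniform, and (ii) this final Conley-type translation; both are standard, and in a polished write-up one could simply cite \cite{BH96} for the whole statement.
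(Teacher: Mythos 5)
Your proposal is correct in substance, but it takes a different route from the paper: you essentially re-prove the Benaïm–Hirsch limit-set theorem (the limit set of a precompact \ac{APT} is invariant and internally chain transitive, hence admits no proper attractor by Conley theory), whereas the paper's proof consists of a single sentence verifying the \emph{hypotheses} of that theorem and then citing \citet[Theorem~0.1]{BH96} outright \textendash\ exactly the shortcut you mention at the end. The one item your sketch glosses over is precisely the item the paper does spell out: your invariance and pseudo-orbit arguments repeatedly invoke existence and continuity of the semiflow $\flow$ of \eqref{eq:MD}, and in this primal-dual setting that is not automatic \textendash\ the dual vector field is $\dpoint\mapsto\vecfield(\mirror(\dpoint))$, so one must check that it is Lipschitz, which follows from the $(1/\hstr)$-Lipschitz continuity of the mirror map (\cref{lem:mirror}) together with the Lipschitz continuity of $\vecfield$ from \cref{asm:game}. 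With that regularity check added, your argument is a complete (if more laborious) proof; its benefit is self-containedness, while the paper's approach buys brevity by outsourcing the dynamical-systems machinery to \citet{BH96}. Also note that two of your intermediate claims are slightly stronger than needed and should be stated carefully: connectedness of $\limset$ is not required by \cref{def:dynamics}, and the two-sided invariance $\flowof{\timealt}{\limset}=\limset$ for $\timealt<0$ requires the standard backward-extension argument (the interpolation only tracks the flow forward in time), which is part of the classical proof you are reproducing.
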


\begin{proof}
\Cref{lem:mirror} in \cref{app:mirror} shows that $\mirror$ is Lipschitz continuous.
Since $\vecfield$ is also Lipschitz continuous by \cref{asm:game}, our assertion follows from \citef[Theorem 0.1]{BH96}.
\end{proof}

Taken together, \cref{thm:Benaim,cor:APT} suggest that the behavior of the various algorithms presented in \cref{sec:algorithms} (and many more) can be understood by looking at the \ac{ICT} sets of the \emph{same} mean dynamics.
However, from a practical viewpoint, this conclusion carries two important limitations:
First, the boundedness caveat for $\curr[\dstate]$ cannot be readily checked against the game's primitives, so it is not clear when \cref{thm:Benaim} applies \textendash\ and, in much of the literature, this assumption has persisted as a condition that needs to be enforced ``by hand'' \cite{KY97,Ben99}.
Second \textendash\ and perhaps more importantly \textendash\ this reasoning ignores the fact that $\curr$ evolves in $\points$, the game's \emph{action space}, whereas the orbits of \eqref{eq:MD} live in $\dspace$, the game's \emph{dual space}.
In turn, this leads to a fundamental mismatch:
a dual orbit $\dtrajof{\time}$ may \emph{diverge} in $\dspace$, even though the induced primal orbit $\trajof{\time} = \mirror(\dtrajof{\time})$ \emph{converges} in $\points$.

\begin{example*}
Consider the single-player game with $\pay(\point) = -\point$, $\point\geq0$.
Then the dynamics \eqref{eq:MD} give
$\dot\dpoint = -1$, so $\dtrajof{\time} \to -\infty$ as $\time\to\infty$, \ie $\dtrajof{\time}$ \emph{diverges};
however, under the exponential mirror map of \cref{ex:mirror-orthant}, the player's trajectory of actions evolves as $\trajof{\time} = \exp(\dtrajof{\time})$, \ie $\trajof{\time}$ \emph{converges} (to $0$).
In this case, even if we were to ignore the boundedness issue, \cref{thm:Benaim} becomes vacuous (and, in a sense, misleading):
the dynamics \eqref{eq:MD} do not have any \ac{ICT} sets and are divergent, even though the induced trajectory of actions converges in $\points$.
\endenv
\end{example*}

The above creates a relatively awkward situation in which \emph{dynamical} notions of stationarity and stability are defined on $\dspace$, whereas the corresponding \emph{game-theoretic} notions reside in $\points$.
To reconcile this incompatibility, it is instead natural to focus directly on $\points$ and ask whether the notion of an \ac{ICT} set can be transposed there.
However, this is only meaningful if the ensemble of trajectories $\trajof{\time} = \mirror(\dtrajof{\time})$ constitute a flow on $\points$;
that is, formally, we must posit the existence of a flow $\pflow$ on $\points$ (or a subset thereof) that is \emph{conjugate} to $\dflow$ in the sense that $\mirror\circ\dflow_{\time} = \pflow_{\time}\circ\mirror$ for all $\time\in\R$.

In general, this may fail to hold:
for example, consider the single-player game $\pay(\point) = \point$, $\point\in[0,1]$, and consider the Euclidean projector $\mirror(\dpoint) = \clip{\dpoint}_{0}^{1} \equiv \min\braces{\max\braces{y,0},1}$ induced by the quadratic regularizer $\hreg(\point) = \point^{2}/2$ on $\points = [0,1]$.
Clearly, \eqref{eq:MD} gives $\dflowof{\time}{\dpoint} = \dpoint + \time$ for all $\time\in\R$ and all $\dpoint \in \R$.
Nevertheless, even though the orbits $\dflowof{\time}{0}$ and $\dflowof{\time}{-1}$ both have the same starting point $\mirror(0) = \mirror(-1) = 0$ in $\points$, the induced primal trajectories evolve differently:
for all $\time\in[0,1]$, we have $\mirror(\dflowof{\time}{0}) = \time$ and $\mirror(\dflowof{\time}{-1}) = 0$, implying in turn that there can be no flow $\pflow$ on $\points$ whose orbits are the images of the orbits of \eqref{eq:MD}.

Because this discrepancy arises at the boundary of $\points$, \cref{thm:Benaim} is more relevant for cases where all orbits are contained in $\relint\points$.
In this case, we have the following result.

\begin{proposition}
\label{prop:conjflow}
Let $\dbasin\subseteq\dspace$ be a forward-invariant set of $\dflow$ such that $\basin = \mirror(\dbasin)$ is contained in $\relint\points$.
Then there exists a flow $\pflow$ on $\basin$ such that $\pflowof{\time}{\mirror(\dpoint)} = \mirror(\dflowof{\time}{\dpoint})$ for all $\dpoint\in\dbasin$ and all $\time\geq0$;
in particular, $\pflow$ can be defined on all of $\relint\points$ if $\im\mirror = \relint\points$.
\end{proposition}

\begin{proof}
Clearly, the only candidate for $\pflow$ is to set $\pflowof{\time}{\point} = \mirror(\dflowof{\time}{\dpoint})$ whenever $\point = \mirror(\dpoint)$ for $\dpoint\in\dbasin$.
To see that this construction is well-defined, suppose that $\mirror(\dpointalt) = \mirror(\dpoint) = \point$ for some $\dpoint,\dpointalt \in \dbasin$, and let $\dvec = \dpointalt - \dpoint$;
then, it suffices to show that $\mirror(\dflowof{\time}{\dpoint}) = \mirror(\dflowof{\time}{\dpoint+\dvec})$ for all $\time\geq0$.

As a first step, we claim that $\mirror(\dflowof{\time}{\dpoint} + \dvec) = \mirror(\dflowof{\time}{\dpoint})$ for all $\time\geq0$.
Indeed, since $\mirror(\dpoint) \in \relint\points$, \cref{lem:mirror} in \cref{app:mirror} shows that $\dvec$ annihilates all tangent directions to $\points$ at $\point$, \ie $\braket{\dvec}{\pointalt - \point} = 0$ for all $\pointalt\in\points$.
However, since $\dflowof{\time}{\dbasin} \subseteq \dbasin$ and $\mirror(\dbasin) \subseteq \relint\points$, the point $\point_{\time} = \mirror(\dflowof{\time}{\dpoint})$ will also be interior;
hence, with $\dflowof{\time}{\dpoint} \in \subd\hreg(\point_{\time})$ and $\braket{\dvec}{\pointalt - \point_{\time}} = \braket{\dvec}{\pointalt - \point} + \braket{\dvec}{\point - \point_{\time}} = 0$ for all $\pointalt\in\points$, invoking \cref{lem:mirror} in the converse direction gives $\dflowof{\time}{\dpoint} + \dvec \in \subd\hreg(\point_{\time})$, \ie $\mirror(\dflowof{\time}{\dpoint} + \dvec) = \mirror(\dflowof{\time}{\dpoint})$.

In view of the above, a simple differentiation yields
\begin{equation}
\ddt\bracks{\dflowof{\time}{\dpoint} + \dvec}
	= \ddt \dflowof{\time}{\dpoint}
	= \vecfield(\mirror(\dflowof{\time}{\dpoint}))
	= \vecfield(\mirror(\dflowof{\time}{\dpoint} + \dvec))
\end{equation}
which, by the Picard-Lindelöf theorem, shows that $\dflowof{\time}{\dpoint} + \dvec$ is the (necessarily unique) solution orbit of \eqref{eq:MD} starting at $\dpoint+\dvec$ at time $\time = \tstart$.
We thus conclude that
$\mirror(\dflowof{\time}{\dpoint + \dvec}) = \mirror(\dflowof{\time}{\dpoint} + \dvec) = \mirror(\dflowof{\time}{\dpoint})$, \ie $\pflow$ is well-defined.
Our second assertion follows from the fact that $\im\mirror = \dom\subd\hreg = \relint\points$, so we can apply our first claim to $\dbasin = \dspace$.
\end{proof}

\Cref{prop:conjflow} indicates that, if $\mirror$ is interior-valued, we can map the flow on $\dspace$ to an induced primal flow on $\points$.
For concreteness, before discussing the precise connection between \eqref{eq:method} and the induced dynamics on $\points$, we illustrate the latter in the case of \cref{ex:mirror-Eucl,ex:mirror-logit,ex:mirror-orthant}:

\begin{example}
Take $\points=\pspace$ and $\mirror(\dpoint) = \dpoint$ as in \cref{ex:mirror-Eucl}.
Then \eqref{eq:MD} trivially gives the (individual) \emph{gradient dynamics}
\begin{equation}
\label{eq:GD}
\tag{GD}
\dot\point
	= \vecfield(\point)
\end{equation}
\end{example}

\begin{example}
Let $\points_{\play} = \simplex(\pures_{\play})$ and take $\mirror_{\play} = \logit_{\play}$ as in \cref{ex:mirror-logit}.
Then, by a standard calculation, \eqref{eq:MD} boils down to the \emph{replicator dynamics} of \citet{TJ78}
\begin{equation}
\label{eq:RD}
\tag{RD}
\dot\point_{\play\pure_{\play}}
	= \point_{\play\pure_{\play}}
		\bracks{\pay_{\play}(\pure_{\play};\point_{-\play}) - \pay_{\play}(\point_{\play};\point_{-\play})}.
\end{equation}
\end{example}

\begin{example}
Let $\points_{\play} = [0,\infty)$ and take $\mirror_{\play}(\dpoint_{\play}) = \exp(\dpoint_{\play})$ as in \cref{ex:mirror-orthant}.
Then, by differentiating $\point_{\play} = e^{\dpoint_{\play}}$ we obtain the \emph{dampened gradient dynamics} of \citet{BBF20}, viz.
\begin{equation}
\label{eq:DGD}
\tag{DGD}
\dot\point_{\play}
	= \point_{\play} \vecfield_{\play}(\point).
\end{equation}
\end{example}

\subsection{Subcoercivity and convergence}
\label{sec:subcoercive}


Other than the primal-dual dichotomy described above, the other important caveat in \cref{thm:Benaim} is the boundedness of the generated sequence $\curr[\dstate]$.
In the optimization literature, a standard way to establish this type of control on an algorithm designed to find a zero of a vector field $\vecfield$ on $\R^{\vdim}$ is to assume that it is \emph{coercive}, \ie $\lim_{\norm{\point}\to\infty}\braket{\vecfield(\point)}{\point} / \norm{\point} = -\infty$.
Intuitively, coercivity means that $\vecfield(\point)$ is strongly ``inward pointing'' for large values of $\point$, so it acts as a natural barrier against escape phenomena;
at the same time, it also imposes that $\dnorm{\vecfield(\point)}$ grows superlinearly in $\point$, and is only relevant for unconstrained problems, so it is not particularly well-suited for our purposes.
Instead, we will consider the following, weaker requirement:

\begin{definition}
\label{def:subco}
We say that $\game$ is \emph{subcoercive} if there exists a compact set $\cpt \in \relint\points$ and a reference point $\base\in\cpt$ such that
\begin{equation}
\label{eq:subco}
\tag{SC}
\braket{\vecfield(\point)}{\point - \base}
	\leq 0
	\quad
	\text{for all $\point \in \points \setminus \cpt$}.
\end{equation}
\end{definition}

Geometrically, subcoercivity simply posits that the Nash field $\vecfield(\point)$ of the game points weakly towards $\base$ outside $\cpt$, so any ``attracting'' behavior in $\game$ must be contained in $\cpt$:
for example, it is straightforward to verify that any \acl{VS} state of $\game$ must lie within $\cpt$ if \eqref{eq:subco} holds.
Beyond this, it is important to note that $\vecfield$ may vanish at infinity, and $\cpt$ can be arbitrarily large (relative to $\points$).
We provide some examples below:

\begin{example}
[Potential games]
Suppose that $\game$ admits a quasiconcave potential $\pot$ with $\argmax\pot \subseteq \relint\points$.
If we fix a maximizer $\base$ of $\pot$, we have $\braket{\vecfield(\point)}{\point - \base} = \braket{\nabla\pot(\point)}{\point - \base} \leq 0$ for all $\point\in\points$, so $\game$ is subcoercive.
More generally, $\game$ is subcoercive whenever $\pot$ is ``eventually quasiconcave'', \ie the upper level sets $\suplvl{\lvl}{\pot} = \setdef{\point\in\points}{\pot(\point) \geq \lvl}$ of $\pot$ are convex for sufficiently small $\lvl>\inf\pot$ and at least one such set is contained in $\relint\points$.%
\footnote{To see this, let $\cpt = \suplvl{\lvl_{0}}{\pot}$ be a convex upper level set of $\pot$ in $\relint\points$.
Then, for all $\lvl \leq \lvl_{0}$ and all $\point$ with $\pot(\point) = \lvl$, the segment $\point + \tau(\base-\point)$, $\tau\in[0,1]$, is contained in $\suplvl{\lvl}{\pot} \supseteq \suplvl{\lvl_{0}}{\pot}$, so the function $\phi(\tau) = \pot(\point + \tau(\base-\point))$ cannot have $\phi'(0) < 0$.
This implies that $0 \leq \braket{\nabla\pot(\point)}{\base - \point} = \braket{\vecfield(\point)}{\base - \point}$ for all $\point\in\points\setminus\cpt$, \ie $\game$ is subcoercive.}
\endenv
\end{example}

\begin{example}
[Min-max games]
Consider the toy game $\min_{\minvar\in[-1,1]} \max_{\maxvar\in[-1,1]} \minvar\maxvar$.
Since $\braket{\vecfield(\point)}{\point} = -\maxvar\minvar + \minvar\maxvar = 0$ for all $\point\in[-1,1]\times[-1,1]$, the game is trivially subcoercive.
More generally, it is easy to check that any two-player, quasi-convex\,/\,quasi-concave game with an interior equilibrium is subcoercive.
\endenv
\end{example}

By itself, subcoercivity ensures that there is no consistent drift pointing away from $\cpt$, so it is reasonable to expect that $\curr[\dstate]$ does not escape to infinity either.
To control the inherent stochasticity in $\curr[\dstate]$ and make this intuition precise, we will require the following summability conditions on the bias, variance, and magnitude of the gradient signal process $\curr[\signal]$:
\begin{equation}
\label{eq:sum}
\tag{Sum}
\txs
\sum_{\run} \curr[\step] \curr[\bbound]
	< \infty
	\qquad
\sum_{\run} \curr[\step]^{2} \curr[\sdev]^{2}
	< \infty
	\qquad
	\text{and}
	\qquad
\sum_{\run} \curr[\step]^{2} \curr[\sbound]^{2}
	< \infty.
\end{equation}
Under these conditions, we have the following stability result.

\begin{proposition}
\label{prop:bounded}
Suppose that \eqref{eq:method} is run with step-size and gradient signal sequences satisfying \eqref{eq:sum}.
If $\game$ is subcoercive,
the sequence of iterates $\curr[\dstate]$ generated by \eqref{eq:method} is bounded \ac{wp1}.
\end{proposition}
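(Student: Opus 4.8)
\emph{Proof plan.} The strategy is a stochastic Lyapunov argument organized around the \emph{Fenchel coupling}
\[
\fench(\base,\dpoint)
\;\defeq\; \hreg(\base) + \hconj(\dpoint) - \braket{\dpoint}{\base},
\qquad \dpoint\in\dspace,
\]
taken at the reference point $\base\in\cpt$ supplied by \cref{def:subco}. The reason this is the right object is that $\fench(\base,\cdot)$ is a Lyapunov function for \eqref{eq:MD} outside $\cpt$: a short computation gives $\tfrac{d}{d\time}\fench(\base,\dtrajof{\time}) = \braket{\vecfield(\trajof{\time})}{\trajof{\time}-\base}$, which is $\leq 0$ whenever $\trajof{\time}\in\points\setminus\cpt$ by \eqref{eq:subco}. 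Since \cref{lem:mirror} lets one pass from a bound on $\fench(\base,\curr[\dstate])$ back to a bound on $\curr[\dstate]$ itself, it suffices to prove that the energy $\energy_\run \defeq \fench(\base,\curr[\dstate])$ is bounded \ac{wp1}.

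The first concrete step is the one-step energy estimate. Applying the ``three-point'' inequality for the Fenchel coupling, $\fench(\base,\dpoint+\dvec) \leq \fench(\base,\dpoint) + \braket{\dvec}{\mirror(\dpoint)-\base} + \tfrac{1}{2\hstr}\dnorm{\dvec}^2$ (a consequence of the $\tfrac1\hstr$-smoothness of $\hconj$; see \cref{lem:mirror}), with $\dpoint=\curr[\dstate]$ and $\dvec=\curr[\step]\curr[\signal]$, and inserting the decomposition $\curr[\signal] = \vecfield(\curr) + \curr[\noise] + \curr[\bias]$ from \eqref{eq:signal}, I would take conditional expectations given $\curr[\filter]$. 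The noise term vanishes since $\exof{\curr[\noise]\given\curr[\filter]} = 0$; the curvature term is at most $\curr[\step]^2\curr[\sbound]^2/(2\hstr)$ by \eqref{eq:errorbounds} and Jensen's inequality; and the bias term is at most $\curr[\step]\curr[\bbound]\norm{\curr-\base}$ (by Cauchy--Schwarz and \eqref{eq:errorbounds}), hence at most $\tfrac12\sqrt{2/\hstr}\,\curr[\step]\curr[\bbound]\,(1+\energy_\run)$ after using the strong-convexity bound $\norm{\curr-\base}^2 \leq (2/\hstr)\,\energy_\run$ for $\fench$ together with $\sqrt{t}\leq\tfrac12(1+t)$. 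The remaining term $\curr[\step]\braket{\vecfield(\curr)}{\curr-\base}$ is where \eqref{eq:subco} intervenes: it is $\leq 0$ when $\curr\notin\cpt$, while for $\curr\in\cpt$ it is bounded by $\const_1 \defeq \diam(\cpt)\cdot\sup_{\point\in\cpt}\dnorm{\vecfield(\point)} < \infty$, which is finite because $\vecfield$ is continuous (\cref{asm:game}) and $\cpt$ is compact. Putting these together gives, \ac{wp1},
\[
\exof{\energy_{\run+1}\given\curr[\filter]}
\;\leq\;
\bigl(1+\tfrac12\sqrt{2/\hstr}\,\curr[\step]\curr[\bbound]\bigr)\energy_\run
\;+\; \tfrac12\sqrt{2/\hstr}\,\curr[\step]\curr[\bbound]
\;+\; \frac{\curr[\step]^2\curr[\sbound]^2}{2\hstr}
\;+\; \curr[\step]\,\const_1\,\oneof{\curr\in\cpt},
\]
and the multiplicative and additive correction coefficients are summable over $\run$ by \eqref{eq:sum}.

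If the last term were absent, the Robbins--Siegmund almost-supermartingale theorem would immediately give $\sup_\run\energy_\run < \infty$ \ac{wp1}. It need not be summable, however, since $\sum_\run\curr[\step] = \infty$ and $\curr$ may re-enter $\cpt$ infinitely often; handling it rigorously is the main obstacle. The crucial extra structure is that $\cpt$ is a compact subset of $\relint\points$, which forces the energy to be uniformly bounded \emph{on returns to $\cpt$}: whenever $\curr\in\cpt$ one has $\energy_\run \leq \const_2 \defeq \sup\{\fench(\base,\dpoint) : \mirror(\dpoint)\in\cpt\} < \infty$ (again by \cref{lem:mirror}, since the set $\{\dpoint : \mirror(\dpoint)\in\cpt\}$ is bounded or, when $\mirror$ has nontrivial kernel, $\fench(\base,\cdot)$ is constant along its cosets and hence bounded on it). I would then run an excursion argument. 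On the event that $\curr$ visits $\cpt$ only finitely often, the recursion reduces on a terminal half-line to a genuine almost-supermartingale with summable corrections, so Robbins--Siegmund applies there and the initial finite stretch is trivially bounded. On the complementary event, one cuts the time axis at the successive visits to $\cpt$; over each excursion away from $\cpt$ the indicator drops out, $\energy$ obeys a bona fide almost-supermartingale recursion started from a value $\leq \const_2 + \bigoh(\curr[\step])$, and Doob's maximal inequality -- applied to the stopped processes $\energy_{\run\wedge\tau_N}$, $\tau_N = \inf\{\run : \energy_\run > N\}$, so that the martingale part (whose conditional variance is $\lesssim \curr[\step]^2\curr[\sdev]^2\energy_\run$) becomes genuinely square-integrable -- together with the three summability conditions in \eqref{eq:sum} bounds the expected overshoot of $\energy$ above $\const_2$ during the $\runalt$-th excursion by a quantity whose sum over $\runalt$ is finite and independent of $N$. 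A summable-expectations argument turns this into \ac{wp1} summability of the overshoots, so that $\exof{\sup_\run\energy_{\run\wedge\tau_N}}$ is bounded uniformly in $N$; a Markov-inequality argument in $N$ then yields $\sup_\run\energy_\run < \infty$ \ac{wp1}, and one last appeal to \cref{lem:mirror} converts this into the asserted boundedness of $\curr[\dstate]$.
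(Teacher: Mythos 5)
Your overall strategy \textendash\ using the Fenchel coupling $\fench(\base,\cdot)$ at the subcoercivity reference point as a stochastic Lyapunov function \textendash\ is close in spirit to the paper's, but two steps do not hold up as written. The most important one is the passage from $\sup_\run \fench(\base,\curr[\dstate]) < \infty$ to boundedness of the \emph{dual} iterates $\curr[\dstate]$: this is not supplied by \cref{lem:mirror} or \cref{lem:Fench}, which only give the primal bound $\norm{\mirror(\curr[\dstate])-\base}^{2} \leq (2/\hstr)\,\fench(\base,\curr[\dstate])$, i.e., control of $\curr$, not of $\curr[\dstate]$. Boundedness of the coupling does not in general bound $\dpoint$ (on the simplex with entropic regularization and $\base$ a \emph{vertex}, $\fench(\base,\dpoint)\to0$ along dual rays escaping to infinity); what is needed is coercivity of $\dpoint\mapsto\hconj(\dpoint)-\braket{\dpoint}{\base}$ on $\dspace$, which holds precisely because subcoercivity places $\base\in\cpt\subseteq\relint\points$ (after reducing $\pspace$ to the affine hull of $\points$). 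This is exactly the Moreau\textendash Rockafellar step in the paper's proof and one of the two places where the hypothesis $\cpt\subseteq\relint\points$ does real work; your sketch skips it. Your finiteness claim $\const_2=\sup\{\fench(\base,\dpoint):\mirror(\dpoint)\in\cpt\}<\infty$ rests on the same reduction: the paper obtains the analogous fact from compactness of $\subd\hreg(\cpt)$ for $\cpt\subseteq\relint\points$, not from ``$\mirror^{-1}(\cpt)$ is bounded or $\fench$ is constant along kernel cosets'' in general.

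Second, the excursion bookkeeping is quantitatively off. The martingale increments $\curr[\step]\braket{\curr[\noise]}{\curr-\base}$ have conditional variance of order $\curr[\step]^{2}\curr[\sdev]^{2}\,\energy_\run$, so after stopping at $\tau_N$ the aggregated overshoot bounds scale like $\sqrt{N}$: they are neither ``independent of $N$'' nor do they make $\exof{\sup_\run\energy_{\run\wedge\tau_N}}$ uniformly bounded in $N$. Likewise, the multiplicative bias correction contributes a term of order $N\sum_\run\curr[\step]\curr[\bbound]$ if handled additively under the stopping, which would defeat the final Markov-in-$N$ step unless it is first normalized away Robbins\textendash Siegmund style (dividing out the accumulated multiplicative factors); and the assertion that each excursion starts at $\const_2+\bigoh(\curr[\step])$ is a statement about a random jump whose size is only controlled in $L^{\qexp}$ via $\sum_\run\curr[\step]^{2}\curr[\sbound]^{2}<\infty$, not deterministically. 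These defects are repairable, but they are exactly the complications the paper's proof avoids: by working with $\energy=\gauge\circ\hconj$ for a gauge $\gauge$ that vanishes on a sublevel set containing $\mirror^{-1}(\cpt)$ and grows like $\sqrt{\cdot}$ at infinity, the drift term becomes nonpositive \emph{everywhere} (no indicator term, no excursions, no stopping times) and the bias enters additively as $\curr[\step]\curr[\bbound]\cdot\bigoh(1)$, so a single supermartingale convergence argument plus coercivity of $\hconj$ finishes the proof.
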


Before proving \cref{prop:bounded}, it is important to note that subcoercivity only concerns the primitives of the game under study, and it is otherwise ``algorithm-agnostic''.
In this regard, given the primal-dual nature of the underlying dynamics \eqref{eq:MD}, \cref{prop:bounded} plays a major role in enabling the use of stochastic approximation tools and techniques (otherwise, the boundedness of $\points$ by itself does not suffice).

Moreover, from an operational viewpoint, \cref{prop:algorithms} makes the verification of \eqref{eq:sum} a trivial affair for the algorithms under study.
In particular, a joint application of \cref{cor:APT,prop:algorithms,prop:bounded,thm:Benaim} readily yields the general convergence result below:

\begin{theorem}
\label{thm:ICT}
Suppose that \crefrange{alg:SGA}{alg:EXP3} are run with step-size $\curr[\step] \propto 1/\run^{\stepexp}$, $\stepexp\in(1/2,1]$, and, where applicable, a sampling parameter $\curr[\mix] \propto 1/\run^{\mixexp}$ with $1-\stepexp < \mixexp < \stepexp-1/2$.
If $\game$ is subcoercive, then:
\begin{enumerate*}
[\itshape a\upshape)]
\item
$\curr[\dstate]$ converges to an \ac{ICT} set of \eqref{eq:MD} \acl{wp1};
and, in addition,
\item
if the players' mirror map $\mirror$ is interior-valued, the induced sequence of play $\curr = \mirror(\curr[\dstate])$ converges \acl{wp1} to an \ac{ICT} set of the primal flow $\pflow$ on $\points$.
\end{enumerate*}%
\end{theorem}

We then have the following consequences for potential and zero-sum games (both stated for simplicity under the assumption that $\mirror$ is interior-valued): 

\begin{corollary}
\label{cor:potential}
If $\game$ admits a subcoercive potential, $\curr$ converges to a component of critical points of $\game$ \ac{wp1}.
In particular, if the potential is concave, $\curr$ converges to the set of \aclp{NE} of $\game$.
\end{corollary}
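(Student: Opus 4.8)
The plan is to use the game's potential $\pot$ as a strict Lyapunov function for the mean dynamics \eqref{eq:MD} and then invoke the limit-set characterization behind \cref{thm:Benaim}. First, since $\game$ admits a subcoercive potential it is in particular subcoercive (this is exactly the potential-game example following \cref{def:subco}), so \cref{thm:ICT} applies: under the stated step-size and sampling schedule, the sequence $\curr = \mirror(\curr[\dstate])$ converges \ac{wp1} to some \ac{ICT} set $\limset$ of \eqref{eq:MD}, with $\curr[\dstate]$ bounded. Recall also that $\vecfield = \nabla\pot$ in a potential game.

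Next, I would show that along every solution orbit $\dtrajof{\time}$ of \eqref{eq:MD}, with primal projection $\trajof{\time} = \mirror(\dtrajof{\time})$, the map $\time\mapsto\pot(\trajof{\time})$ is non-decreasing, and is constant on an interval only if $\trajof{\time}$ is constant there. This follows from $\tfrac{d}{d\time}\pot(\trajof{\time}) = \braket{\nabla\pot(\trajof{\time})}{\difftrajof{\time}} = \braket{\vecfield(\trajof{\time})}{\difftrajof{\time}}$: using $\diffdtrajof{\time} = \vecfield(\trajof{\time})$ and $\mirror = \nabla\hconj$ (so that $\difftrajof{\time} = \mathrm{D}\mirror(\dtrajof{\time})\diffdtrajof{\time}$ with $\mathrm{D}\mirror = \nabla^{2}\hconj \mgeq 0$), the right-hand side equals $\braket{\diffdtrajof{\time}}{\mathrm{D}\mirror(\dtrajof{\time})\diffdtrajof{\time}} \geq 0$, vanishing precisely when $\difftrajof{\time} = 0$ (the possible nonsmoothness of $\mirror$ is harmless here, cf.\ the mirror-map estimates in \cref{app:mirror}). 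I would then let $\Lambda_{0} = \setdef{\dpoint\in\dspace}{\mirror(\flowof{\time}{\dpoint}) = \mirror(\dpoint)\ \text{for all}\ \time\geq0}$ be the set of dual states whose primal orbit is at rest, and show $\mirror(\Lambda_{0})\subseteq\crit\game$: if $\point^{\ast} = \mirror(\dpoint^{\ast})$ with $\dpoint^{\ast}\in\Lambda_{0}$, then $\diffdtrajof{\time}\equiv\vecfield(\point^{\ast})$ along the orbit, so $\dpoint^{\ast} + \time\,\vecfield(\point^{\ast}) \in \subd\hreg(\point^{\ast})$ for all $\time\geq0$; since $\subd\hreg(\point^{\ast})$ is a translate of the closed cone $\ncone_{\points}(\point^{\ast})$ (immediate when $\point^{\ast}\in\relint\points$, reducing to the Euclidean projector otherwise, cf.\ \cref{sec:primaldual}), dividing by $\time$ and letting $\time\to\infty$ gives $\vecfield(\point^{\ast})\in\ncone_{\points}(\point^{\ast})$, i.e.\ $\point^{\ast}$ solves \eqref{eq:SVI} and so lies in $\crit\game$.

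Then $\energy \defeq \sup_{\points}\pot - \pot\circ\mirror$ is a nonnegative Lyapunov function for $\Lambda_{0}$: by the previous paragraph it is non-increasing along all orbits, constant along orbits inside $\Lambda_{0}$, and strictly decreasing along forward orbits issuing from $\dspace\setminus\Lambda_{0}$; moreover $\energy(\Lambda_{0}) \subseteq \sup_{\points}\pot - \pot(\crit\game)$, which has empty interior in $\R$ (the one point needing work — see below). By the standard limit-set principle for Lyapunov functions (see, e.g., \citet{Ben99}), every \ac{ICT} set of \eqref{eq:MD} is then contained in $\Lambda_{0}$, so $\mirror(\limset)\subseteq\crit\game$; since $\limset$ is connected (\ac{ICT} sets are connected) and $\mirror$ continuous, $\mirror(\limset)$ is a connected subset of $\crit\game$ and hence lies in a single connected component $\cvx$. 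As $\curr[\dstate]$ is bounded, $\curr = \mirror(\curr[\dstate]) \to \mirror(\limset)\subseteq\cvx$, which is the first assertion. For the ``in particular'': if $\pot$ is concave, a coordinatewise maximizer of $\pot$ over $\points$ is a global one, so \eqref{eq:pot} (which identifies unilateral best responses with coordinatewise maximizers of $\pot$) gives $\crit\game = \argmax_{\points}\pot = \Nash(\game)$, a convex — hence connected — set, i.e.\ a single component; therefore $\curr\to\Nash(\game)$ \ac{wp1}.

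The step I expect to be the main obstacle is verifying that $\pot(\crit\game)$ has empty interior: this is trivial in the concave case (where $\pot$ is constant on $\crit\game$), but in general it is a Sard-type statement, and Sard's theorem needs more regularity than the $C^{1,1}$ afforded by \cref{asm:game} once $\dim\points_{\play}\geq2$ — so one must instead restrict $\pot$ to the relative interiors of the faces of $\points$, on each of which $\crit\game$ is cut out by the vanishing of a tangential gradient, thereby lowering the effective dimension (cleanest when $\points$ is polyhedral, as in all the concrete games of \cref{sec:prelims}). The other point requiring care — and the reason the rest set $\Lambda_{0}$ has to be captured via the subdifferential of $\mirror$ rather than via an autonomous system on $\points$ — is the primal-dual mismatch of \cref{sec:primaldual}: \eqref{eq:MD} evolves on $\dspace$, and $\curr[\dstate]$ may (and, for equilibria on $\bd\points$, must) diverge even while $\curr$ converges.
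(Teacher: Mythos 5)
Your overall skeleton coincides with the paper's: invoke \cref{thm:ICT} (subcoercivity of the potential giving boundedness of $\curr[\dstate]$ and \ac{wp1} convergence of $\curr$ to an \ac{ICT} set), and then argue that the \ac{ICT} sets of the potential dynamics consist of critical points. The difference is that the paper simply \emph{cites} this last characterization (``the only \ac{ICT} sets of potential games \dots are their sets of critical points, see \cite{Ben99,MZ19}''), whereas you attempt to re-derive it via the primal-dual Lyapunov function $\energy = \sup_{\points}\pot - \pot\circ\mirror$ — and it is exactly there that your argument has a genuine gap, which you yourself flag but do not close. The limit-set principle you want to invoke (Benaïm's Lyapunov-function criterion for \ac{ICT} sets) requires $\energy(\Lambda_{0})$, equivalently $\pot(\crit\game)$, to have empty interior in $\R$. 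Under \cref{asm:game} the potential is only $C^{1,1}$, and Sard's theorem genuinely fails at this regularity once the effective dimension is $\geq 2$ (Whitney's $C^{1}$ example); your proposed patch of restricting to relative interiors of faces does not help, because a face of $\points$ can itself be of dimension $\geq 2$ and the tangential-gradient equation cuts out the critical set of a function that is still only $C^{1}$ there. So, as written, the first assertion (convergence to a component of critical points for a general subcoercive potential) is not established by your argument without an additional hypothesis (isolated critical components, real-analytic or sufficiently smooth $\pot$, or directly assuming $\pot(\crit\game)$ has empty interior) — which is precisely the content the paper outsources to the literature.

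There is also a secondary problem with the Lyapunov step itself: your claim that $\energy$ is \emph{strictly} decreasing along every forward orbit issuing from $\dspace\setminus\Lambda_{0}$ fails for surjective mirror maps. In the paper's own example of \cref{sec:primaldual} (Euclidean projection on $[0,\infty)$), a primal orbit can sit at a boundary point that is \emph{not} critical for a whole time interval while the dual state drifts, during which $\pot\circ\mirror$ is constant; such an initial condition lies outside $\Lambda_{0}$, yet $\energy$ does not decrease immediately, so $\energy$ is not a Lyapunov function for $\Lambda_{0}$ in the sense required by the cited principle (the ``nonsmoothness of $\mirror$ is harmless'' remark also glosses over the fact that $\mirror=\nabla\hconj$ is merely Lipschitz, so the chain-rule identity with $\mathrm{D}\mirror\mgeq 0$ only holds a.e.). Your rest-set argument via the recession cone of $\subd\hreg(\point^{\ast})$ is correct, and the ``in particular'' concave case goes through as you describe (there $\crit\game=\argmax_{\points}\pot=\Nash(\game)$ is convex and $\pot$ is constant on it, so the empty-interior issue disappears); but the general statement needs either the missing Sard-type input or the paper's citation.
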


\begin{corollary}
\label{cor:minmax}
Suppose that $\game$ is a strictly convex-concave min-max game with an interior equilibrium $\eq\in\relint\points$.
Then $\curr$ converges to $\eq$ \ac{wp1}.
\end{corollary}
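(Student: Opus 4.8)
The plan is to certify that $\game$ is subcoercive so that \cref{thm:ICT} applies, and then to pin down the \ac{ICT} sets of \eqref{eq:MD} by means of a Lyapunov function living on the dual space $\dspace$. For the first part: since $\minmax$ is strictly convex-concave, $\game$ is strictly monotone, so $\eq$ is its unique \acl{NE} and it is \acl{GVS}; in particular $\braket{\vecfield(\point)}{\point - \eq} < 0$ for every $\point \in \points \setminus \{\eq\}$. Because $\eq \in \relint\points$, we may choose a closed ball $\cpt \subseteq \relint\points$ centered at $\eq$, and then \eqref{eq:subco} holds with reference point $\base = \eq$, the displayed inner product being strictly negative --- hence $\leq 0$ --- on all of $\points \setminus \cpt$. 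Thus $\game$ is subcoercive and \cref{thm:ICT} applies: the dual iterates $\curr[\dstate]$ produced by \eqref{eq:MRM} are bounded (\cref{prop:bounded}), form an \ac{APT} of \eqref{eq:MD} (\cref{cor:APT}), and therefore converge \ac{wp1} to some compact \ac{ICT} set $\dset \subseteq \dspace$ of \eqref{eq:MD}.

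The heart of the argument is to show that every \ac{ICT} set $\dset$ of \eqref{eq:MD} satisfies $\mirror(\dset) = \{\eq\}$ --- and this is where the primal--dual mismatch of \cref{sec:primaldual} must be confronted, since \eqref{eq:MD} runs on $\dspace$ while $\eq$ lives in $\points$. I would take as Lyapunov candidate the Fenchel coupling $\fench(\eq,\dpoint) = \hreg(\eq) + \hconj(\dpoint) - \braket{\dpoint}{\eq}$. By Fenchel--Young and strong convexity of $\hreg$, one has $\fench(\eq,\cdot) \geq 0$ with equality precisely on $\Lambda \defeq \mirror^{-1}(\eq)$, and $\fench(\eq,\cdot)$ is $C^{1}$ with $\nabla_{\dpoint}\fench(\eq,\dpoint) = \mirror(\dpoint) - \eq$ (as $\hconj$ is differentiable with $\nabla\hconj = \mirror$). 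Differentiating along an orbit of \eqref{eq:MD} and writing $\point = \mirror(\dpoint)$,
\begin{equation*}
\ddt \fench(\eq,\dpoint)
	= \braket{\mirror(\dpoint) - \eq}{\dot\dpoint}
	= \braket{\vecfield(\point)}{\point - \eq}
	\leq 0,
\end{equation*}
strictly so whenever $\point \neq \eq$, i.e.\ whenever $\dpoint \notin \Lambda$, by global variational stability. Hence $\Lambda$ is forward-invariant, $\fench(\eq,\cdot)$ is a strict Lyapunov function for $\Lambda$, and $\fench(\eq,\Lambda) = \{0\}$ has empty interior in $\R$; the Lyapunov criterion for \ac{ICT} sets (\cf \citet[Prop.~6.4]{Ben99}) then forces $\dset \subseteq \Lambda$, i.e.\ $\mirror(\dset) = \{\eq\}$. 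Since $\mirror$ is Lipschitz continuous and $\dset$ is compact, it follows that $\dist(\curr,\eq) = \dist(\mirror(\curr[\dstate]),\mirror(\dset))$ is bounded by a constant times $\dist(\curr[\dstate],\dset) \to 0$, so $\curr \to \eq$ \ac{wp1}.

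I expect the main obstacle to be the middle step: carrying out the entire stability analysis on $\dspace$ (where, unlike on $\points$, the dynamics are always well-posed), correctly matching the degenerate set $\Lambda = \mirror^{-1}(\eq)$ with the zero level set of the Fenchel coupling, and verifying the two hypotheses of Benaïm's Lyapunov/\ac{ICT} proposition --- strict decrease of $\fench(\eq,\cdot)$ off $\Lambda$ (which is exactly global variational stability) and emptiness of the interior of $\fench(\eq,\Lambda)$. The subcoercivity check in the first step and the Lipschitz transfer from $\curr[\dstate]$ to $\curr$ in the last are routine.
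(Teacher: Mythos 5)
Your proposal is correct, and its skeleton is exactly the paper's: verify subcoercivity (the paper's min-max example already notes that any quasi-convex/quasi-concave game with an interior equilibrium satisfies \eqref{eq:subco}), invoke \cref{thm:ICT} to get almost sure convergence of $\curr[\dstate]$ to an \ac{ICT} set of \eqref{eq:MD}, and then identify the \ac{ICT} sets. The only difference is that the paper dispatches the identification step by citation (``the only \ac{ICT} sets of strictly convex-concave games are their sets of critical points, see \cite{Ben99,MZ19}''), whereas you prove it: your Fenchel-coupling Lyapunov function $\fench(\eq,\dpoint)$ with $\dot\fench = \braket{\vecfield(\point)}{\point-\eq} < 0$ off $\mirror^{-1}(\eq)$, combined with Benaïm's Lyapunov/\ac{ICT} criterion, is precisely the machinery the paper itself deploys later in \cref{prop:GVS}, so your write-up is a self-contained version of the same argument rather than a genuinely different route. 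All the delicate points check out: $\eq\in\relint\points$ guarantees $\mirror^{-1}(\eq)\neq\varnothing$ and that strict monotonicity plus \eqref{eq:SVI} yields \eqref{eq:VS} globally, the zero level set of $\fench(\eq,\cdot)$ is exactly $\mirror^{-1}(\eq)$ by Fenchel--Young and \cref{lem:mirror}, and the Lipschitz transfer from $\dist(\curr[\dstate],\dset)\to0$ to $\curr\to\eq$ is sound.
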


\Cref{cor:potential,cor:minmax} follow respectively from the fact that the only \ac{ICT} sets of potential games and strictly convex-concave games are their sets of critical points, see \eg \cite{Ben99,MZ19} and references therein.
As for \cref{thm:ICT} (which we prove below), it should not be viewed as an equilibrium convergence guarantee, but as a characterization of what types of behaviors may arise in the limit of a game-theoretic learning process \textendash\ equilibrium and non-equilibrium alike.
However, because of the subcoercivity requirement, this characterization only extends to limit sets that are contained in the relative interior of the players' action spaces;
games with boundary solutions require a different treatment, which we undertake in the next two sections.

\subsection{Technical proofs}
\label{sec:proofs-ICT}

We conclude this section with the proof of \cref{prop:bounded,thm:ICT};
we begin with the latter, which is more conceptual and less technical.

\begin{proof}[Proof of \cref{thm:ICT}]
By \cref{prop:algorithms,prop:bounded}, $\curr[\dstate]$ is a bounded \ac{APT} of \eqref{eq:MD}, so the first part of the theorem follows directly from \cref{thm:Benaim}.
As for the second, since $\mirror$ is Lipschitz continuous (\cf \cref{lem:mirror} in \cref{app:mirror}), the sequence $\curr = \mirror(\curr[\dstate])$ is also bounded and, in addition, we have:
\begin{align}
\norm{\papt{\time + \timealt} - \pflowof{\timealt}{\papt{\time}}}
	&= \norm{\mirror(\apt{\time + \timealt}) - \pflowof{\timealt}{\mirror(\apt{\time})}}
	\notag\\
	&= \norm{\mirror(\apt{\time + \timealt}) - \mirror(\dflowof{\timealt}{\apt{\time}})}
	\leq (1/\hstr) \dnorm{\apt{\time + \timealt} - \dflowof{\timealt}{\apt{\time}}}
\end{align}
where $\hstr$ is the strong convexity modulus of $\hreg$ and we used the fact that $\pflow_{\time} \circ \mirror = \mirror \circ \dflow_{\time}$ for all $\time$ (by \cref{prop:conjflow}).
This implies that $\curr$ is an \ac{APT} of $\pflow$,%
\footnote{We are grateful to V.~Boone for pointing out this simple argument.}
so our claim follows from the limit set theorem of \citet[Theorem 0.1]{BH96}.
\end{proof}

We are thus left to prove the boundedness guarantee of \cref{prop:bounded}.

\begin{proof}[Proof of \cref{prop:bounded}]
Our proof hinges on the construction of a suitable ``energy function'' $\energy\from\dspace\to\R_{+}$ for \eqref{eq:method}.
To define it, we will assume for simplicity \textendash\ and without loss of generality \textendash\ that $\points$ has nonempty topological interior in $\pspace$ (which can be achieved by redefining $\pspace$ to be the affine hull of $\points$), that the reference point $\base$ in \cref{def:subco} is the origin $0 \in \pspace$, and that $\hreg(\base) = 0$ (which can be achieved by a simple translation).

With this in mind, let $\hconj(\dpoint) = \max_{\point\in\points}\{\braket{\dpoint}{\point} - \hreg(\point)\}$ denote the convex conjugate of $\hreg$.
Then, by \cref{lem:Fench} in \cref{app:mirror}, we have
\begin{equation}
\label{eq:hconj-bounds}
(\hstr/2) \norm{\mirror(\dpoint)}^{2}
	\leq \hconj(\dpoint)
	\leq -\min\hreg
		+ \braket{\dpoint}{\mirror(\dpoint)}
		+ (2/\hstr) \dnorm{\dpoint}^{2}
	\quad
	\text{for all $\dpoint\in\dspace$}
\end{equation}
where we note that $\min\hreg \leq \hreg(\base) = 0$ by assumption.
Since $\hreg$ is lower-semicontinuous, we have $\hreg = \hreg^{\ast\ast}$ by the Fenchel\textendash Moreau theorem.
In addition, the Moreau\textendash Rockafellar theorem \citep[Theorem~4.17]{BC17} implies that $\hconj$ is coercive because it can be written as $\hconj(\dpoint) = \hconj(\dpoint) - \braket{\dpoint}{\base}$ and $0 = \base \in\relint\points = \relint\dom\hreg^{\ast\ast}$ by subcoercivity.
Finally, since $\points$ has nonempty interior, it follows that the polar cone $\pcone(\point)$ is trivially $0$ for all $\point\in\relint\points$, so the subdifferential $\subd\hreg$ of $\hreg$ is compact-valued on $\cpt \subseteq \relint\points$.
Thus, by the upper hemicontinuity of the subdifferential and the compactness of $\cpt$, we deduce that the image $\dset = \subd\hreg(\cpt)$ of $\cpt$ under $\subd\hreg$ is compact, \cf \cite[p.~201]{HUL01}.
Hence, by the coercivity of $\hconj$ and the fact that $\mirror(\dpoint) = \point$ if and only if $\subd\hreg(\point) \ni \dpoint$ (\cf \cref{lem:mirror} in \cref{app:mirror}), there exists some $\lvl>0$ such that $\hconj(\dpoint) \leq \lvl$ whenever $\mirror(\dpoint) \in \cpt$, \ie $\mirror^{-1}(\cpt)$ is contained in the $\lvl$-sublevel set $\sublvl{\lvl}{\hconj}$ of $\hconj$.

With all this said and done, fix some $\alt\lvl > \lvl$ and let
\begin{equation}
\label{eq:energy-gauge}
\energy(\dpoint)
	= \gauge(\hconj(\dpoint))
	\quad
	\text{for all $\dpoint\in\dspace$}
\end{equation}
where $\gauge\from\R_{+}\to\R_{+}$ is a $C^{2}$-smooth ``gauge function'' with the following properties:
\begin{enumerate*}
[\itshape i\hspace*{.5pt}\upshape)]
\item
$\gauge(u) = 0$ for $u\leq \lvl$;
\item
$\gauge(u) = \sqrt{u}$ for $u \geq \alt\lvl$;
\item
$\gauge'(u) \geq 0$ and $\gauge''(u) \leq 1$ for all $u\in\R_{+}$.%
\footnote{That such a function exists is an exercise in the construction of aproximate identities, which we omit.}
\end{enumerate*}
Then, setting $\point = \mirror(\dpoint)$ and differentiating, we readily obtain
\begin{equation}
\label{eq:energy-diff}
\nabla\energy(\dpoint)
	= \gauge'(\hconj(\dpoint)) \cdot \nabla\hconj(\dpoint)
	= \gauge'(\hconj(\dpoint)) \cdot \point
	\quad
	\text{for all $\dpoint\in\dspace$}
\end{equation}
and hence, by the smoothness properties of $\gauge$ and $\hconj$, there exists some constant $\Const_{2}\geq0$ such that
\begin{align}
\label{eq:energy-incr}
\energy(\dpoint + \dvec)
	= \energy(\dpoint)
		+ \gauge'(\hconj(\dpoint)) \cdot \braket{\dvec}{\point}
		+ \Const_{2} \dnorm{\dvec}^{2}
	\quad
	\text{for all $\dpoint,\dvec\in\dspace$}.
\end{align}
Therefore, combining \cref{eq:energy-diff,eq:energy-incr} and letting $\curr[\energy] = \energy(\curr[\dstate])$, we obtain
\begin{align}
\label{eq:energy-bound1}
\next[\energy]
	&\leq \curr[\energy]
		+ \gauge'(\hconj(\curr[\dstate])) \cdot \braket{\curr[\signal]}{\curr}
		+ \Const_{2} \dnorm{\curr[\signal]}^{2}
	\leq \curr[\energy]
		+ \curr[\varphi] \braket{\curr[\bias] + \curr[\noise]}{\curr}
		+ \Const_{2} \dnorm{\curr[\signal]}^{2}
\end{align}
where we set $\curr[\varphi] = \gauge'(\hconj(\curr[\dstate]))$ and we used the fact that $\gauge(\hconj(\dpoint)) \cdot \braket{\vecfield(\point)}{\point} \leq 0$ for all $\dpoint\in\dspace$ (the latter being a consequence of subcoercivity and the defining properties of $\gauge$).
Accordingly, conditioning on $\curr[\filter]$ and taking expectations, we finally get
\begin{equation}
\label{eq:energy-bound2}
\exof{\next[\energy] \given \curr[\filter]}
	\leq \curr[\energy]
		+ \curr[\step] \curr[\varphi] \curr[\bbound] \norm{\curr}
		+ \curr[\step]^{2} \curr[\sbound]^{2},
\end{equation}
where we used the Cauchy-Schwarz inequality to bound $\braket{\curr[\bias]}{\curr}$ from above by $\curr[\bbound] \norm{\curr}$ (recall also that $\exof{\curr[\noise] \given \curr[\filter]} = 0$ by definition).

Now, let $\curr[\eps] = \curr[\step] \curr[\varphi] \curr[\bbound] \norm{\curr} + \curr[\sbound]^{2}$ denote the ``residual'' term in \eqref{eq:energy-bound2}, and consider the auxiliary process $\curr[E] = \next[\energy] + \sum_{\runalt=\run+1}^{\infty} \iter[\eps]$.
By \eqref{eq:energy-bound2}, we have $\exof{\curr[E] \given \curr[\filter]} \leq \curr[\energy] + \sum_{\runalt=\run}^{\infty} \curr[\eps] = \prev[E]$, \ie $\curr[E]$ is a supermartingale relative to $\curr[\filter]$.
Moreover, by \eqref{eq:hconj-bounds} and the definition of $\gauge$, we further have
\begin{equation}
\curr[\varphi]
	= \frac{1}{2\sqrt{\hconj(\curr[\dstate])}}
	\leq \frac{1}{\sqrt{2\hstr} \norm{\curr}}
	\quad
	\text{whenever $\hconj(\curr[\dstate]) \geq \alt\lvl$}
\end{equation}
so there exists some (deterministic) positive constant $\Const_{1}$ such that $\sup_{\run} \curr[\varphi] \norm{\curr} \leq \Const_{1}$.
We thus get
\begin{equation}
\sum_{\run=\start}^{\infty} \curr[\eps]
	\leq \Const_{1} \sum_{\run=\start}^{\infty} \curr[\step]\curr[\bbound]
		+ \Const_{2} \sum_{\run=\start}^{\infty} \curr[\step]^{2} \curr[\sbound]^{2}
	< \infty
\end{equation}
by the summability condition \eqref{eq:sum}.
This shows that $\exof{\sum_{\run} \curr[\eps]} < \infty$ and, in turn, that $\exof{\curr[E]} \leq \exof{\init[E]} < \infty$, \ie $\curr[E]$ is uniformly bounded in $L^{1}$.
Accordingly, by Doob's submartingale convergence theorem \citep[Theorem~2.5]{HH80}, it follows that $\curr[E]$ converges \acl{wp1} to some finite random limit $E_{\infty}$.
Since $\sum_{\run} \curr[\eps] < \infty$, this implies that $\curr[\energy] = \prev[\energy] - \sum_{\runalt=\run}^{\infty} \curr[\eps]$ also converges to some (random) finite limit \as.
Therefore, by the coercivity of $\energy$, we deduce that $\limsup_{\run} \dnorm{\curr[\dstate]} < \infty$ \ac{wp1}, as claimed.
\end{proof}

\section{Robust convergence and stable limit sets}
\label{sec:primal}

Even though \Cref{thm:ICT} provides a universal characterization of the long-run behavior of any algorithm of the general form \eqref{eq:method}, there are several issues that remain open, namely:
\begin{enumerate}
\item
How is the long-run behavior of an algorithm affected by a small perturbation in its initialization?
\item
Does the update structure of the gradient-like signals $\curr[\signal]$ affect the algorithm's end-state?
\item
Are some limit sets independent of the amount of information available to the players?
\end{enumerate}
In view of all this, the rest of this section will focus on whether we can identify a class of ``robust'' limit sets that satisfy the above desiderata.
We present and discuss our main results in \cref{sec:results} after some necessary definitions and prerequisites in \cref{sec:energy}.


\begin{figure}[t]
\footnotesize
\centering
\includegraphics[height=.425\textwidth]{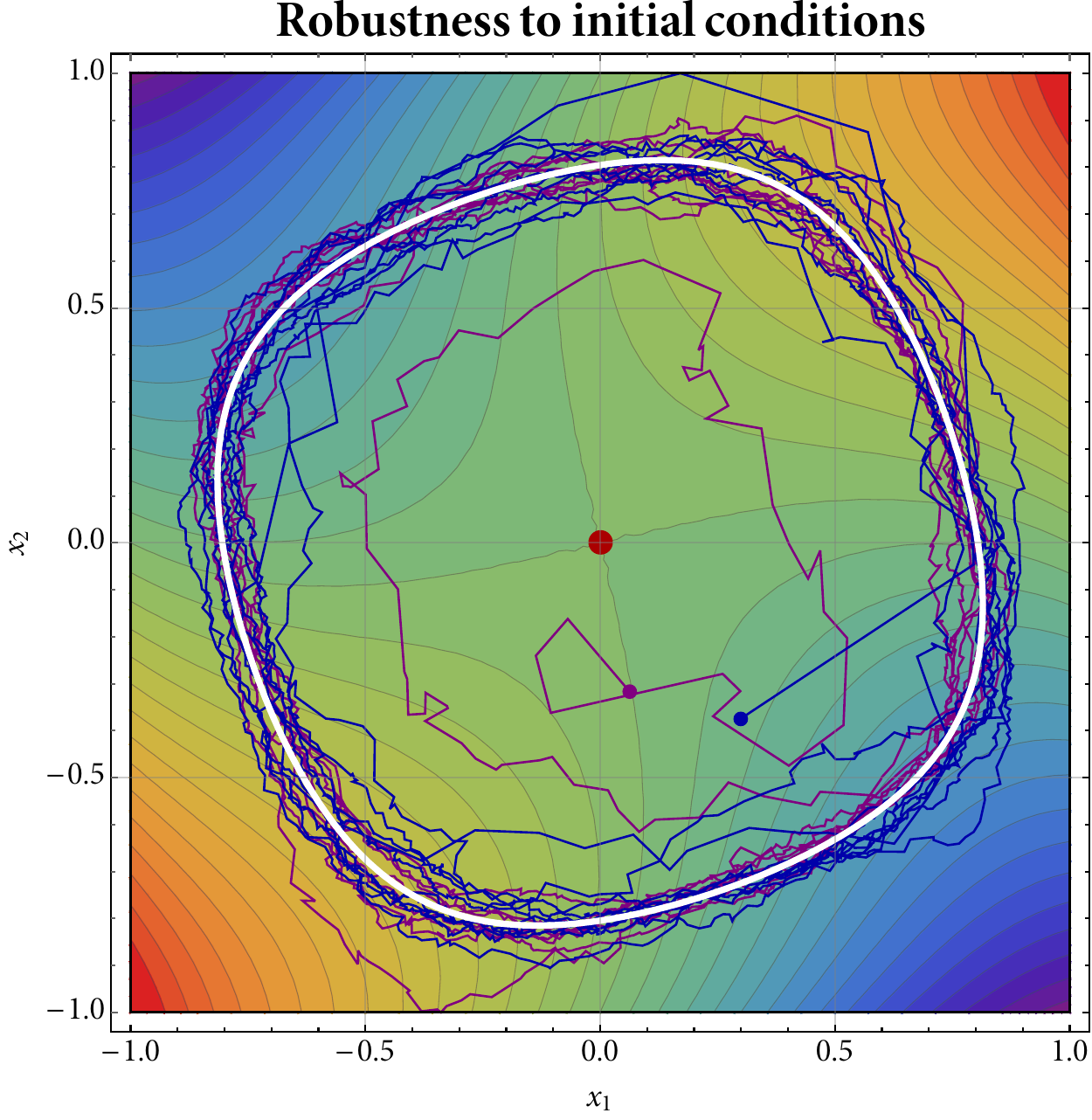}
\hspace{3em}
\includegraphics[height=.425\textwidth]{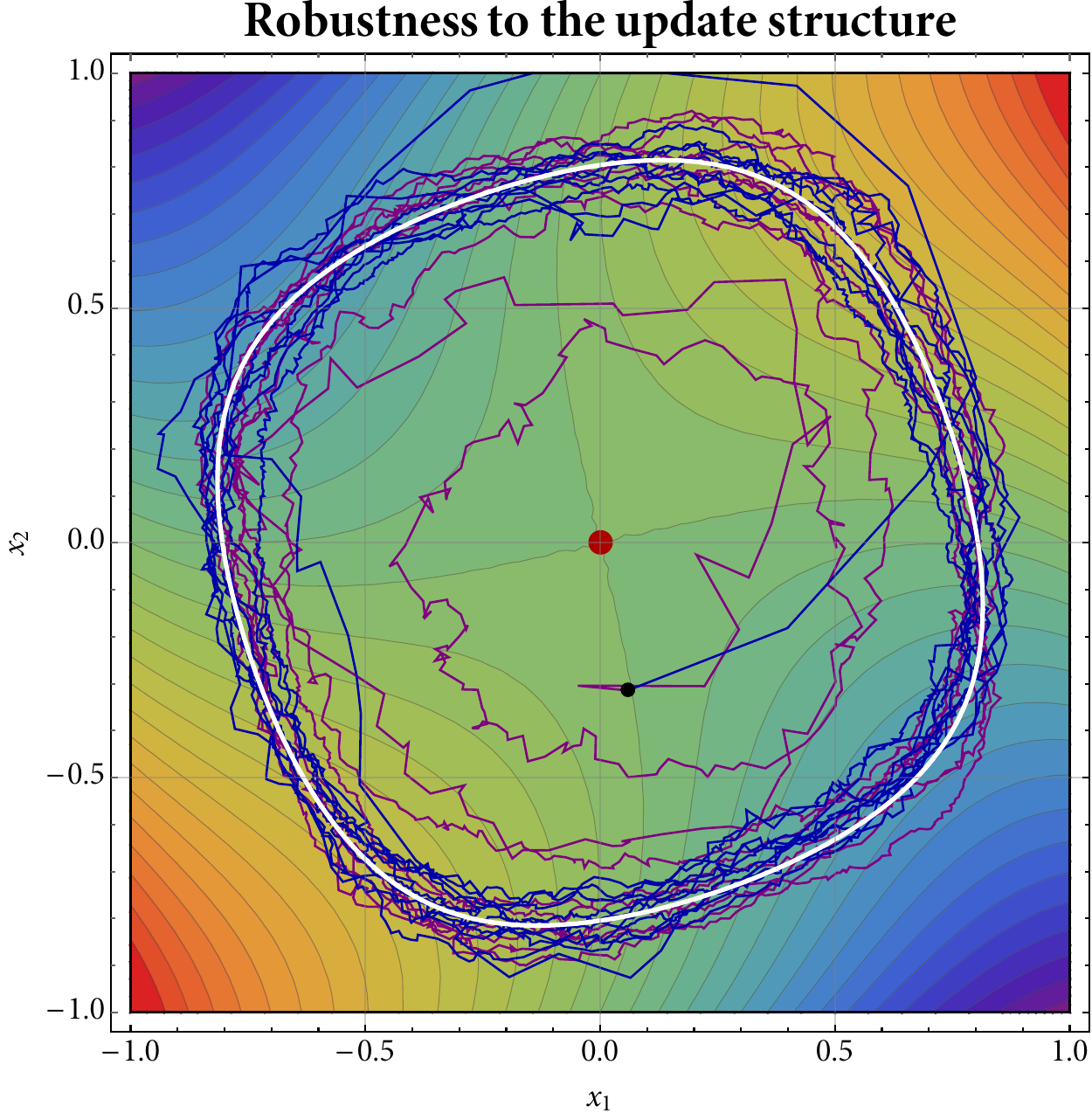}
\quad\
\caption{The long-run behavior of different online learning algorithms in the two-player min-max game defined by the loss function $\minmax(\minvar,\maxvar) = \minvar\maxvar + \eps \bracks{\phi(\maxvar) - \phi(\minvar)}$ with $\phi(z) = 2z^{2} - 4z^{4}$, $\eps \geq 0$, and $\minvar,\maxvar \in [-1,1]$.
The left plot shows two different initializations of \eqref{eq:SGA}, while the right plot displays the trajectories of \eqref{eq:SGA} and \eqref{eq:EG} from the same initialization.
In both cases, the existence of an attractor allows for robust predictions that are largely independent of the initialization or exact update structure of \eqref{eq:method}.
}
\label{fig:limits}
\end{figure}


\subsection{Stochastically attracting sets and energy functions}
\label{sec:energy}

In the theory of dynamical systems, the established way of analyzing such questions is via the notion of an attractor (\cf the relevant discussion in \cref{sec:general}).
Following \citet{NK76}, this notion can be adapted to our stochastic setting as follows:

\begin{definition}
\label{def:attract}
Let $\set$ be a nonempty closed subset of $\points$.
We say that $\set$ is \emph{stochastically attracting} under \eqref{eq:method} for a given tolerance level $\conf>0$ if there exists a neighborhood $\nhd$ of $\set$ in $\points$ such that
\begin{equation}
\label{eq:attract}
\probof{\text{$\curr$ converges to $\set$} \given \init\in\nhd}
	\geq 1 - \conf.
\end{equation}
\end{definition}

In the context of stochastic approximation algorithms, the requirement \eqref{eq:attract} is reminiscent of results guaranteeing convergence with positive probability toward an attractor.
Such guarantees are usually conditioned on the notion of \emph{attainability}, as pioneered by \citet{Ben97} and \citet{Duf97};%
\footnote{A point $\dpoint\in\dspace$ is said to be \emph{attainable} by $\curr[\dstate]$ if, for every neighborhood $\dnhd$ of $\dpoint$ in $\dspace$ and for all $\run\geq\start$, we have $\probof{\iter[\dstate]\in\dnhd \text{ for some $\runalt\geq\run$}} > 0$.}
however, in the present setting, there are two salient difficulties with this approach, both having to do with the primal-dual nature of the mean dynamics \eqref{eq:MD}.
On the one hand, if the players' mirror map $\mirror\from\dspace\to\points$ is surjective on the boundary of $\points$ (\eg as in the case of Euclidean projections), it is in general impossible to define a conjugate primal flow on $\points$ \textendash\ and hence, it is not possible to treat $\set$ as an attractor.
On the other hand, if $\mirror$ is interior-valued (that is, $\im\mirror = \relint\points$),
the defining \acl{RM} process $\curr[\dstate]$ may escape to infinity if $\mirror^{-1}(\set) = \varnothing$, and establishing attainability in this case can be as difficult as the original problem of proving \eqref{eq:attract} directly.

On account of all this, we will instead seek to establish \eqref{eq:attract} via a primal-dual variant of Lyapunov's direct method.
In particular, since we are interested in the attracting properties of subsets of the \emph{primal space} $\points\subseteq\pspace$, but the dynamics evolve in the \emph{dual space} $\dspace = \pspace^{\ast}$ of $\pspace$, our analysis will hinge on the following construction:

\begin{definition}
\label{def:energy}
Let $\set$ be a nonempty closed subset of $\points$.
We will say that $\energy\from\dspace\to[0,\infty)$ is a \emph{local energy function for $\set$ under \eqref{eq:MD} if}
\begin{enumerate*}
[\itshape a\upshape)]
\item
$\energy$ is Lipschitz continuous and smooth;
\item
$\mirror(\dpoint) \to \set$ if and only if $\energy(\dpoint) \to 0$;
and
\item
$\sup\setdef{\dot\energy(\dpoint)}{\loen < \energy(\dpoint) < \hien} < 0$
for all sufficiently small $\hien > \loen > \ground$.
\end{enumerate*}
In particular, if the last requirement holds for all $\hien \leq \sup\energy$, we will refer to $\energy$ as a \emph{global} energy function for $\set$.
\end{definition}

Informally, \cref{def:energy} posits that $\energy$ is smooth, positive-definite, and strictly decreasing along all nearby primal orbits $\trajof{\time} = \mirror(\dtrajof{\time})$ that do not lie in $\set$.
For concreteness, we provide below a series of representative examples that will play an essential part in the sequel.

\begin{example}
[Variational stability]
\label{ex:energy-VS}
Suppose that $\eq$ satisfies \eqref{eq:VS}, \ie $\braket{\vecfield(\point)}{\point - \eq} < 0$ for all $\point \neq \eq$ in some neighborhood $\nhd$ of $\eq$ in $\points$.
Then a suitable primal-dual measure of distance from $\eq$ is provided by the so-called ``Fenchel coupling'' \cite{MS16}
\begin{equation}
\label{eq:coupling}
\fench(\dpoint)
	= \hreg(\eq)
		+ \hconj(\dpoint)
		- \braket{\dpoint}{\eq}.
\end{equation}
The key property of this coupling is that, under \eqref{eq:MD}, we have
\begin{equation}
\label{eq:dFench}
\dot\fench(\dpoint)
	= \braket{\dot\dpoint}{\nabla\hconj(\dpoint)}
		- \braket{\dot\dpoint}{\eq}
	= \braket{\vecfield(\point)}{\point - \eq}
	< 0
	\quad
	\text{whenever $\point\in\nhd\exclude{\eq}$}
\end{equation}
where, in the penultimate step, we set $\point = \mirror(\dpoint)$ and we invoked \cref{lem:mirror} in \cref{app:mirror} to write $\mirror(\dpoint) = \nabla\hconj(\dpoint)$.
By the Fenchel-Young inequality, we also have $\fench(\dpoint) \geq 0$ with equality if and only if $\mirror(\dpoint) = \eq$ (\cf \cref{lem:Fench}), so $\fench(\dpoint)$ is a prime candidate for a local energy function.

To meet the entire range of requirements of \cref{def:energy}, we will need two further technical ingredients.
The first is a regularity assumption on $\hreg$, namely that
\begin{equation}
\label{eq:rec}
\tag{R}
\hreg(\curr[\point]) + \braket{\curr[\dpoint]}{\point - \curr[\point]}
	\to \hreg(\point)
\end{equation}
for all $\point\in\points$ and all sequences of primal points $\curr[\point]\to\point$ and subgradients $\curr[\dpoint] \in \subd\hreg(\curr[\point])$.
This condition simply posits that the first-order approximation of $\hreg(\point)$ from $\hreg(\curr[\point])$ is always accurate when $\curr[\point]\to\point$, a property which is satisfied by all examples of regularizers that we have considered so far;
for an in-depth discussion, \cf \citet{AIMM22} and references therein.

The second technicality is that $\fench$ must grow at most linearly in $\dnorm{\dpoint}$ in order to ensure the global Lipschitz continuity requirement of \cref{def:energy}.
To achieve this, it suffices to rescale $\fench$ for large values of $\point$ by means of the gauge function
\begin{equation}
\label{eq:gauge}
\gauge(z)
	= \begin{dcases*}
		z
			&\quad
			if $0 \leq z \leq 1$
		\\
		2 \sqrt{z} - 1
			&\quad
			if $z \geq 1$
	\end{dcases*}
\end{equation}
which ensures that $\gauge \circ \fench$ behaves like $\fench$ for small values of $\fench$, and like $\sqrt{\fench}$ for large values of $\fench$.
We then have the following result:

\begin{lemma}
\label{lem:energy-VS}
Suppose that $\eq$ satisfies \eqref{eq:VS} and the players' regularizers satisfy \eqref{eq:rec}.
Then, with notation as above, the function $\energy(\dpoint) = \gauge(\fench(\dpoint))$ is a local energy function for $\eq$ under \eqref{eq:MD};
moreover, if $\eq$ is globally stable, $\energy$ is a global energy function for $\eq$ under \eqref{eq:MD}.
\end{lemma}

To streamline our presentation, we defer the proof of \cref{lem:energy-VS} to \cref{app:mirror}.
We only note here that, by the relevant discussion in \cref{sec:prelims}, the above yields an energy function for a wide class of games, including
\begin{enumerate*}
[\itshape a\upshape)]
\item
strictly monotone games (a global one in this case);
\item
games with second-order stationary equilibria as per \eqref{eq:SOS};
and
\item
all finite games admitting a strict \acl{NE}.
\end{enumerate*}
\endenv
\end{example}

\begin{example}
[Convex optimization]
\label{ex:energy-cvx}
Consider the convex minimization problem $\min_{\point\in\points} \obj(\point)$
where $\obj\from\points\to\R$ is a smooth convex function with a nonempty, compact set of minimizers $\set = \argmin\obj$.
To get a primal-dual measure of distance from $\set$, we may extend the definition of the coupling \eqref{eq:coupling} to the current setting as
\begin{equation}
\label{eq:coupling-set}
\fench(\dpoint)
	= \hconj(\dpoint) - \hconj_{\set}(\dpoint)
\end{equation}
where $\hconj_{\set}(\dpoint) = \max_{\point\in\set} \{\braket{\dpoint}{\point} - \hreg(\dpoint)\}$ denotes the convex conjugate of $\hreg$ relative to $\set$.
As we show below, rescaling $\fench$ by the gauge function \eqref{eq:gauge} yields a global energy function for $\set$ under \eqref{eq:MD}:

\begin{lemma}
\label{lem:energy-cvx}
Suppose that \eqref{eq:rec} holds.
Then, with notation as above, the function $\energy(\dpoint) = \gauge(\fench(\dpoint))$ is a global energy function for $\set = \argmin\obj$.
\end{lemma}

\noindent
As before, to keep the discussion going, we defer the proof of \cref{lem:energy-cvx} to \cref{app:mirror}.
\endenv
\end{example}

\begin{example}
[Discoordination games]
\label{ex:energy-discord}
As a last example, consider a two-player discoordination game with payoff functions
$\pay_{1}(\point_{1},\point_{2}) = (\point_{1} - \point_{2})^{2}/2$
and
$\pay_{2}(\point_{1},\point_{2}) = (\point_{1}+\point_{2})^{2}/2$
for $\point_{1},\point_{2}\in[-1,1]$.
This game admits five critical points, the origin $(0,0)$ and the four vertices $\{\pm1,\pm1\}$ of $\points = [-1,1]^{2}$.
None of these critical points is an equilibrium:
the origin is unstable to deviations by both players, whereas the vertices are unstable to deviations by one of the players (but not the other).
Given the lack of an equilibrium in pure strategies (a standard feature of discoordination games), the players' limiting behavior is quite difficult to predict;
however, since the critical point at $(0,0)$ is unstable for both players, it is reasonable to expect that it should be selected against.

To examine this issue in the context of \eqref{eq:MD}, consider for concreteness the mirror map $\mirror_{\play}(\dpoint_{\play}) = \tanh(\dpoint_{\play}/2)$ that is induced by the entropic regularizer $\hreg_{\play}(\point_{\play}) = (1-\point_{\play}) \log(1-\point_{\play}) + (1+\point_{\play})\log(1+\point_{\play})$.%
\footnote{For the general case, take $\energy(\dpoint) = [\hconj(\dpoint) - \inf\hconj]^{-1}$.}
In this case, it is straightforward to check that $\energy(\dpoint_{1},\dpoint_{2}) = 2 \sech(\dpoint_{1}/2)\sech(\dpoint_{2}/2)$ is an (almost global) local energy function for the four-corner set $\set = \{-1,1\}\times\{-1,1\}$.
As a result, the sequence of play generated by \eqref{eq:method} is expected to spend most time near one of these points, \cf \cref{fig:discord}.
\endenv
\end{example}


\begin{figure}[t]
\footnotesize
\centering
\subcaptionbox{Gradient ascent}{\includegraphics[width=.3\textwidth]{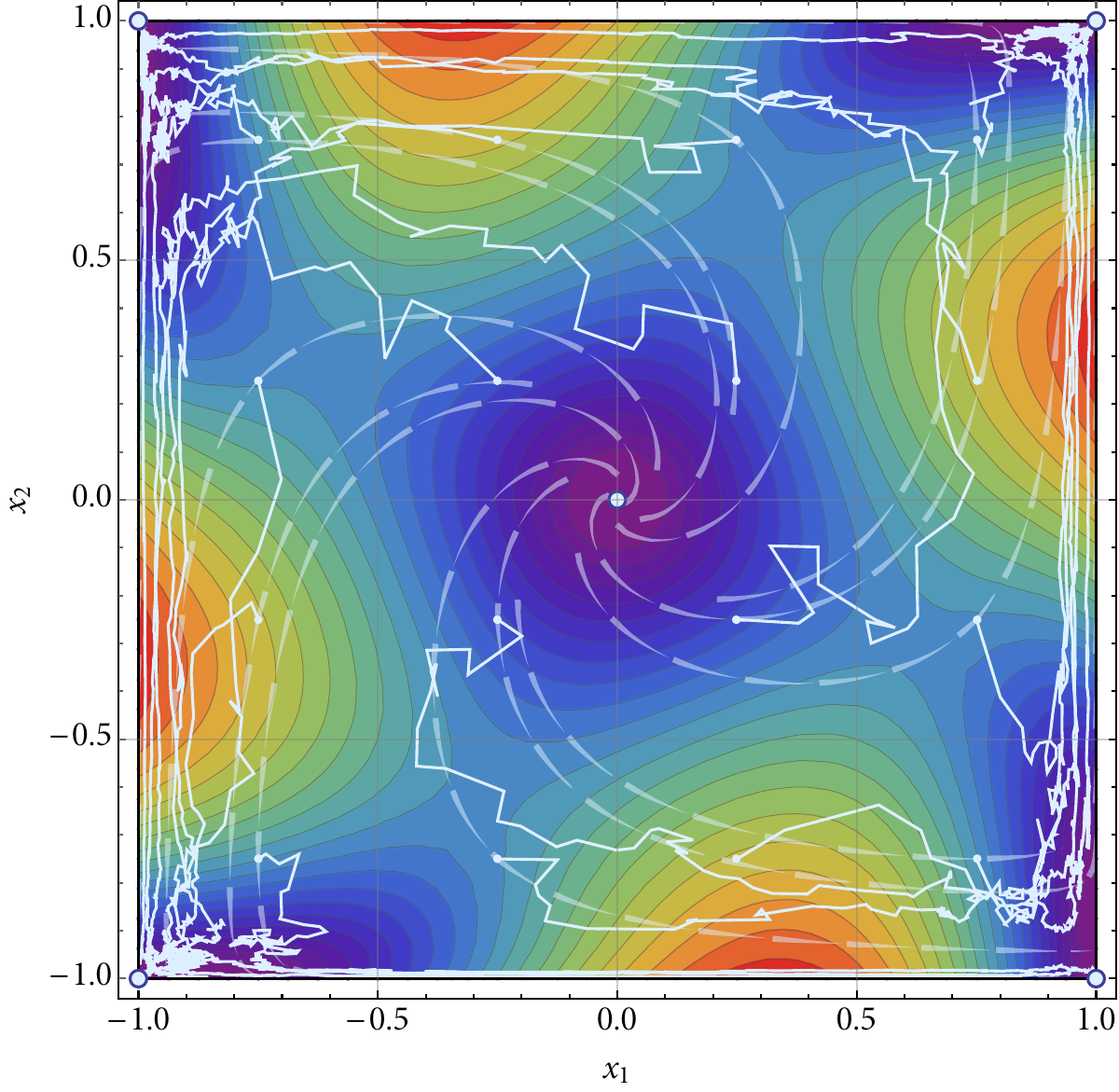}}
\hfill
\subcaptionbox{\Acl{EG}}{\includegraphics[width=.3\textwidth]{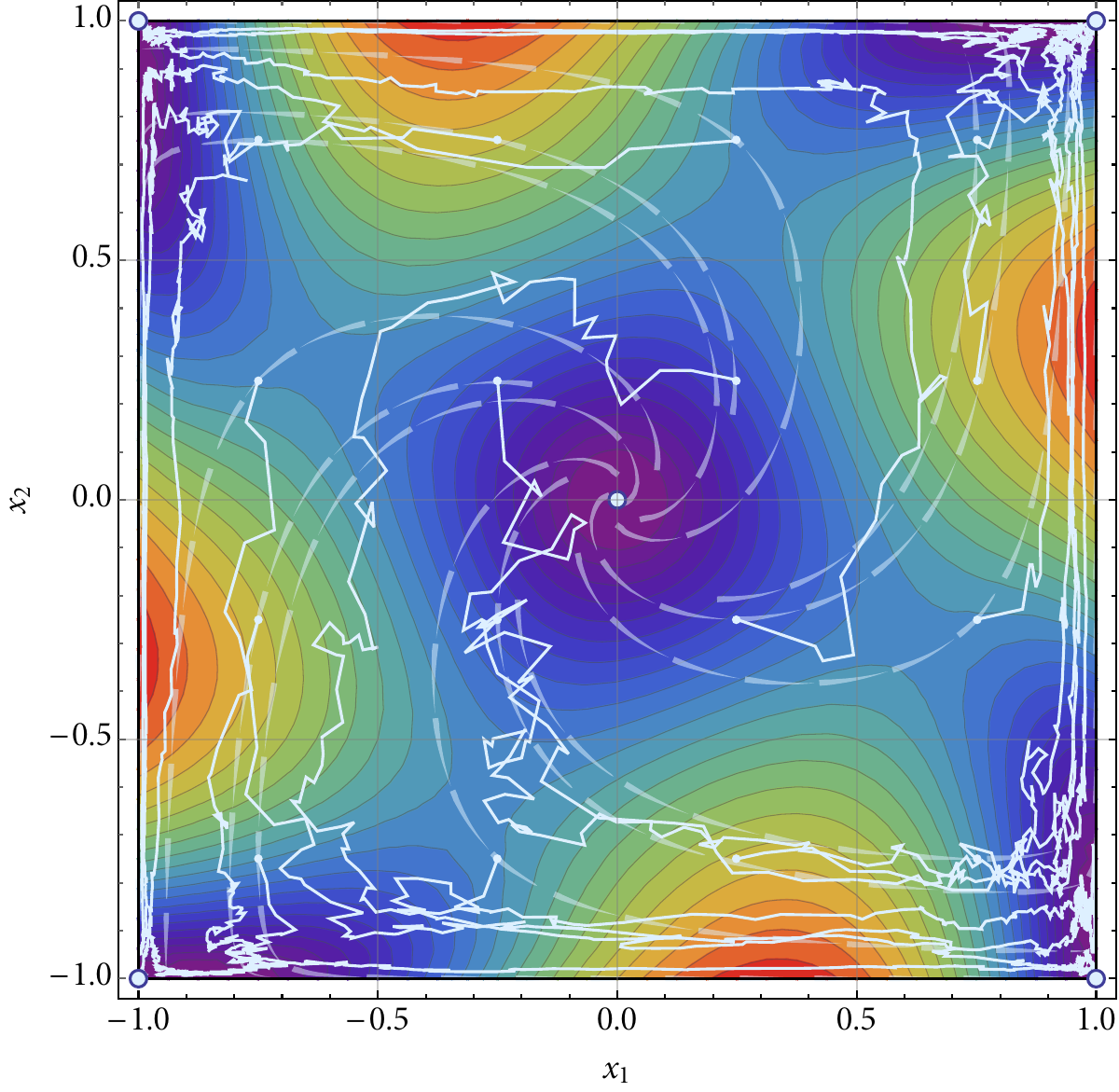}}
\hfill
\subcaptionbox{\Acl{OG}}{\includegraphics[width=.3\textwidth]{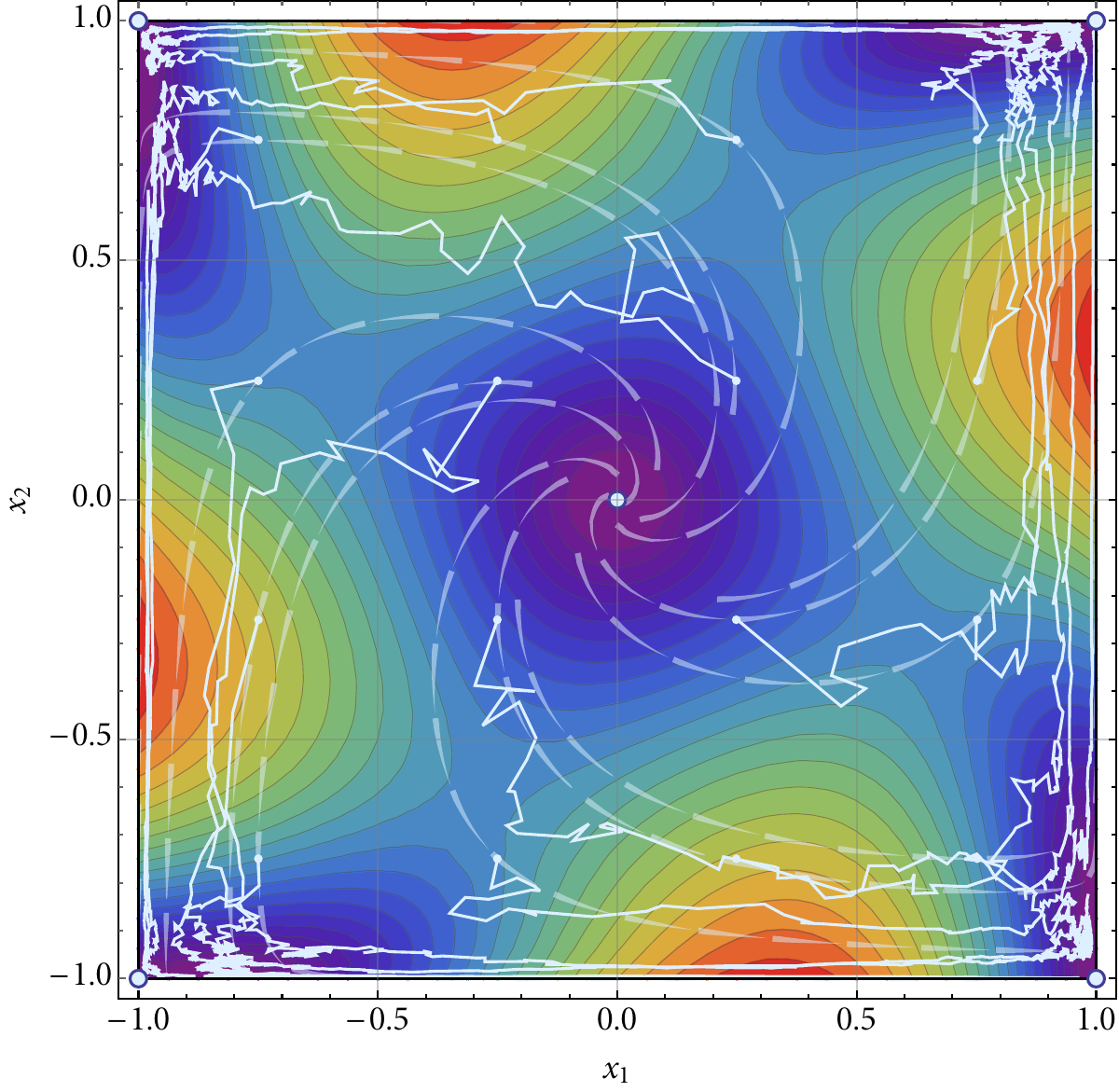}}
\caption{Learning in a $2$-player discoordination game.
All algorithms under study converge to a corner critical point which resists deviations by one of the players (but not the other).
The critical point at $(0,0)$ is unstable to deviations by \emph{both} players, and no trajectories converge there, even though it is the only interior \ac{ICT} of \eqref{eq:MD}.}
\label{fig:discord}
\vspace{-2ex}
\end{figure}


In \cref{sec:sharp}, we present an additional range of examples that cover such cases as (stochastic) linear programming, the set of undominated strategies of a game, etc.

%
%

\subsection{Main results, implications, and applications}
\label{sec:results}

We are now in a position to state our main results on the stable limit sets of \eqref{eq:method}.
To do so, we will assume for concreteness that \eqref{eq:method} is run with step-size and gradient signal sequences such that
\begin{equation}
\label{eq:schedule}
\curr[\step]
	= \step/\run^{\stepexp}
	\qquad
\curr[\bbound]
	= \bigoh(1/\run^{\biasexp})
	\qquad
	\text{and}
	\qquad
\curr[\sbound]
	= \bigoh(\run^{\noisexp})
\end{equation}
for some $\stepexp\in[0,1]$, $\biasexp>0$ and $\noisexp<1/2$.
Since the schedule \eqref{eq:schedule} involves $\curr[\bbound]$ and $\curr[\sbound]$ (which, depending on the algorithm, may be beyond the players' control), this requirement may seem unverifiable at first glance.
However, in view of \cref{prop:algorithms}, the exponents $\biasexp$ and $\noisexp$ can be directly expressed in terms of the parameters of the specific algorithm under study, so this is not an issue.
\smallskip

Without further ado, we have the following general result:

\begin{theorem}
\label{thm:attract}
Fix a tolerance level $\conf>0$, and let $\curr = \mirror(\curr[\dstate])$ be the sequence of play generated by \eqref{eq:method} with step-size and gradient signal sequences such that $\stepexp + \biasexp >1$ and $\stepexp - \noisexp > 1/2$ in \eqref{eq:schedule}.
If $\set$ admits a local energy function and $\step$ is sufficiently small, $\set$ is stochastically attracting;
specifically, there exists a neighborhood $\nhd$ of $\set$, independent of the tolerance level $\conf$, such that
\(
\probof{\text{$\curr$ converges to $\set$} \given \init\in\nhd}
	\geq 1-\conf.
\)
In addition, if the energy function on $\set$ is global, then, \acl{wp1}, $\curr$ converges to $\set$ from any initialization.
\end{theorem}

\begin{corollary}
\label{cor:attract}
Suppose that \crefrange{alg:SGA}{alg:EXP3} are run with step-size $\curr[\step] \propto 1/\run^{\stepexp}$, $\stepexp\in(1/2,1]$, and, where applicable, a sampling parameter $\curr[\mix] = \mix/\run^{\mixexp}$ such that $1-\stepexp < \mixexp < \stepexp-1/2$.
Then the conclusions of \cref{thm:attract} hold.
\end{corollary}

\begin{remark}
\label{rem:step}
In the baseline case $\curr[\bbound] = 0$, $\sup_{\run} \curr[\sdev] \eqdef \sdev < \infty$, the proof of \cref{thm:attract} shows that it suffices to take $\step = \bigoh(\min\braces{\hien / (\sdev\ebound), \sbound \sqrt{\hien/\esmooth}}) \cdot \sqrt{\conf}$.
\end{remark}

\Cref{thm:attract,cor:attract} are our main results concerning the stable limit sets of \eqref{eq:method} so, before discussing their proof, we present a series of corollaries and applications thereof.

\begin{corollary}
\label{cor:GVS}
Suppose that \crefrange{alg:SGA}{alg:EXP3} are run with parameters as in \cref{cor:attract}.
If $\eq$ is \acl{GVS}, then $\curr$ converges to $\eq$ \ac{wp1}.
\end{corollary}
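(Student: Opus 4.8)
The plan is to obtain \cref{cor:GVS} as an immediate consequence of \cref{prop:GVS} and \cref{thm:primal}, so the only work is to confirm that the hypotheses of those two results are in force. First I would invoke \cref{prop:GVS}: since $\eq$ is \acl{GVS}, the singleton $\set = \{\eq\}$ — which is trivially a nonempty compact subset of $\points$ — is a \emph{global} primal attractor of \eqref{eq:MD}, with the energy function $\energy(\dpoint) = \gauge(\fench(\dpoint))$ constructed in that proof. The one caveat is that \cref{prop:GVS} requires the players' regularizers to obey the regularity condition \eqref{eq:rec}; for the concrete methods \crefrange{alg:SGA}{alg:EXP3} this is automatic, since the quadratic, entropic, and orthant regularizers of \cref{ex:mirror-Eucl,ex:mirror-logit,ex:mirror-orthant} all satisfy \eqref{eq:rec}, while for the abstract version of the statement one simply carries \eqref{eq:rec} as a standing assumption on $\hreg$.

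Second, I would check that the runtime parameters meet the requirements of \cref{thm:primal}, Case~1 — namely $\pexp + \bexp > 1$ and $\pexp - \sexp > 1/2$ in the schedule \eqref{eq:schedule}. For the abstract template \eqref{eq:MRM} this holds by hypothesis. For the example algorithms it is precisely the content of \cref{cor:primal}: substituting the error-bound exponents supplied by \cref{prop:errorbounds} into these two inequalities, the stated conditions $\pexp \in (1/2,1]$ and $1-\pexp < \rexp < \pexp - 1/2$ on the step-size and sampling parameters are exactly what is needed (for \cref{alg:SGA,alg:EW}, where $\curr[\bbound] = 0$ and $\curr[\sbound] = \bigoh(1)$, both inequalities collapse to $\pexp > 1/2$; for \cref{alg:SPSA,alg:EXP3}, where $\bexp = \sexp = \rexp$, they become $\pexp + \rexp > 1$ and $\pexp - \rexp > 1/2$). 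With $\set = \{\eq\}$ a global primal attractor and the schedule verified, \cref{thm:primal}, Case~1, yields directly that $\curr = \mirror(\curr[\dstate])$ converges to $\eq$ \acl{wp1}.

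There is no genuine technical obstacle here; the point worth stressing is conceptual. One \emph{cannot} derive this convergence from the \ac{ICT}-based \cref{thm:ICT}, because convergence of $\curr$ to a point $\eq$ on the boundary of $\points$ may force the dual iterates $\curr[\dstate]$ to escape to infinity in $\dspace$, in which case \eqref{eq:MD} need not possess any \ac{ICT} sets at all. Accommodating this primal-dual mismatch is exactly what \cref{prop:GVS,thm:primal} were designed for — the energy function lives on $\dspace$, but its sublevel-set behavior is read off in $\points$ — so the entire content of the corollary is the bookkeeping of checking that GVS states fall under this umbrella and that the example algorithms' parameters fit the prescribed window.
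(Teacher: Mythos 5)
Your proposal is correct and follows exactly the paper's route: \cref{prop:GVS} (under the regularity condition \eqref{eq:rec}, which all the regularizers of \cref{ex:mirror-Eucl,ex:mirror-logit,ex:mirror-orthant} satisfy) makes $\{\eq\}$ a global primal attractor, and then \cref{thm:primal}, Case~1 \textendash\ via \cref{cor:primal} and \cref{prop:errorbounds} for the concrete algorithms \textendash\ gives almost-sure convergence. The paper treats this as an immediate corollary of exactly these two results, so there is nothing to add.
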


\begin{corollary}
\label{cor:mono}
Suppose that \crefrange{alg:SGA}{alg:EXP3} are run with parameters as in \cref{cor:attract}.
If $\game$ is strictly monotone, then $\curr$ converges to the game's unique \acl{NE} \ac{wp1}.
\end{corollary}

The guarantees of \cref{cor:attract} are particularly important from an equilibrium convergence standpoint, because, as we mentioned in \cref{sec:prelims}, strictly monotone games account for a very wide range of applications \textendash\ socially concave games \citep{EMN09}, Cournot oligopolies \citep{MS96}, Kelly auctions \citep{KMT98}, etc.

We should also stress here that neither of the above results can be inferred by the \ac{ICT} convergence analysis of \cref{sec:general}.
In particular, if $\eq$ lies at the boundary of $\points$, it may fail to be accessible unless the dual process $\curr[\dstate]$ escapes to infinity, in which case \cref{thm:ICT} no longer applies.
This illustrates the flexibility of \cref{def:energy}, as it allows us to tackle at the same time both boundary \emph{and} interior solutions, in both bounded and unbounded domains.

To the best of our knowledge, the only comparable global convergence results in the literature for oracle-based methods concern the convergence of the standard \acl{MD} algorithm ($\curr[\bbound] = 0$, $\sup_{\run}\curr[\sdev]^{2} < \infty$) in strictly monotone games with compact domains \cite{MZ19}.
For payoff-based algorithms, the closest results we are aware of are by \citet{BLM18} and \citet{TK19,TK19-CDC} for a constrained variant of \eqref{eq:SPSA} in strictly monotone games with compact domains (the latter actually showing convergence in probability, but without requiring strict monotonicity).
\smallskip

Finally, in terms of local results, \cref{thm:attract} further yields the following corollaries:

\begin{corollary}
\label{cor:VS}
Suppose that \crefrange{alg:SGA}{alg:EXP3} are initialized and run as per \cref{cor:attract}.
If $\eq$ is \acl{VS} \textendash\ or, more narrowly, if it satisfies \eqref{eq:SOS} \textendash\ then $\curr$ converges locally to $\eq$ with arbitrarily high probability.
\end{corollary}

\begin{corollary}
\label{cor:strict}
Let $\eq$ be a strict \acl{NE} of a finite game.
If \cref{alg:EW,alg:EXP3} are initialized and run as per \cref{cor:attract}, $\curr$ converges locally to $\eq$ with arbitrarily high probability.
\end{corollary}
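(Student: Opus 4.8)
The plan is to obtain \cref{cor:strict} as a direct specialization of \cref{thm:primal} (through \cref{cor:primal}), by chaining the implications \emph{strict \acl{NE}} $\implies$ \emph{\acl{VS}} $\implies$ \emph{primal attractor}. First I would invoke \cref{prop:strictVS}: since $\eq$ is a strict \acl{NE} of the underlying finite game, it is variationally stable in the mixed extension. Next, I would record that both \cref{alg:EW,alg:EXP3} run with the entropic (Gibbs\textendash Shannon) regularizer of \cref{ex:mirror-logit}, and check that this regularizer satisfies the recession-type condition \eqref{eq:rec}. This is the one auxiliary fact that genuinely needs verification; it is essentially immediate, since for $\hreg_{\play}(\point_{\play}) = \sum_{\pure_{\play}\in\pures_{\play}} \point_{\play\pure_{\play}}\log\point_{\play\pure_{\play}}$ on the simplex $\simplex(\pures_{\play})$, the first-order expansion $\hreg(\curr[\point]) + \braket{\curr[\dpoint]}{\point - \curr[\point]}$ along any sequence of subgradients $\curr[\dpoint]\in\subd\hreg(\curr[\point])$ is dominated by $\hreg$ and, together with the lower semicontinuity of $\hreg$, is forced to converge to $\hreg(\point)$ as $\curr[\point]\to\point$ (boundary faces being handled by the standard fact that $t\log t\to0$ as $t\to0^{+}$). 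With \eqref{eq:rec} in hand, \cref{prop:VS} immediately gives that $\eq$ is a primal attractor of the associated mean dynamics \eqref{eq:MD} (which here is the replicator dynamics \eqref{eq:RD}).

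It then remains only to apply the local case of \cref{thm:primal}. By \cref{cor:primal}, running \cref{alg:EW,alg:EXP3} with the schedules prescribed there ($\curr[\step] = \step/\run^{\pexp}$ with $\pexp\in(1/2,1]$, and for \cref{alg:EXP3} also $\curr[\mix] = \mix/\run^{\rexp}$ with $1-\pexp < \rexp < \pexp - 1/2$) places us in the exact setting of \cref{thm:primal}: the required inequalities $\pexp+\bexp>1$ and $\pexp-\sexp>1/2$ in \eqref{eq:schedule} follow from the bias/magnitude exponents supplied by \cref{prop:errorbounds} ($\bexp = \pexp$, $\sexp = 0$ for \cref{alg:EW}; $\bexp = \sexp = \rexp$ for \cref{alg:EXP3}, whence $\pexp+\rexp>1$ and $\pexp-\rexp>1/2$). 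Case~2 of \cref{thm:primal} then yields a neighborhood $\nhd$ of $\eq$ such that, for any prescribed $\conf>0$, the trajectory of play $\curr = \mirror(\curr[\dstate])$ converges to $\eq$ with probability at least $1-\conf$, provided the step-size multiplier $\step$ is taken small enough relative to $\conf$. Since this is precisely the statement of \cref{cor:strict}, the proof concludes.

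I expect the only real content here, as opposed to bookkeeping, to be the verification of \eqref{eq:rec} for the entropic regularizer, and specifically making sure it holds \emph{uniformly up to the boundary} of the simplex, where the subgradients of $\hreg$ diverge; once that is in place, everything else is just quoting \cref{prop:strictVS,prop:VS,thm:primal,cor:primal} in the right order. In particular, no new estimates beyond those already established for primal attractors are needed, since \cref{alg:EXP3}, although payoff-based, is covered verbatim by \cref{cor:primal}.
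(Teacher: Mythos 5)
Your proposal is correct and follows exactly the route the paper intends: strict \acl{NE} $\Rightarrow$ \acl{VS} via \cref{prop:strictVS}, then \cref{prop:VS} (with \eqref{eq:rec} for the entropic regularizer, which the paper simply asserts) to get a primal attractor, and finally Case~2 of \cref{thm:primal} through \cref{cor:primal}. The only nitpick is your exponent bookkeeping for \cref{alg:EW}: there $\curr[\bbound]=0$ (not $\bexp=\pexp$), which only makes the condition $\pexp+\bexp>1$ trivially true, so nothing breaks.
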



Of the above results, a special case of \cref{cor:VS} was proven in \cite{MZ19} for unbiased signal sequences with finite unconditional variance (\ie $\curr[\bbound] = 0$ and $\sup_{\run} \exof{\dnorm{\curr[\noise]}^{2}} < \infty$);
at the time of writing, this seems to be the closest antecedent of our results in the literature.
In particular, the convergence of \cref{alg:EG,alg:OG,alg:DGA} to \acl{VS} states and \ac{LNE} satisfying \eqref{eq:SOS} seems to be new.

Importantly, points satisfying \eqref{eq:SOS} are the game-theoretic analogue of minimizers with a positive-definite Hessian in non-convex minimization problems \cite{RBS16}.
In this regard, \cref{cor:VS} is particularly important as it shows that such equilibria are attracting under the entire class of algorithms under study.
Likewise, \cref{cor:strict} is a key result because, generically
\textendash\ \ie except on a set of games which is meager in the sense of Baire \textendash\
pure \aclp{NE} in finite games are always strict.
Thus, coupled with the inherent instability of mixed equilibria in finite games \cite{FVGL+20}, \cref{cor:strict} goes a long way toward establishing a learning analogue of the ``folk theorem'' of evolutionary game theory which states that a \acl{NE} is stable and attracting if and only if it is strict \cite{HS03}.

\subsection{Technical proofs}
\label{sec:proofs-attract}

We conclude this section with the proof of \cref{thm:attract}.
The main ingredient of our analysis is a ``template inequality'' for \eqref{eq:method} when the set under study admits an energy function (local or global).

To state it, note first that if $\energy$ is an energy function for $\set$ under \eqref{eq:MD}, there exists some $\emax>\ground$ (possibly equal to $\sup\energy$) such that
the sublevel set
\begin{equation}
\label{eq:dbasin}
\dbasin
	= \setdef{\dpoint\in\dspace}{\energy(\dpoint) \leq \emax}
\end{equation}
is forward invariant under \eqref{eq:MD}
and
$\sup\setdef{\dot\energy(\dpoint)}{\emax \geq \energy(\dpoint) > \loen} < 0$ for all $\loen\in(\ground,\emax)$.
Moreover, by assumption, there exist positive constants $\esmooth,\ebound > 0$ such that
$\norm{\nabla\energy(\dpoint)} \leq \ebound$
and
\begin{equation}
\label{eq:dsmooth}
\energy(\dpointalt)
	\leq \energy(\dpoint)
		+ \braket{\nabla\energy(\dpoint)}{\dpointalt - \dpoint}
		+ \tfrac{1}{2} \esmooth \dnorm{\dpointalt - \dpoint}^{2}
\end{equation}
for all $\dpoint,\dpointalt\in\dspace$.
With all this in hand, we have the following template inequality:

\begin{lemma}
\label{lem:template}
Let $\curr[\energy] \defeq \energy(\curr[\dstate])$.
Then, for all $\run = \running$, we have
\begin{equation}
\label{eq:template}
\next[\energy]
	\leq \curr[\energy]
		+ \curr[\step] \braket{\vecfield(\curr)}{\nabla\energy(\curr[\dstate])}
		+ \curr[\step] \curr[\snoise]
		+ \curr[\step] \curr[\sbias]
		+ \curr[\step]^{2} \curr[\second]^{2},
\end{equation}
where the error terms $\curr[\snoise]$, $\curr[\sbias]$, and $\curr[\second]$ are given by
\begin{equation}
\label{eq:errors}
\curr[\snoise]
	= \braket{\curr[\noise]}{\nabla\energy(\curr[\dstate])},
	\quad
\curr[\sbias]
	= \ebound \dnorm{\curr[\bias]}
	\quad
	\text{and}
	\quad
\curr[\second]^{2}
	= \tfrac{1}{2} \esmooth \dnorm{\curr[\signal]}^{2}.
\end{equation}
\end{lemma}

\begin{proof}
Simply unroll \eqref{eq:dsmooth} after substituting
$\dpoint \gets \curr[\dstate]$
and
$\dpointalt \gets \next[\dstate] = \curr[\dstate] + \curr[\step] \curr[\signal]$ with $\curr[\signal]$ as in \eqref{eq:signal}.
\end{proof}

Now, by the definition of $\energy$, we have $\dot\energy(\dpoint) = \braket{\vecfield(\mirror(\dpoint))}{\nabla\energy(\dpoint)} < 0$
whenever $\dpoint \in \dbasin \setminus \mirror^{-1}(\set)$.
Hence, for $\curr\in\mirror(\dbasin)$, \eqref{eq:template} becomes
\begin{equation}
\label{eq:template-err}
\next[\energy]
	\leq \curr[\energy]
		+ \curr[\step] \curr[\snoise]
		+ \curr[\step] \curr[\sbias]
		+ \curr[\step]^{2} \curr[\second]^{2}.
\end{equation}
Of course, each of these error terms can be positive, so $\curr[\energy]$ may fail to be decreasing, even when $\curr\in\mirror(\dbasin)$.
On that account, it will be convenient to introduce the error processes
\begin{equation}
\label{eq:errors-agg}
\txs
\curr[\aggnoise]
	= \sum_{\runalt=\start}^{\run} \iter[\step]\iter[\snoise]
	\qquad
\curr[\aggbias]
	= \sum_{\runalt=\start}^{\run} \iter[\step]\iter[\sbias]
	\qquad
	\text{and}
	\qquad
\curr[\aggsecond]
	= \sum_{\runalt=\start}^{\run} \iter[\step]^{2}\iter[\second]^{2}
\end{equation}
which measure directly the aggregate effect of each error term in \eqref{eq:template}.
As it turns out, under \eqref{eq:sum}, these errors can be compensated by the negative drift of \eqref{eq:template}, leading to the following global result:

\begin{proposition}
\label{prop:attract-global}
Suppose that $\set$ admits a global energy function, and let $\curr = \mirror(\curr[\dstate])$ be the sequence of play generated by \eqref{eq:method}.
If \eqref{eq:sum} holds, then, \acl{wp1}, $\curr$ converges to $\set$.
\end{proposition}

To streamline our discussion, before proving \cref{prop:attract-global}, we present a similar convergence result for sets that only admit \emph{local} energy functions.
In this case, even if the algorithm begins play close to $\set$, a single ``bad'' realization of the noise could force the process to exit the basin of attraction of $\set$, possibly never to return.
With a fair degree of hindsight, we will control the probability with which this ``bad event'' occurs via the stability requirement
\begin{equation}
\label{eq:stab}
\tag{Stab}
\probof{\text{$\curr[Z] > \hien/4$ for some $\run$}}
	< \conf/3
\end{equation}
where
$\conf > 0$ is the target tolerance level,
and
$\curr[Z] \gets \curr[\aggnoise]$, $\curr[\aggbias]$ or $\curr[\aggsecond]$,
depending on the error term that we wish to control.
Modulo this requirement, we obtain the following local analogue of \cref{prop:attract-global}:

\begin{proposition}
\label{prop:attract-local}
Suppose that $\set$ admits a local energy function,
and let $\curr = \mirror(\curr[\dstate])$ be the sequence of play generated by \eqref{eq:method}.
Assume further that the algorithm begins play at a neighborhood $\nhd$ of $\set$ such that $\energy(\init[\dstate]) \leq \hien/4$.
If \eqref{eq:sum} and \eqref{eq:stab} hold, then
\begin{equation}
\txs
\probof{\text{$\energy(\curr[\dstate]) < \hien$ for all $\run$ and $\lim_{\run\to\infty} \dist(\curr,\set) = 0$}}
	\geq 1-\conf.
\end{equation}
\end{proposition}

Of course, \cref{prop:attract-global,prop:attract-local} can be difficult to employ in practice because of their reliance on the conditions \eqref{eq:sum} and \eqref{eq:stab}.
Because of this,
we defer the proof of \cref{prop:attract-global,prop:attract-local} to the end of this section,
and
we proceed below to complete the proof of \cref{thm:attract} by showing that \eqref{eq:sum} and \eqref{eq:stab} both hold under the stated step-size and gradient signal requirements.

\begin{proof}[Proof of \cref{thm:attract}]
We begin by noting that \eqref{eq:sum} holds trivially under the stated conditions for $\curr[\step] \propto 1/\run^{\stepexp}$, $\curr[\bbound] = \bigoh(1/\run^{\biasexp})$ and $\curr[\sbound] = \bigoh(\run^{\noisexp})$.
As a result, the first part of the theorem follows immediately from \cref{prop:attract-global}.

Likewise, for the second part, it will suffice to establish the stability condition \eqref{eq:stab}.
To that end,
proceeding term-by-term, we have:

\begin{enumerate}
\item
Since $\curr[\aggnoise]$ is a martingale, Kolmogorov's inequality \citep[Corollary 2.1]{HH80} gives
\begin{align}
\probof*{\max_{\start\leq\runalt\leq\run} \iter[\aggnoise] \geq \hien/4}
	&\leq \probof*{\max_{1\leq\runalt\leq\run} \abs{\iter[\aggnoise]} \geq \hien/4}
	\notag\\
	&\leq \frac{16\exof{\curr[\aggnoise]^{2}}}{\hien^{2}}
	= \frac{16\exof[\big]{\parens[\big]{\sum_{\runalt=\start}^{\run} \iter[\step]\iter[\snoise]}^{2}}}{\hien^{2}}
	\notag\\
	&\leq \frac{16\ebound^{2} \sum_{\runalt=\start}^{\run} \iter[\step]^{2} \iter[\sdev]^{2}}{\hien^{2}}
	\eqdef \Const_{\aggnoise}
\end{align}
where we used the variance bound
\begin{equation}
\exof{\iter[\snoise]^{2}}
	= \exof{ \exof{ \abs{\braket{\iter[\noise]}{\nabla\energy(\iter[\dstate])}}^{2} \given \iter[\filter] } }
	\leq \ebound^{2} \iter[\sdev]^{2}
\end{equation}
and the fact that $\exof{\snoise_{\runalt}\snoise_{\runaltalt}} = \exof{\snoise_{\runalt}\snoise_{\runaltalt} \given \filter_{\runalt\vee\runaltalt}} = 0$ whenever $\runalt \neq \runaltalt$.
Thus, given that $\{ \curr \geq \hien/4 \; \text{for some $\run$}\} = \union_{\run} \{ \curr[\maxnoise] \geq \hien/4 \}$ is a union of nested events, we conclude that \eqref{eq:stab} holds for $\curr[Z] \gets \curr[\aggnoise]$ whenever
$\Const_{\aggnoise} \leq \conf/3$.

\item
For the second term, we have $\curr[\aggbias] \leq \ebound \sum_{\runalt=\start}^{\infty} \iter[\step] \iter[\bbound]$ for all $\run$ \acl{wp1}, so \eqref{eq:stab} holds for $\curr[Z] \gets \curr[\aggbias]$ as long as $\Const_{\aggbias} \defeq (4\ebound/\hien) \sum_{\run} \curr[\step] \curr[\bbound] \leq 1$.

\item
Finally, for the last term, Markov's inequality yields
\begin{equation}
\probof{\curr[\aggsecond] \geq \hien/4}
	\leq \frac{4\exof{\curr[\aggsecond]}}{\hien}
	= \frac{2\esmooth \sum_{\runalt=\start}^{\run} \iter[\step]^{2} \iter[\sbound]^{2}}{\hien}
	\eqdef \Const_{\aggsecond}.
\end{equation}
We thus see that the event $\{ \curr[\aggsecond] \geq \hien/4 \; \text{for some $\run$}\} = \union_{\run} \{ \curr[\aggsecond] \geq \hien/4 \}$ occurs with probability no more than
$\Const_{\aggsecond}$,
which implies in turn that the requirement \eqref{eq:stab} for $\curr[Z] \gets \curr[\aggsecond]$ holds whenever $\Const_{\aggsecond} \leq \conf/3$.
\end{enumerate}
\noindent
Since $\Const_{\aggnoise}$, $\Const_{\aggbias}$ and $\Const_{\aggsecond}$ are all $\bigoh(\step^{2})$, we can choose $\step$ sufficiently small so that $\Const_{\aggnoise} \leq \conf/3$, $\Const_{\aggbias} \leq 1$ and $\Const_{\aggsecond} \leq \conf/3$.
In this case, \eqref{eq:stab} holds by construction, and our claim follows from \cref{prop:attract-local}.
\end{proof}

We are thus left to prove \cref{prop:attract-global,prop:attract-local}.
To that end, we begin with a technical lemma showing that the aggregate error processes $\curr[\aggnoise],\curr[\aggbias]$ and $\curr[\aggsecond]$ of \eqref{eq:errors-agg} are subleading relative to the long-run drift of \eqref{eq:template}.

\begin{lemma}
\label{lem:sub}
Under \eqref{eq:sum}, the aggregate error processes of \eqref{eq:errors-agg} are sublinear in $\curr[\efftime]$, \ie we have
\begin{equation}
\label{eq:sub}
\tag{Sub}
\curr[Z] / \curr[\efftime]
	\to 0
	\quad
	\text{\acl{wp1}},
\end{equation}
where $\curr[Z] \gets \curr[\aggnoise]$, $\curr[\aggbias]$ or $\curr[\aggsecond]$, depending on the error term under study.
\end{lemma}

\begin{proof}
We treat each case $\curr[Z] \gets \curr[\aggnoise]$, $\curr[\aggbias]$ or $\curr[\aggsecond]$ separately.
\begin{enumerate}
\item
For $\curr[\aggnoise]$, \eqref{eq:sum} readily gives
\begin{equation}
\sum_{\run=\start}^{\infty} \exof{\curr[\step]^{2} \curr[\snoise]^{2} \given \curr[\filter]}
	\leq \sum_{\run=\start}^{\infty} \curr[\step]^{2} \exof{ \norm{\nabla\energy(\curr[\dstate])}^{2} \dnorm{\curr[\noise]}^{2} \given \curr[\filter]}
	\leq \ebound^{2} \sum_{\run=\start}^{\infty} \curr[\step]^{2} \curr[\sdev]^{2}
	< \infty.
\end{equation}
Thus, by the strong law of large numbers for martingale difference sequences \citep[Theorem 2.18]{HH80}, we conclude that $\curr[\aggnoise] / \curr[\efftime]$ converges to $0$ \acl{wp1}.

\item
For $\curr[\aggbias]$, the conclusion is immediate by the fact that $\sum_{\run} \curr[\step] \curr[\bbound] < \infty$ under \eqref{eq:sum}.

\item
Finally, for the submartingale term $\curr[\aggsecond]$, we have
\begin{align}
\exof{\curr[\aggsecond]}
	&= \sum_{\runalt=\start}^{\run} \iter[\step]^{2} \exof{\iter[\second]^{2}}
	\leq \frac{\esmooth}{2} \sum_{\runalt=\start}^{\run} \iter[\step]^{2}
			\exof{\dnorm{\iter[\signal]}^{2}},
	\leq \frac{\esmooth}{2} \sum_{\runalt=\start}^{\run} \iter[\step]^{2} \iter[\sbound]^{2},
\end{align}
so, by \eqref{eq:sum}, it follows that $\curr[\aggsecond]$ is bounded in $L^{1}$.
Therefore, by Doob's submartingale convergence theorem \citep[Theorem~2.5]{HH80}, we further deduce that $\curr[\aggsecond]$ converges \as to some (finite) random variable $\aggsecond_{\infty}$, implying in turn that $\curr[\aggsecond] / \curr[\efftime] \to 0$ \acl{wp1}.
\qedhere
\end{enumerate}
\end{proof}

Moving forward, we present two lemmas that will allow us to deduce the convergence of the energy iterates $\curr[\energy] \defeq \energy(\curr[\dstate])$ modulo the occurrence of the favorable event
\begin{equation}
\label{eq:good}
\good
	= \braces{ \curr[\dstate]\in\dbasin \; \text{for all $\run$} }
\end{equation}
where
$\dbasin = \setdef{\dpoint\in\dspace}{\energy(\dpoint) \leq \emax}$ is defined as in \eqref{eq:dbasin}.
In particular, we have the following results:

\begin{lemma}
\label{lem:subseq}
Suppose that $\probof{\good} > 0$.
If \eqref{eq:sub} holds, then $\probof{\liminf_{\run\to\infty}\curr[\energy] = \ground \given \good} = 1$.
\end{lemma}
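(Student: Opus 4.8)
The plan is to argue by contradiction on the event $\good$: suppose with positive probability the energy iterates $\curr[\energy]$ are bounded away from $\inf\energy$ infinitely often, and derive a contradiction with the sublinearity of the aggregate errors established in \cref{lem:sub}. Concretely, restricting attention to $\good$ (which has positive probability by hypothesis), on this event every $\curr[\dstate]$ lies in $\dbasin$, so the template inequality \eqref{eq:template-err} applies at every stage $\run$, giving
\[
\next[\energy]
	\leq \curr[\energy]
		+ \curr[\step] \curr[\snoise]
		+ \curr[\step] \curr[\sbias]
		+ \curr[\step]^{2} \curr[\second]^{2}.
\]
Telescoping this from $\start$ to $\run$ yields $\curr[\energy] \leq \init[\energy] + \curr[\aggnoise] + \curr[\aggbias] + \curr[\aggsecond]$, but this crude bound alone is not enough — I need the negative drift. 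The key is to incorporate the strictly negative term $\dot\energy$: whenever $\loen < \curr[\energy] \leq \emax$, there is a uniform constant $\kappa(\loen) > 0$ with $\braket{\vecfield(\curr)}{\nabla\energy(\curr[\dstate])} \leq -\kappa(\loen)$ (by compactness of the corresponding sublevel-set region and continuity). So the full \eqref{eq:template} reads $\next[\energy] \leq \curr[\energy] - \curr[\step]\kappa(\loen) + \curr[\step]\curr[\snoise] + \curr[\step]\curr[\sbias] + \curr[\step]^{2}\curr[\second]^{2}$ on the set where $\curr[\energy] \in (\loen, \emax]$.

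The heart of the argument: fix $\loen \in (\inf\energy, \emax)$ and suppose, toward a contradiction, that on a sub-event of $\good$ of positive probability we have $\curr[\energy] > \loen$ for all $\run \geq \run_0$ (this is what $\liminf \curr[\energy] > \inf\energy$ means — after passing to a slightly larger $\loen$ and using that $\emax$-invariance keeps $\curr[\energy] \le \emax$). Then on this sub-event the negative-drift inequality applies at every $\run \ge \run_0$, and summing from $\run_0$ to $\run$ gives
\[
0
	\leq \curr[\energy]
	\leq \iter[\energy][\run_0]
		- \kappa(\loen) \sum_{\runalt=\run_0}^{\run-1} \iter[\step]
		+ (\curr[\aggnoise] - \iter[\aggnoise][\run_0])
		+ (\curr[\aggbias] - \iter[\aggbias][\run_0])
		+ (\curr[\aggsecond] - \iter[\aggsecond][\run_0]).
\]
Now divide through by $\curr[\efftime] = \sum_{\runalt=\start}^{\run}\iter[\step]$. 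Since $\sum_{\runalt=\run_0}^{\run-1}\iter[\step] = \curr[\efftime] - \iter[\efftime][\run_0-1]$, the middle term on the right tends to $-\kappa(\loen)$, while by \eqref{eq:sub-noise}, \eqref{eq:sub-bias}, \eqref{eq:sub-second} each of the three error increments divided by $\curr[\efftime]$ tends to $0$ almost surely. Hence the right-hand side tends to $-\kappa(\loen) < 0$, contradicting that the left-hand side is nonnegative. Therefore, conditioned on $\good$, we must have $\liminf_{\run}\curr[\energy] \leq \loen$ almost surely; letting $\loen \downarrow \inf\energy$ through a countable sequence gives $\liminf_{\run}\curr[\energy] = \inf\energy$ almost surely on $\good$, which is the claim.

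The main obstacle I anticipate is the bookkeeping around the random, event-dependent starting index $\run_0$ and the fact that the almost-sure statements of \cref{lem:sub} hold on the whole probability space, not merely conditionally on $\good$ — but since $\probof{\good} > 0$, almost-sure convergence of $\curr[\aggnoise]/\curr[\efftime]$ etc. restricts to almost-sure convergence on $\good$ with respect to the conditional measure $\probof{\cdot \given \good}$, so this is routine. A second, more delicate point is ensuring the uniform negative-drift constant $\kappa(\loen)$ is genuinely positive and uniform: this follows because $\setdef{\dpoint}{\loen \le \energy(\dpoint) \le \emax}$ need not be compact in $\dspace$, so one must instead argue that $\braket{\vecfield(\mirror(\dpoint))}{\nabla\energy(\dpoint)} = \dot\energy(\dpoint)$ is bounded away from $0$ there — which is exactly what the definition of a local energy function (\cref{def:primal}), together with the choice of $\emax$ recalled at the start of \cref{sec:first}, guarantees. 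With these two points handled, the contrapositive/contradiction structure closes the proof.
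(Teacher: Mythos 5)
Your proposal is correct and follows essentially the same route as the paper's proof: a contradiction argument using the uniform negative drift $\const_{\elvl} = -\sup\setdef{\dot\energy(\dpoint)}{\elvl \leq \energy(\dpoint) \leq \emax} > 0$ on the level band, telescoping the template inequality, dividing by $\curr[\efftime]$, invoking \eqref{eq:sub}, and then passing through a countable sequence of levels $\elvl \downarrow \inf\energy$ (the paper phrases the bad event via the hitting time $\stoptime_{\elvl}$ rather than a tail index $\run_{0}$, but this is the same argument). The only cosmetic slip is writing $0 \leq \curr[\energy]$ in the contradiction display \textendash\ the energy need only be bounded below by $\loen$ (or by $\inf\energy$) on the event in question, which is exactly what the paper uses and is all your argument needs.
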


\begin{lemma}
\label{lem:finitelim}
Suppose that $\probof{\good} > 0$.
If \eqref{eq:sum} holds,
there exists some finite random variable $\energy_{\infty}$ such that $\probof{\lim_{\run\to\infty} \curr[\energy] = \energy_{\infty} \given \good} = 1$.
\end{lemma}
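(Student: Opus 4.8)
The plan is to run the same supermartingale argument used in the proof of \cref{prop:bounded}, but driven this time by the energy inequality \eqref{eq:template-err} instead of the boundedness estimate. The one genuine subtlety is that \eqref{eq:template-err} is only available when the current iterate sits in the fundamental neighborhood $\basin$ — which is exactly what $\good$ guarantees for \emph{all} $\run$ — yet $\good$ is not $\curr[\filter]$-measurable for any fixed $\run$, so one cannot take conditional expectations ``on $\good$'' directly. I would get around this by localizing at the first time the dual iterate leaves $\dbasin$: set $\stoptime = \inf\setdef{\run\geq1}{\curr[\dstate]\notin\dbasin}$ with the convention $\inf\emptyset=\infty$, so that $\good=\{\stoptime=\infty\}$ and $\{\stoptime>\run\}=\bigcap_{\runalt\leq\run}\{\iter[\dstate]\in\dbasin\}\in\curr[\filter]$ (each $\iter[\dstate]$ being $\iter[\filter]$-measurable), which makes $\stoptime$ a genuine stopping time.

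Next I would work with the stopped energy $V_{\run}\defeq\energy(\dstate_{\run\wedge\stoptime})$. On $\{\stoptime>\run\}$ we have $\curr=\mirror(\curr[\dstate])\in\mirror(\dbasin)=\basin$ and $\curr[\dstate]\in\dbasin$, so (as in the discussion at the start of \cref{sec:first}) the drift term in \cref{lem:template} is nonpositive and \eqref{eq:template-err} applies at stage $\run$; on $\{\stoptime\leq\run\}$ the increment of $V$ vanishes. Combining the two cases yields, for all $\run$,
\[
V_{\run+1}
	\;\leq\; V_{\run}
		+ \oneof{\stoptime>\run}\bigl(\curr[\step]\curr[\snoise] + \curr[\step]\curr[\sbias] + \curr[\step]^{2}\curr[\second]^{2}\bigr).
\]
Conditioning on $\curr[\filter]$ and using $\exof{\curr[\noise]\given\curr[\filter]}=0$ (which kills the $\curr[\snoise]$ term since $\{\stoptime>\run\}\in\curr[\filter]$), the deterministic bound $\curr[\sbias]=\ebound\curr[\bbound]$, and $\exof{\dnorm{\curr[\signal]}^{2}\given\curr[\filter]}\leq\curr[\sbound]^{2}$ from \eqref{eq:errorbounds}, this becomes $\exof{V_{\run+1}\given\curr[\filter]}\leq V_{\run}+\curr[\eps]$ with deterministic residual $\curr[\eps]\defeq\ebound\,\curr[\step]\curr[\bbound]+\tfrac{1}{2}\esmooth\,\curr[\step]^{2}\curr[\sbound]^{2}$, which is summable by \eqref{eq:sum}.

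From here the conclusion would follow exactly as in \cref{prop:bounded}: the compensated process $W_{\run}\defeq V_{\run}+\sum_{\runalt\geq\run}\iter[\eps]$ is an $\curr[\filter]$-supermartingale, and it is bounded below by the constant $\inf\energy$ — finite by hypothesis, since $\curr[\eps]\geq0$ — so by Doob's convergence theorem \citep[Theorem~2.5]{HH80} it converges \as to a finite limit $W_{\infty}$; since $\sum_{\runalt\geq\run}\iter[\eps]\to0$, this forces $V_{\run}\to\energy_{\infty}\defeq W_{\infty}$ \as, with $\energy_{\infty}$ finite. Restricting to $\good=\{\stoptime=\infty\}$, where $V_{\run}=\curr[\energy]$ for every $\run$, gives $\curr[\energy]\to\energy_{\infty}$ \as on $\good$; dividing by $\probof{\good}>0$ then yields $\probof{\lim_{\run}\curr[\energy]=\energy_{\infty}\given\good}=1$. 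I expect the localization step to be the only place requiring real care — getting the indicator $\oneof{\stoptime>\run}$ into the recursion so that conditional expectations can be taken term by term — after which everything is the standard Robbins--Siegmund routine already employed for \cref{prop:bounded}; note that the companion statement \cref{lem:subseq} is harder precisely because it additionally needs the negative-drift and sublinearity input of \eqref{eq:sub} to pin the $\liminf$ at $\inf\energy$, whereas here we only need convergence to \emph{some} finite value.
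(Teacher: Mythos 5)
Your proposal is correct and is essentially the paper's own argument: the paper localizes by multiplying the shifted energy by the indicators $\one_{\curr[\good]}$ of the nested $\curr[\filter]$-measurable events $\curr[\good]$ and then invokes Gladyshev's lemma, while you localize with the stopping time $\stoptime$ (first exit from $\dbasin$) and apply Doob's theorem to the compensated stopped process \textendash\ two interchangeable implementations of the same ``truncate so you can condition, then almost-supermartingale convergence under \eqref{eq:sum}'' scheme. The key points (nonpositive drift on $\dbasin$, killing the noise term via $\curr[\filter]$-measurability of the localizing event, summability of the bias and second-order residuals, and restricting to $\good$ at the end) all match the paper's proof.
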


\begin{proposition}
\label{prop:Fejer}
Suppose that $\energy$ is a local energy function for $\set$.
If $\probof{\good} > 0$ and \eqref{eq:sum} holds,
then $\probof{\text{$\curr$ converges to $\set$} \given \good} = 1$.
\end{proposition}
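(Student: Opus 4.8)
The plan is to combine \cref{lem:subseq,lem:finitelim} to identify the limit of the energy iterates $\curr[\energy] = \energy(\curr[\dstate])$ on the favorable event $\good$, and then to transfer this to the primal iterates $\curr = \mirror(\curr[\dstate])$ via the defining property of a primal attractor recorded in \cref{def:primal}.

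First I would check that the standing hypotheses of \cref{lem:subseq,lem:finitelim} are in force. Since $\set$ is a primal attractor, its energy function satisfies $\inf\energy > -\infty$ by \cref{def:primal}; and since \eqref{eq:sum} is assumed, \cref{lem:sub} ensures that \eqref{eq:sub} holds almost surely. Consequently \cref{lem:subseq} yields $\probof{\liminf_{\run\to\infty}\curr[\energy] = \inf\energy \given \good} = 1$, while \cref{lem:finitelim} yields a finite random variable $\energy_{\infty}$ with $\probof{\lim_{\run\to\infty}\curr[\energy] = \energy_{\infty} \given \good} = 1$.

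Intersecting these two events of full conditional probability given $\good$, we get that, on $\good$, the full limit $\lim_{\run}\curr[\energy]$ exists and must coincide with $\liminf_{\run}\curr[\energy] = \inf\energy$; that is, $\energy(\curr[\dstate]) \to \inf\energy$ almost surely given $\good$. The final step is then immediate: by \cref{def:primal}, a primal attractor $\set$ satisfies $\mirror(\dpoint)\to\set$ whenever $\energy(\dpoint)\to\inf\energy$, so $\curr = \mirror(\curr[\dstate])$ converges to $\set$ almost surely on $\good$ --- which is exactly the claim $\probof{\text{$\curr$ converges to $\set$} \given \good} = 1$.

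This is essentially bookkeeping, so I do not anticipate a real obstacle; the only point deserving care is discharging the hypotheses of \cref{lem:subseq,lem:finitelim} before combining them --- specifically, invoking \cref{lem:sub} to pass from \eqref{eq:sum} to \eqref{eq:sub}, and remembering that $\inf\energy > -\infty$ is built into the definition of a primal attractor.
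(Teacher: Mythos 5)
Your proposal is correct and follows exactly the paper's own argument: use \cref{lem:sub} to pass from \eqref{eq:sum} to \eqref{eq:sub}, apply \cref{lem:subseq} and \cref{lem:finitelim} in tandem to get $\curr[\energy] \to \inf\energy$ almost surely on $\good$, and conclude via the defining property $\mirror(\dpoint)\to\set$ whenever $\energy(\dpoint)\to\inf\energy$ from \cref{def:primal}. Your added remark that $\inf\energy > -\infty$ is built into \cref{def:primal} (needed for \cref{lem:finitelim}) is a fine point of care that the paper leaves implicit.
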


\begin{proof}[Proof of \cref{lem:subseq}]
Since $\probof{\good} > 0$, it suffices to show that the hitting time $\stoptime_{\elvl} = \inf\setdef{\run\in\N}{\curr[\energy] \leq \elvl}$ is finite \acl{wp1} on $\good$ for all sufficiently small $\elvl > \ground$.
More precisely, building on an argument of \citet{DMSV23}, we will show that the event $\bad_{\elvl} = \good \cap \{\stoptime_{\elvl} = \infty\}$ has $\probof{\bad_{\elvl}} = 0$ whenever $\ground < \elvl \leq \emax$:
indeed, if this is the case and $\iter[\elvl] \in (\ground,\emax)$, $\runalt=\running$, is a sequence converging monotonically to $\ground$, we will have $\probof{\bad_{\iter[\elvl]}} = 0$ for all $\runalt \in \N$.
Thus, with only a countable number of $\bad_{\iter[\elvl]}$ in play, we will have
\begin{align}
\probof{\liminf\nolimits_{\run\to\infty} \curr[\energy] = \ground \given \good}
	&= \probof{\stoptime_{\iter[\elvl]} < \infty\;\textrm{for all $\runalt$} \given \good}
	\notag\\
	&\txs
	= \probof*{\intersect\nolimits_{\runalt=\start}^{\infty} \{\stoptime_{\iter[\elvl]} < \infty\} \given \good}
	= 1 - \probof*{\union\nolimits_{\runalt=\start}^{\infty} \{\stoptime_{\iter[\elvl]} < \infty\} \given \good}
	\notag\\
	&= 1 - \frac
		{\probof*{\good \cap \parens*{\union\nolimits_{\runalt=\start}^{\infty} \{\stoptime_{\iter[\elvl]} < \infty\}}}}
		{\probof{\good}}
	= 1 - \frac{\probof*{\union\nolimits_{\runalt=\start}^{\infty} \bad_{\iter[\elvl]}}}{\probof{\good}}
	= 1,
\end{align}
as per our original assertion.

Now, to establish our claim for $\bad_{\elvl}$, assume to the contrary that $\probof{\bad_{\elvl}} > 0$ for some sufficiently small $\elvl>\ground$, and let $\const_{\elvl} = -\sup\setdef{\dot\energy(\dpoint)}{\elvl \leq \energy(\dpoint) \leq \emax}$, so $\const_{\elvl} > 0$ by \cref{def:energy}.
Then, by telescoping \eqref{eq:template}, we get
\begin{align}
\label{eq:energy-escape}
\next[\energy]
	&\leq \init[\energy]
		+ \sum_{\runalt=\start}^{\run} \iter[\step] \dot\energy(\iter[\dstate])
			+ \sum_{\runalt=\start}^{\run} \iter[\step] \iter[\snoise]
			+ \sum_{\runalt=\start}^{\run} \iter[\step] \iter[\sbias]
			+ \sum_{\runalt=\start}^{\run} \iter[\step] \iter[\second]^{2}
	\notag\\
	&\leq \init[\energy]
		- \bracks*{
			\const_{\elvl}
			- \frac{\curr[\aggnoise] + \curr[\aggbias] + \curr[\aggsecond]}{\curr[\efftime]}}
		\cdot \curr[\efftime]
	\qquad
	\text{for all $\run=\running$}
\end{align}
\acl{wp1} on $\bad_{\elvl}$.
Since $\probof{\bad_{\elvl}} > 0$ by assumption and $(\curr[\aggnoise] + \curr[\aggbias] + \curr[\aggsecond]) \big/ \curr[\efftime] \to 0$ \acl{wp1} by \eqref{eq:sub}, the above gives
$\probof{\lim_{\run\to\infty}\curr[\energy] = -\infty \given \bad_{\elvl}} = 1$.
However, with $\inf_{\run}\curr[\energy] \geq \elvl > \ground$ on $\bad_{\elvl}$ by construction, we get a contradiction, and our proof is complete.
\end{proof}

\begin{proof}[Proof of \cref{lem:finitelim}]
Consider the nested sequence of events
\begin{equation}
\label{eq:events}
\curr[\good]
	= \braces{ \dot\energy(\iter[\dstate]) \leq 0 \; \text{for all $\runalt = \running,\run$} }
\end{equation}
so $\good = \intersect_{\run=\start}^{\infty} \curr[\good]$.
Then, letting $\curr[\tilde\energy] = \one_{\curr[\good]} \curr[\energy]$, \cref{eq:template} readily gives
\begin{align}
\next[\tilde\energy]
	= \one_{\next[\good]} \next[\energy]
	&\leq \one_{\curr[\good]} \next[\energy]
	\notag\\
	&\leq \one_{\curr[\good]} \curr[\energy]
		+ \parens[\big]{
			\curr[\step] \dot\energy(\curr[\dstate])
				+ \curr[\step] \curr[\snoise]
				+ \curr[\step] \curr[\sbias]
				+ \curr[\step]^{2} \curr[\second]^{2}
			} 
			\one_{\curr[\good]}
	\notag\\
	&\leq \curr[\tilde\energy]
		+ \curr[\step] \one_{\curr[\good]} \curr[\snoise]
		+ \parens[\big]{
			\curr[\step] \curr[\sbias]
			+ \curr[\step]^{2} \curr[\second]^{2}
			} 
			\one_{\curr[\good]},
\end{align}
where we used the fact that $\dot\energy(\iter[\dstate]) = \braket{\vecfield(\iter)}{\nabla\energy(\iter[\dstate])} \leq 0$ for all $\runalt = \running,\run$ if $\curr[\good]$ occurs.
Since $\curr[\good]$ is $\curr[\filter]$-measurable, conditioning on $\curr[\filter]$ and taking expectations then yields
\begin{align}
\exof{\next[\tilde\energy] \given \curr[\filter]}
	&\leq \curr[\tilde\energy]
		+ \curr[\step] \one_{\curr[\good]} \exof{\curr[\snoise] \given \curr[\filter]}
		+ \one_{\curr[\good]} \exof{\curr[\step] \curr[\sbias] + \curr[\step]^{2} \curr[\second]^{2} \given \curr[\filter]}
	\notag\\
	&\leq \curr[\tilde\energy]
		+ \exof{\curr[\step] \curr[\sbias]  + \curr[\step]^{2} \curr[\second]^{2} \given \curr[\filter]}
	\notag\\
	&\leq \curr[\tilde\energy]
		+ \curr[\step] \ebound \curr[\bbound]
		+ \tfrac{1}{2} \esmooth \curr[\step]^{2} \curr[\sbound]^{2}.
\end{align}
Now, given that $\sum_{\run} \curr[\step] \curr[\bbound]$ and $\sum_{\run} \curr[\step]^{2} \curr[\sbound]^{2}$ are both finite by \eqref{eq:sum}, $\curr[\tilde\energy]$ is an almost supermartingale with summable increments, \ie $\sum_{\run} \bracks*{\exof{\next[\tilde\energy] \given \curr[\filter]} - \curr[\tilde\energy]} < \infty$ \ac{wp1}.
Therefore, by Gladyshev's lemma \citep[p.~49]{Pol87}, we conclude that $\curr[\tilde\energy]$ converges almost surely to some (finite) random variable.
Since $\probof{\good} > 0$ and $\one_{\curr[\good]} = 1$ for all $\run$ if and only if $\good$ occurs, we further deduce that
\(
\probof{\text{$\curr[\energy]$ converges} \given \good}
	= \probof{\text{$\curr[\tilde\energy]$ converges} \given \good}
	= 1,
\)
and our claim follows.
\end{proof}

\begin{proof}[Proof of \cref{prop:Fejer}]
By \cref{lem:sub}, \eqref{eq:sub} is satisfied whenever \eqref{eq:sum} is.
Thus, by a tandem application of \cref{lem:subseq,lem:finitelim}, we conclude that $\lim_{\run\to\infty} \curr[\energy] = \ground$ \acl{wp1} on $\good$.
Finally, since $\mirror(\dpoint)\to\set$ whenever $\energy(\dpoint)\to\ground$, our claim follows.
\end{proof}

We are now in a position to prove \cref{prop:attract-global,prop:attract-local}.

\begin{proof}[Proof of \cref{prop:attract-global}]
By the definition of a global attractor, we have $\emax = \sup\energy$, so $\probof{\good} = 1$.
Our claim is then an immediate consequence of \cref{prop:Fejer}.
\end{proof}

\begin{proof}[Proof of \cref{prop:attract-local}]
Suppose that $\energy(\init[\dstate]) \leq \emax/4$.
We then claim that the event $\good$ always occurs on the intersection of the events $\good_{\aggnoise}$, $\good_{\aggbias}$, and $\good_{\aggsecond}$, where $\good_{Z} = \{\curr[Z] \leq \emax/4 \; \text{for all $\run$}\}$.
Indeed, this being trivially the case for $\run=\start$, assume that $\iter[\dstate]\in\dbasin$ for all $\runalt=\running,\run$ for some $\run \geq \start$.
Then, telescoping \eqref{eq:template} yields
\begin{equation}
\next[\energy]
	\leq \init[\energy]
		+ \sum_{\runalt=\start}^{\run} \iter[\step] \dot\energy(\iter[\dstate])
		+ \curr[\aggnoise]
		+ \curr[\aggbias]
		+ \curr[\aggsecond]
	\leq \emax/4
		+ 0
		+ \emax/4
		+ \emax/4
		+ \emax/4
	= \emax
\end{equation}
by the inductive hypothesis and our other assumptions.
This shows that $\next[\energy] \in \dbasin$, so the induction argument is complete, and we conclude that $\good \supseteq \good_{\aggnoise} \cap \good_{\aggbias} \cap \good_{\aggsecond}$.
Now, by \eqref{eq:stab}, we have $\probof{\good_{\aggnoise}} \leq \conf/3$ and likewise for the rest, so we get
\begin{equation}
\probof{\good}
	\geq \probof{\good_{\aggnoise} \cap \good_{\aggbias} \cap \good_{\aggsecond}} = 1- \probof{\good_{\aggnoise} \cup \good_{\aggbias} \cup \good_{\aggsecond}} \geq 1 - \probof{\good_{\aggnoise}} - \probof{\good_{\aggbias}} - \probof{\good_{\aggsecond}} \geq 1 - \conf.
\end{equation}
Our claim then follows directly from \cref{prop:Fejer}.
\end{proof}

\section{Fast convergence to coherent sets}
\label{sec:sharp}

\subsection{The notion of coherence: definition and examples}
\label{sec:coherent-intro}

In this section, we will show that the analysis of the previous section can be strengthened considerably under a ``structural alignment'' notion, which we dub \emph{coherence}.
We begin with the definition and a series of motivating examples.

\begin{definition}
\label{def:coherent}
A nonempty compact subset $\set$ of $\points$ will be called \emph{coherent} if it admits a \textpar{finite} set of \emph{deviation directions} $\devs = \braces{\dev_{1},\dotsc,\dev_{\nDevs}} \subseteq \pspace$ such that
\begin{subequations}
\label{eq:coherent}
\setlength{\abovedisplayskip}{3pt}
\setlength{\belowdisplayskip}{3pt}
\begin{flalign}
\label{eq:polar}
\quad
a)
	&\txs
	\quad
	\braket{\vecfield(\point)}{\dev}
		< 0
		\quad
		\text{for all $\point\in\set$ and all $\dev\in\devs$}.
	&
	\\
\label{eq:target}
b)
	&\txs
	\quad
	\mirror(\dpoint)
		\to \set
		\quad
		\text{whenever $\max_{\dev\in\devs} \braket{\dpoint}{\dev} \to -\infty$}.
	&
\end{flalign}
\end{subequations}
In particular, if \eqref{eq:polar} holds for all $\point\in\points$, we will say that $\set$ is \emph{globally coherent};
and if we want to stress that $\set$ is coherent but not globally so, we will say that $\set$ is \emph{locally coherent}.
\end{definition}

The motivation behind \cref{def:coherent} is as follows.
First, the geometric condition \eqref{eq:polar} posits that any deviation from $\set$ along a vector $\dev\in\devs$ is actively disincentivized by the players' individual gradient field $\vecfield$ so, in a certain sense, $\vecfield$ points locally ``toward'' $\set$.
The second condition is game-independent and asks that the elements of $\devs$ are sufficient to identify $\set$ by acting as primal-dual ``support vectors'' for $\set$ under $\mirror$.
The terminology ``coherence'' has been chosen precisely to indicate that these two properties dovetail to create a favorable convergence landscape under \eqref{eq:method}.
\smallskip

To illustrate this notion, we proceed below with a series of examples.
The first two concern finite games;
the last two concern continuous ones.

\begin{example}
[Strict equilibria in finite games]
\label{ex:strict}
Recall that a strict \acl{NE} of a finite game $\fingame = \fingamefull$ is a strategy profile $\eq$ such that \eqref{eq:NE} holds as a strict inequality for all $\strat\neq\eq$.
An immediate consequence of this definition is that
\begin{enumerate*}
[\itshape a\upshape)]
\item
$\eq$ is \emph{pure}, \ie it is supported on a single pure strategy profile $\pureq\in\pures$;
and that
\item
unilateral deviations from $\pureq$ lead to \emph{strictly} inferior payoffs, \ie $\pay_{\pure}(\pureq_{\play};\pureq_{-\play}) > \pay_{\play}(\pure_{\play};\pureq_{-\play})$ for all $\pure_{\play} \in \pures_{\play}\exclude{\pureq_{\play}}$, $\play\in\players$.
\end{enumerate*}

With this in mind, consider the set of unilateral deviations
\begin{equation}
\label{eq:devs-strict}
\txs
\devs
	= \setdef
	{\bvec_{\play\pure_{\play}} - \bvec_{\play\pureq_{\play}}}
	{\pure_{\play}\in\pures_{\play}\exclude{\pureq_{\play}}, \play\in\players}.
\end{equation}
Since $\braket{\vecfield(\eq)}{\bvec_{\play\pure_{\play}} - \bvec_{\play\pureq_{\play}}} = \pay_{\play}(\pure_{\play};\pureq_{-\play}) - \pay_{\play}(\pureq_{\play};\pureq_{-\play}) < 0$ for all $\pure_{\play}\in\pures_{\play}\exclude{\pureq_{\play}}$, $\play\in\players$, condition \eqref{eq:polar} is satisfied.
\Cref{lem:support} further shows that $\mirror_{\play\pure_{\play}}(\score) \to 0$ whenever $\score_{\play\pure_{\play}} - \score_{\play\pureq_{\play}} \to -\infty$, so the requirement $\mirror(\score) \to \eq$ of \eqref{eq:target} is also satisfied.
In other words, \emph{strict equilibria are coherent}.
\endenv
\end{example}

\begin{example}
[Undominated strategies]
\label{ex:doms}
Recall that a pure strategy $\pure_{\play}\in\pures_{\play}$ is \emph{dominated} by $\purealt_{\play} \in \pures_{\play}$ if $\pay_{\play}(\pure_{\play};\strat_{-\play}) < \pay_{\play}(\purealt_{\play};\strat_{-\play})$ for all $\strat\in\strats$.
We then say that $\pure_{\play}$ is \emph{eliminated} in a mixed strategy profile $\strat\in\strats$ if $\pure_{\play}$ is not supported in $\strat_{\play}$, \ie if $\strat_{\play\pure_{\play}} = 0$.
A fundamental requirement for game-theoretic learning is that dominated strategies become extinct over time, \ie that the trajectory of play converges to the set $\eqs$ of action profiles that eliminate all dominated strategies.%
\footnote{The case of mixed strategies dominated by mixed strategies requires heavier notation, so we do not treat it.}

This set is globally coherent.
To see this, consider the set of dominating deviations
\begin{equation}
\txs
\devs
	= \setdef
	{\bvec_{\play\pure_{\play}} - \bvec_{\play\purealt_{\play}}}
	{\text{$\pure_{\play}$ is dominated by $\purealt_{\play}$}}.
\end{equation}
By definition, $\braket{\payv(\strat)}{\bvec_{\play\pure_{\play}} - \bvec_{\play\purealt_{\play}}} = \pay_{\play}(\pure_{\play};\strat_{-\play}) - \pay_{\play}(\purealt_{\play};\strat_{-\play}) < 0$ for all $\strat\in\strats$, so \eqref{eq:polar} holds globally.
Moreover, for any finite game, $\eqs$ is a face of $\strats$ \cite{SZ92} and hence compact.
Finally, \cref{lem:support} shows that $\mirror_{\play\pure_{\play}}(\score)\to0$ if $\score_{\play\pure_{\play}} - \score_{\play\purealt_{\play}} \to -\infty$, so the requirement $\mirror(\score)\to\eqs$ of \eqref{eq:target} is also satisfied, and we conclude that \emph{the set of undominated strategies is globally coherent}.
\endenv
\end{example}

\begin{example}
[Sharp equilibria in concave games]
\label{ex:sharp}
Following \citet{Pol87}, a \acl{NE} of a concave game is \emph{sharp} if the stationarity condition \eqref{eq:FOS} holds as a strict inequality for all $\point\neq\eq$, \ie
\begin{equation}
\label{eq:sharp}
\tag{Sharp}
\braket{\vecfield(\eq)}{\point - \eq}
	< 0
	\quad
	\text{for all $\point\neq\eq$}.
\end{equation}
Examples of sharp equilibria include
deterministic Nash policies in generic stochastic games \citep{ZRL21},
power control and resource allocation games \citep{SFPP10}, etc.

Geometrically, sharp equilibria can be characterized by the condition that $\vecfield(\eq)$ lies in the (topological) interior of the polar cone $\pcone(\eq)$ to $\points$ at $\eq$.
This means in particular that there exists a polyhedral cone $\cone$ that is spanned by a finite set of vectors $\devs = \braces{\dev_{1},\dotsc,\dev_{\nDevs}} \subseteq \pspace$ such that
\begin{enumerate*}
[\itshape a\upshape)]
\item
the tangent cone $\tcone(\eq)$ to $\points$ at $\eq$ is contained in the interior of $\cone$;
and
\item
$\braket{\vecfield(\eq)}{\dev} < 0$ for all $\dev\in\devs$.
\end{enumerate*}
\Cref{lem:target} in \cref{app:mirror} shows that $\mirror(\dpoint)\to\set$ if $\max_{\dev\in\devs}\braket{\dpoint}{\dev} \to -\infty$, so we conclude that
\emph{sharp equilibria are coherent}.
\endenv
\end{example}

\begin{example}
[Stochastic linear programming]
\label{ex:linear}
To borrow an example from optimization (viewed here as a single-player game), let $\points$ be a convex polytope and consider the stochastic linear program
\begin{equation}
\label{eq:SLP}
\tag{SLP}
\begin{aligned}
\textrm{maximize}
	&\quad
	\pay(\point)
		= \ex_{\seed}\bracks{\braket{\orcl(\seed)}{\point}}
	\\
\textrm{subject to}
	&\quad
	\point \in \points
\end{aligned}
\end{equation}
where $\orcl(\seed)$ is a random payoff vector drawn from some complete probability space $(\seeds,\prob_{\seed})$.
By linearity, the set of solutions $\sols = \argmax\pay$ of \eqref{eq:SLP} is a face of $\points$;
moreover, if we let $\payv = \ex_{\seed}\bracks{\orcl(\seed)} = \nabla\pay(\point)$, we have $\braket{\payv}{\point - \sol} < 0$ whenever $\sol\in\sols$ and $\point\in\points\setminus\sols$.
Finally, since $\points$ is a convex polytope, there exists a finite set of vectors $\devs = \braces{\dev_{1},\dotsc,\dev_{\nDevs}}$ such that
\begin{enumerate*}
[\itshape a\upshape)]
\item
$\sol+\dev \in \points\setminus\sols$ for all $\sol\in\sols$, $\dev\in\devs$;
and
\item
every point $\point\in\points\setminus\sols$ can be decomposed as $\point = \sol + \coef\dev$ for some $\sol\in\sols$, $\dev\in\devs$ and $\coef>0$.
\end{enumerate*}
\Cref{lem:target} in \cref{app:mirror} shows that $\mirror(\dpoint)\to\sols$ whenever $\braket{\dpoint}{\dev}\to-\infty$ for all $\dev\in\devs$, so \eqref{eq:target} is satisfied and we conclude that
\emph{the solution set $\sols$ of \eqref{eq:SLP} is globally coherent}.
\endenv
\end{example}

The above examples illustrate that the notion of coherence underlies a diverse range of game-theoretic settings and problems.
In light of this, we devote the rest of this section to analyzing the convergence properties of coherent sets under \eqref{eq:method}.

\subsection{Convergence analysis and results}
\label{sec:coherent-results}

The first thing to note is that, if $\set$ is coherent, it admits the local energy function
\begin{equation}
\label{eq:template-coherent}
\energy(\dpoint)
	= \log\parens*{1+\sum\nolimits_{\dev\in\devs} \exp{\braket{\dpoint}{\dev}}}
\end{equation}
Indeed, if $\energy(\dpoint) \to \inf\energy = 0$, we must have $\braket{\dpoint}{\dev} \to -\infty$ for all $\dev\in\devs$, and hence $\mirror(\dpoint) \to \set$ by \cref{def:coherent}.
Moreover, for all $\dpoint$ such that $\point = \mirror(\dpoint)$ is sufficiently close to $\set$, we have $\nabla\energy(\dpoint) = \sum_{\dev\in\devs} \braket{\vecfield(\point)}{\dev} e^{\braket{\dpoint}{\dev}} \big/ \parens{1 + \sum_{\dev\in\devs} e^{\braket{\dpoint}{\dev}}} < 0$ by the continuity of $\vecfield$.
This shows that the requirements of \cref{def:energy} are all satisfied, leading to the following corollary of \cref{thm:attract}:

\begin{corollary}
\label{cor:coherent-first}
Suppose that $\set$ is coherent, and let $\curr$ be the sequence of play of \eqref{eq:method} with step-size and gradient signal assumptions as in \cref{thm:attract}.
Then the conclusions of \cref{thm:attract} hold, namely
\begin{enumerate*}
[(\itshape i\hspace*{.5pt}\upshape)]
\item
if $\set$ is globally coherent, $\curr$ converges to $\set$ \acl{wp1};
and
\item
if $\set$ is locally coherent, $\curr$ converges locally to $\set$ with probability at least $1-\conf$ if $\step$ is small enough.
\end{enumerate*}
\end{corollary}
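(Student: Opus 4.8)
The plan is to reduce \cref{cor:coherent-first} to \cref{thm:primal} by checking directly that a coherent set $\set$ is a primal attractor in the sense of \cref{def:primal} — and a \emph{global} one when $\set$ is globally coherent. Throughout I would fix a finite set of deviation directions $\devs = \braces{\dev_{1},\dotsc,\dev_{\nDevs}}$ witnessing \cref{def:coherent} and take as candidate energy function the map $\energy$ of \eqref{eq:template-coherent}. Writing $w_{\dev}(\dpoint) = e^{\braket{\dpoint}{\dev}}\big/\parens{1+\sum_{\dev'\in\devs} e^{\braket{\dpoint}{\dev'}}}$, so that each $w_{\dev}(\dpoint)>0$ and $\sum_{\dev\in\devs} w_{\dev}(\dpoint)<1$, one has $\nabla\energy(\dpoint) = \sum_{\dev\in\devs} \dev\, w_{\dev}(\dpoint)$. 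The routine part comes first: since the $w_{\dev}(\dpoint)$ are nonnegative with sum $<1$, we get $\dnorm{\nabla\energy(\dpoint)}\leq\max_{\dev}\norm{\dev}$, so $\energy$ is Lipschitz; an equally elementary bound on $\Hess\energy$ shows $\energy$ has Lipschitz gradient, which supplies the smoothness estimate \eqref{eq:dsmooth}; and $\energy\geq 0 = \inf\energy>-\infty$. Finally, $\energy(\dpoint)\to\inf\energy=0$ forces $e^{\braket{\dpoint}{\dev}}\to0$, hence $\braket{\dpoint}{\dev}\to-\infty$, for every $\dev\in\devs$, so \eqref{eq:target} gives $\mirror(\dpoint)\to\set$ — which is exactly the target clause of \cref{def:primal}.

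The substantive step is the strict-decrease condition. Using $\nabla\hconj=\mirror$ (\cref{lem:mirror}) together with \eqref{eq:MD}, I would compute, for $\point=\mirror(\dpoint)$,
\[
\dot\energy(\dpoint) = \braket{\vecfield(\point)}{\nabla\energy(\dpoint)} = -\sum\nolimits_{\dev\in\devs} c_{\dev}(\point)\,w_{\dev}(\dpoint),
\qquad c_{\dev}(\point)\defeq-\braket{\vecfield(\point)}{\dev}.
\]
By \eqref{eq:polar}, $c_{\dev}>0$ on $\set$ for every $\dev\in\devs$, and by continuity of $\vecfield$ and compactness of $\set$ and $\devs$ this persists uniformly on a compact neighborhood $\cpt$ of $\set$ in $\points$: $c_{\dev}(\point)\geq c>0$ for all $\point\in\cpt$, $\dev\in\devs$. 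Using the target property just established, I would then fix $\hien>0$ small enough that $\energy(\dpoint)<\hien$ forces $\mirror(\dpoint)\in\cpt$ — otherwise a sequence with $\energy(\dpoint_{\runalt})\to0$ and $\mirror(\dpoint_{\runalt})\notin\cpt$ would contradict \eqref{eq:target}. For $\loen\in(0,\hien)$, the bound $\energy(\dpoint)>\loen$ gives $\sum_{\dev} e^{\braket{\dpoint}{\dev}}>e^{\loen}-1$, hence $\sum_{\dev} w_{\dev}(\dpoint)>1-e^{-\loen}$; combining the two estimates yields $\sup\setdef{\dot\energy(\dpoint)}{\loen<\energy(\dpoint)<\hien}\leq-c\,(1-e^{-\loen})<0$. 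This exhibits $\energy$ as a local energy function and $\set$ as a primal attractor. In the globally coherent case \eqref{eq:polar} holds throughout $\points$, so $\cpt$ may be replaced by $\points$ and the same estimate runs with $\hien\gets\sup\energy$ (the uniform bound $c_{\dev}\geq c$ being automatic whenever $\points$ is compact, as in every instance of \cref{sec:coherent-intro}); thus $\energy$ and $\set$ are global.

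With $\set$ identified as a (global, resp.\ local) primal attractor, both conclusions are immediate: the running hypotheses here are precisely $\pexp+\bexp>1$ and $\pexp-\sexp>1/2$ from \cref{thm:primal}, so Case~1 of that theorem yields $\curr\to\set$ \acl{wp1} when $\set$ is globally coherent, while Case~2 yields convergence to $\set$ from a neighborhood with probability at least $1-\conf$ once $\step$ is small relative to $\conf$ when $\set$ is only locally coherent. I expect the one delicate point to be the one flagged above — promoting the pointwise sign $\dot\energy(\dpoint)<0$, which falls out of \eqref{eq:polar} and the continuity of $\vecfield$, to the \emph{uniform} estimate over energy shells demanded by \cref{def:primal} — and this is exactly where compactness of $\set$ and the target property \eqref{eq:target} both enter; everything else (the Lipschitz and Hessian bounds, $\inf\energy>-\infty$, and the reduction to \cref{thm:primal}) is bookkeeping.
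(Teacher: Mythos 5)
Your proposal is correct and follows essentially the same route as the paper: the paper likewise takes the log-sum-exp energy $\energy(\dpoint)=\log\parens*{1+\sum_{\dev\in\devs}e^{\braket{\dpoint}{\dev}}}$, notes that $\energy(\dpoint)\to0$ forces $\mirror(\dpoint)\to\set$ and that $\dot\energy(\dpoint)<0$ whenever $\mirror(\dpoint)$ is near $\set$, and then invokes \cref{thm:primal}; your shell-by-shell uniform negativity estimate simply fills in what the paper dismisses as ``straightforward to check''. The caveat you flag in the globally coherent case \textendash\ upgrading the pointwise sign in \eqref{eq:polar} to a uniform bound over all of $\points$, which you secure via compactness \textendash\ is glossed over in the paper's own argument as well (all of its globally coherent examples have compact $\points$), so it does not set your proof apart from the paper's.
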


\cref{cor:coherent-first} is a strong convergence guarantee in itself, but it does not exploit the sharper structural properties of coherent sets.
As we show below, the assumptions of \cref{thm:attract} on the method's step-size and gradient signals can be relaxed considerably, allowing in many cases the use of even \emph{constant} step-sizes.
To simplify the presentation, we will assume throughout that \eqref{eq:method} adheres to the general parameter schedule \eqref{eq:schedule}.
In this general setting, we have:

\begin{theorem}
\label{thm:coherent}
Let $\curr = \mirror(\curr[\dstate])$ be the sequence of play generated by \eqref{eq:method} with step-size and gradient signal sequences as per \eqref{eq:schedule}.
Then:
\begin{enumerate}
[left=0em,label={\normalshape\bfseries Case \arabic*:}]
\item
If $\set$ is globally coherent, then, from any initialization, $\curr$ converges to $\set$ \ac{wp1}.
\item
If $\set$ is locally coherent and, in addition,
\begin{enumerate*}
[\upshape(\itshape i\hspace*{1pt}\upshape)]
\item
$\stepexp - \noisexp > 1/2$;
or
\item
$0 \leq \stepexp < \qexp/(2+\qexp)$ and $\noisexp < 1/2 - 1/\qexp$,
\end{enumerate*}
there exists an open initialization domain $\dnhd\subseteq\dspace$ such that, for any $\conf>0$
\begin{equation}
\label{eq:coherent-local}
\probof{\text{$\curr$ converges to $\set$} \given \init[\dstate]\in\dnhd}
	\geq 1-\conf
\end{equation}
provided that $\step>0$ is small enough.
\end{enumerate}
\end{theorem}

Before discussing the proof of \cref{thm:coherent}, we present a series of explicit convergence guarantees for specific algorithms:

\begin{corollary}
\label{cor:coherent-global}
Suppose that \crefrange{alg:SGA}{alg:EXP3} are run with step-size $\curr[\step] \propto 1/\run^{\stepexp}$, $\stepexp\in[0,1]$, and, where applicable, a sampling parameter $\curr[\mix] \propto 1/\run^{\mixexp}$, $\mixexp\in(0,1/2)$.
If $\set$ is globally coherent, $\curr$ converges to $\set$ \acl{wp1} provided the following conditions are met:
\begin{itemize}
\item
For \cref{alg:SGA,alg:EW,alg:SPSA,alg:DGA,alg:EXP3}:
	\tabto{11em}
	no other requirements needed.
\item
For \cref{alg:EG,alg:OG}:
	\tabto{11em}
	$\stepexp>0$.
\end{itemize}
\end{corollary}

\begin{corollary}
\label{cor:coherent-local}
Suppose that \crefrange{alg:SGA}{alg:EXP3} are run with step-size $\curr[\step] \propto 1/\run^{\stepexp}$, $\stepexp\in[0,1]$, and, where applicable, a sampling parameter $\curr[\mix] \propto 1/\run^{\mixexp}$, $\mixexp\in(0,1/2)$.
Then the conclusions of \cref{thm:coherent} for locally coherent sets continue to hold provided the following conditions are met:
\begin{itemize}
\item
For \cref{alg:SGA}:
	\tabto{11em}
	$\stepexp>1/2$ if $\qexp=2$;
	no such requirement needed if $\qexp>2$.
\item
For \cref{alg:EG,alg:OG}:
	\tabto{11em}
	$\stepexp > 1/2$ if $\qexp=2$;
	$\stepexp > 0$ otherwise.
\item
For \cref{alg:EW,alg:SPSA,alg:DGA,alg:EXP3}:
	\tabto{11em}
	no other requirements needed.
\end{itemize}
\end{corollary}

We should stress here that, depending on the statistical properties of the players' feedback mechanism, the above results imply convergence even with a \emph{constant} step-size, a feature which is quite unique in the context of stochastic approximation.
To the best of our knowledge, the only comparable result in the literature in terms of step-size assumptions is the recent work of \citet{GVM21b} for local convergence to strict \aclp{NE}:
since strict equilibria are locally coherent, the analysis of \citet{GVM21b} corresponds to the last item of \cref{cor:coherent-local}.

Perhaps surprisingly, the principal reason for this relaxation in terms of step-size requirements is \emph{not} the boundedness of the $\qexp$-th moments of the players' oracle:
the step-size requirements of \cref{sec:primal} cannot be relaxed for non-coherent attractors even if $\qexp = \infty$;
at the same time,
the convergence guarantees of \cref{thm:coherent} for globally coherent sets yield convergence with a constant step-size even when $\qexp=2$.
Instead, as we hinted at before, these sharper convergence properties are due to the fact that the quadratic error term $\curr[\aggsecond] = \sum_{\runalt=\start}^{\run} \iter[\step]^{2} \iter[\second]^{2}$ is not present in the case of coherent sets:
it is precisely this simplification that leads to convergence with significantly faster step-size schedules.

Our last result builds on this observation to show that convergence occurs at a finite number of iterations if the mirror map of the process is surjective (\eg if it is a Euclidean projection):%
\footnote{As we explain in \cref{app:mirror}, the image $\im\mirror$ of $\mirror$ coincides with the prox-domain $\proxdom = \dom\subd\hreg$ of $\hreg$.
As such, a sufficient condition for $\mirror$ to be surjective is for $\hreg$ to be Lipschitz continuous on $\points$.}

\begin{theorem}
\label{thm:rate}
Suppose that the mirror map $\mirror\from\dspace\to\points$ of \eqref{eq:method} is surjective.
If $\set$ is coherent, then, \acl{wp1}, every trajectory $\curr = \mirror(\curr[\dstate])$ that converges to $\set$ does so in a finite number of iterations, \ie
there exists some $\run_{0}$ such that $\curr\in\set$ for all $\run\geq\run_{0}$.
\end{theorem}

\begin{corollary}
\label{cor:rate}
Suppose that \eqref{eq:method} is run with Euclidean projections and step-size and gradient signal sequences as per \eqref{eq:schedule}.
If $\set$ is globally coherent and $\points$ is compact, the induced sequence of play $\curr = \mirror(\curr[\dstate])$ converges to $\set$ in a finite number of iterations \as.
\end{corollary}

In view of the above, coherent sets comprise perhaps the most well-behaved class of rational outcomes under \eqref{eq:method}:
the agents' sequence of play converges to such sets in a finite number of iterations, even with bandit, payoff-based feedback.
In turn, this means that the algorithms' long-run behavior remains robust in the face of uncertainty, a property with important implications for the general theory of learning in games.

\subsection{Technical proofs}
\label{sec:proofs-attract}

We conclude this section with the proof of \cref{thm:coherent}.
The key step to achieve this is the following refinement of \cref{lem:template} for coherent sets.

\begin{lemma}
\label{lem:template-coherent}
Suppose that $\set\subseteq\points$ is coherent, and let $\energy_{\dev}(\dpoint) = \braket{\dpoint}{\dev}$ for $\dpoint\in\dspace$, $\dev\in\devs$.
Then the iterates $\curr[\energy] = \energy_{\dev}(\curr[\dstate])$ of $\energy_{\dev}$ satisfy the template inequality
\begin{equation}
\label{eq:template-coherent}
\next[\energy]
	\leq \curr[\energy]
		+ \curr[\step] \braket{\vecfield(\curr)}{\dev}
		+ \curr[\step] \curr[\snoise]
		+ \curr[\step] \curr[\sbias].
\end{equation}
where the error terms $\curr[\snoise]$ and $\curr[\sbias]$ are now given by
\begin{equation}
\label{eq:errors-coherent}
\curr[\snoise]
	= \braket{\curr[\noise]}{\dev}
	\qquad
	\text{and}
	\qquad
\curr[\sbias]
	= \max\nolimits_{\dev\in\devs} \norm{\dev} \cdot \curr[\bbound].
\end{equation}
\end{lemma}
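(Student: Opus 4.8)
The plan is to prove \cref{lem:template-coherent} by a direct computation that is even simpler than the one behind \cref{lem:template}, since $\energy_{\dev}(\dpoint) = \braket{\dpoint}{\dev}$ is a \emph{linear} functional of $\dpoint$ and therefore carries no second-order remainder. First I would feed the update rule $\next[\dstate] = \curr[\dstate] + \curr[\step]\curr[\signal]$ of \eqref{eq:MRM} into $\energy_{\dev}$ and use linearity to get the \emph{exact} identity $\next[\energy] = \energy_{\dev}(\next[\dstate]) = \energy_{\dev}(\curr[\dstate]) + \curr[\step]\braket{\curr[\signal]}{\dev} = \curr[\energy] + \curr[\step]\braket{\curr[\signal]}{\dev}$. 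Formally this is just the second-order-free instance of the smoothness bound \eqref{eq:dsmooth} that underlies \cref{lem:template}, with $\nabla\energy(\dpoint)\equiv\dev$ and a vanishing quadratic term, so one could also invoke \cref{lem:template} directly; the present statement only sharpens the constants for this particular choice of energy function.

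Next I would substitute the Robbins\textendash Monro decomposition $\curr[\signal] = \vecfield(\curr) + \curr[\noise] + \curr[\bias]$ of \eqref{eq:signal} into the identity above, splitting the increment into a drift term $\curr[\step]\braket{\vecfield(\curr)}{\dev}$, a fluctuation term $\curr[\step]\curr[\snoise]$ with $\curr[\snoise] = \braket{\curr[\noise]}{\dev}$ (which is a martingale-difference term since $\exof{\curr[\noise]\given\curr[\filter]} = 0$), and a residual bias term $\curr[\step]\braket{\curr[\bias]}{\dev}$. The first two of these match the statement verbatim, so it only remains to absorb the bias term into a deterministic increment of size $\curr[\step]\curr[\sbias]$.

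For the bias term I would bound $\braket{\curr[\bias]}{\dev} \leq \dnorm{\curr[\bias]}\,\norm{\dev}$ by definition of the dual norm, and then use the observation that $\curr[\bias]$ is $\curr[\filter]$-measurable (by \eqref{eq:bias-noise}), so that the conditional estimate $\exof{\dnorm{\curr[\bias]}\given\curr[\filter]}\leq\curr[\bbound]$ of \eqref{eq:errorbounds} is in fact a pathwise bound $\dnorm{\curr[\bias]}\leq\curr[\bbound]$ \ac{wp1}. Taking the supremum over $\dev\in\devs$ of the norm factor then gives $\braket{\curr[\bias]}{\dev}\leq\curr[\bbound]\max_{\dev'\in\devs}\norm{\dev'} = \curr[\sbias]$, and combining this with the two previous steps yields the asserted template inequality \eqref{eq:template-coherent}. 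There is no genuine obstacle here; the only point worth flagging is precisely the $\curr[\filter]$-measurability of $\curr[\bias]$, which is what turns the a priori conditional bias bound into a pointwise one and lets us keep $\curr[\sbias]$ deterministic while taking the maximum \emph{uniformly} over the (finite) deviation set $\devs$ — the feature that makes this refinement useful in the sharper convergence analysis that follows.
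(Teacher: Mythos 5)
Your proposal is correct and follows essentially the same route as the paper, whose proof is simply to substitute the update rule of \eqref{eq:MRM} into the linear function $\energy_{\dev}$ and decompose the gradient signal as in \eqref{eq:signal}. The extra detail you flag — that the $\curr[\filter]$-measurability of $\curr[\bias]$ turns the conditional bound of \eqref{eq:errorbounds} into a pathwise bound, so that the maximum over the finite set $\devs$ keeps $\curr[\sbias]$ deterministic — is exactly the point the paper leaves implicit, and your treatment of it is sound.
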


\begin{proof}
Simply set $\score \gets \next[\dstate]$ in $\energy_{\dev}(\score)$ and invoke the definition of \eqref{eq:method}.
\end{proof}

Compared to \cref{lem:template}, the template inequality \eqref{eq:template-coherent} \emph{does not} have a second-order term, so the second moment of $\curr[\signal]$ plays a much more minor role when dealing with coherent sets.
This can be seen very clearly in the following coherent analogue of \cref{prop:attract-global}:

\begin{proposition}
\label{prop:coherent-global}
Suppose that $\set$ is globally coherent,
and
let $\curr = \mirror(\curr[\dstate])$ be the sequence of play generated by \eqref{eq:method}.
If \eqref{eq:sub} holds, then $\curr$ converges to $\set$ \acl{wp1}.
\end{proposition}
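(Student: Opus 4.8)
The plan is to capitalize on the one structural advantage of coherent sets: the template inequality \eqref{eq:template-coherent} of \cref{lem:template-coherent} carries \emph{no} second-order term, in contrast with the generic energy inequality \eqref{eq:template}, and this is precisely what lets us replace the full summability requirement \eqref{eq:sum} by the weaker \eqref{eq:sub}. Since $\set$ is globally coherent, condition \eqref{eq:target} reduces the claim to showing $\max_{\dev\in\devs}\braket{\curr[\dstate]}{\dev}\to-\infty$ \acl{wp1}, and because $\devs$ is finite it suffices to prove that $\braket{\curr[\dstate]}{\dev}\to-\infty$ \as for each fixed $\dev\in\devs$. So I would fix such a $\dev$ and track the scalar iterates $\curr[\energy]=\energy_{\dev}(\curr[\dstate])=\braket{\curr[\dstate]}{\dev}$.

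First I would telescope \eqref{eq:template-coherent} from $\start$ to $\run$, obtaining
\[
\next[\energy]\;\leq\;\init[\energy]+\sum_{\runalt=\start}^{\run}\iter[\step]\braket{\vecfield(\iter)}{\dev}+\curr[\aggnoise]+\curr[\aggbias],
\]
where $\curr[\aggnoise]=\sum_{\runalt=\start}^{\run}\iter[\step]\braket{\iter[\noise]}{\dev}$ and $\curr[\aggbias]=\parens*{\max_{\dev'\in\devs}\norm{\dev'}}\sum_{\runalt=\start}^{\run}\iter[\step]\iter[\bbound]$ are the aggregate noise and bias processes of \eqref{eq:errors-agg} associated to the linear energy function $\energy_{\dev}$. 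Since $\nabla\energy_{\dev}\equiv\dev$ is bounded, \eqref{eq:sub}, read off for $\energy_{\dev}$, gives $(\curr[\aggnoise]+\curr[\aggbias])/\curr[\efftime]\to0$ \as. The decisive point is that the second-order process $\curr[\aggsecond]$ never appears in this bound, which is exactly why \eqref{eq:sum} can be weakened to \eqref{eq:sub} here.

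Next I would bound the drift using global coherence: \eqref{eq:polar} now holds for \emph{all} $\point\in\points$, so by continuity of $\vecfield$ and finiteness of $\devs$ there is a constant $\const>0$ with $\braket{\vecfield(\point)}{\dev}\leq-\const$ throughout $\points$, whence $\sum_{\runalt=\start}^{\run}\iter[\step]\braket{\vecfield(\iter)}{\dev}\leq-\const\,\curr[\efftime]$. Plugging this into the telescoped inequality gives
\[
\next[\energy]\;\leq\;\init[\energy]-\parens*{\const-\frac{\curr[\aggnoise]+\curr[\aggbias]}{\curr[\efftime]}}\curr[\efftime],
\]
and since $\curr[\efftime]\to\infty$ (recall $\sum_{\run}\curr[\step]=\infty$) while the parenthesised quantity tends to $\const>0$ \as, we conclude $\braket{\curr[\dstate]}{\dev}\to-\infty$ \as. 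Carrying this out for each of the finitely many $\dev\in\devs$ and appealing to \eqref{eq:target} then yields $\curr=\mirror(\curr[\dstate])\to\set$ \acl{wp1}, as claimed.

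The step I expect to be the main obstacle is establishing the \emph{uniform} negativity $\braket{\vecfield(\point)}{\dev}\leq-\const$ of the drift: global coherence only supplies pointwise strict negativity, and a merely pointwise bound is too weak to overpower the $o(\curr[\efftime])$ aggregate error term. This is exactly where compactness of the action space $\points$ is invoked — it holds automatically in the globally coherent settings of \cref{ex:doms,ex:linear} — and, in its absence, one would need a separate hypothesis bounding the iterates or controlling $\vecfield$ at infinity. The only other care, routine, is to read \eqref{eq:sub} relative to the linear functionals $\energy_{\dev}$ rather than the log-sum-exp energy of \cref{cor:coherent-first}; this is harmless since every gradient in play is bounded by $\max_{\dev\in\devs}\norm{\dev}$, so the sublinearity is verified exactly as in \cref{lem:sub}.
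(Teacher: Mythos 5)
Your proposal is correct and follows essentially the same route as the paper's proof: telescope \eqref{eq:template-coherent}, bound the drift by a uniform constant $-\const<0$, use \eqref{eq:sub} to make the aggregate noise and bias sublinear in $\curr[\efftime]$, conclude $\braket{\curr[\dstate]}{\dev}\to-\infty$ for each $\dev\in\devs$, and invoke \eqref{eq:target}. The uniform-negativity point you flag is exactly the step the paper handles implicitly by taking $\elvl=-\infty$ in its definition of the constant $\const$ (which indeed rests on compactness/continuity, automatic in its globally coherent examples), so your treatment is, if anything, slightly more explicit than the original.
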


The crucial difference between \cref{prop:attract-global,prop:coherent-global} is that the former requires the summability condition \eqref{eq:sum}, while the latter requires \emph{only} the subleading growth requirement \eqref{eq:sub}.
The latter assumption grants much more flexibility to the players because they can employ practically \emph{any} step-size of the form $\curr[\step] \propto 1/\run^{\stepexp}$ for some $\stepexp\in[0,1]$.
A similar situation arises for locally coherent sets, in which case the stability requirement \eqref{eq:stab} can be replaced by the ``dominance'' condition
\begin{subequations}
\makeatletter
\def\@currentlabel{Dom}
\makeatother
\label{eq:dom}
\renewcommand{\theequation}{Dom.\Roman{equation}}
\label{eq:dom}
\noindent%
\begin{flalign}
\label{eq:dom-noise}
\tag{Dom.$\aggnoise$}
\probof{\curr[\aggnoise] \leq \Const\curr[\efftime]^{\texp}/2\;\;\text{for all $\run$}}
	&\geq 1-\conf
	\\
\label{eq:dom-bias}
\tag{Dom.$\aggbias$}
\probof{\curr[\aggbias] \leq \Const\curr[\efftime]^{\texp}/2\;\;\text{for all $\run$}}
	&\geq 1-\conf
\end{flalign}
\end{subequations}
for some $\Const>0$ and $\texp\in[0,1)$.
Under this milder condition, we have:

\begin{proposition}
\label{prop:coherent-local}
Suppose that $\set$ is locally coherent,
fix some confidence level $\conf>0$,
and
let $\curr = \mirror(\curr[\dstate])$ be the sequence of play generated by \eqref{eq:method}.
If \eqref{eq:sub} and \eqref{eq:dom} hold, there exists an unbounded open initialization domain $\dnhd\subseteq\dspace$ such that
\begin{equation}
\probof{\text{$\curr$ converges to $\set$} \given \init[\dstate]\in\dnhd}
	\geq 1-(\nDevs+1)\conf.
\end{equation}
\end{proposition}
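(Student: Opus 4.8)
The plan is to run the confinement-and-drift scheme behind \cref{prop:primal-local}, but to carry it out directly on the linear energies $\energy_{\dev}(\dpoint)=\braket{\dpoint}{\dev}$, $\dev\in\devs$, rather than on a single smooth energy function. The simplification afforded by coherence is that the template inequality of \cref{lem:template-coherent} carries \emph{no} second-order term, so the only quantities to tame are the martingale part $\curr[\snoise]=\braket{\curr[\noise]}{\dev}$ and the bias part $\curr[\sbias]$, which is precisely what \eqref{eq:sub} and \eqref{eq:dom} are built to control. First I would fix the geometry: since $\set$ is locally coherent, condition \eqref{eq:polar} together with the continuity of $\vecfield$ (\cref{asm:game}) and the compactness of $\set$ furnishes a compact neighborhood $\cpt$ of $\set$ in $\points$ and a constant $\const>0$ with $\braket{\vecfield(\point)}{\dev}\leq-\const$ for all $\point\in\cpt$ and all $\dev\in\devs$; condition \eqref{eq:target} then yields a threshold $R>0$ such that $\mirror(\dpoint)\in\cpt$ whenever $\max_{\dev\in\devs}\braket{\dpoint}{\dev}\leq-R$. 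Writing $M\defeq\sup_{t\geq0}\parens*{\Const t^{\texp}-\const t}$, which is finite because $\texp<1$ (and nonnegative, since $t=0$ gives $0$), I would take the set of initializations to be
\[
\dnhd
	= \setdef{\dpoint\in\dspace}{\braket{\dpoint}{\dev} < -R-M \text{ for all } \dev\in\devs}.
\]
This is open, being a finite intersection of open halfspaces, and unbounded, because \eqref{eq:polar} forces $0\notin\conv\devs$ (Gordan's theorem), hence there is a direction $\dvec$ with $\braket{\dvec}{\dev}>0$ for all $\dev\in\devs$, along whose negative ray $\dnhd$ escapes to infinity.

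Next I would telescope. Summing the inequality of \cref{lem:template-coherent} over $\runalt=\running,\run$ and using $\braket{\vecfield(\iter)}{\dev}\leq-\const$ on any stretch where the iterates stay in $\cpt$ gives, for each $\dev\in\devs$,
\[
\braket{\next[\dstate]}{\dev}
	\leq \braket{\init[\dstate]}{\dev}
		- \const\,\curr[\efftime]
		+ \curr[\aggnoise]
		+ \curr[\aggbias],
\]
where $\curr[\efftime]=\sum_{\runalt=\start}^{\run}\iter[\step]$ is the effective time, $\curr[\aggnoise]=\sum_{\runalt=\start}^{\run}\iter[\step]\iter[\snoise]$, $\curr[\aggbias]=\sum_{\runalt=\start}^{\run}\iter[\step]\iter[\sbias]$, and $\iter[\snoise],\iter[\sbias]$ are as in \cref{lem:template-coherent}. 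Let $\event$ denote the event on which each of the $\nDevs$ noise aggregates (one for each $\dev\in\devs$) and the single bias aggregate stay below $\Const\curr[\efftime]^{\texp}/2$ for all $\run$; by \eqref{eq:dom} and a union bound, $\probof{\event}\geq 1-(\nDevs+1)\conf$, which is the probability in the statement. On $\event$ the displayed bound reads $\braket{\next[\dstate]}{\dev}\leq\braket{\init[\dstate]}{\dev}-\const\curr[\efftime]+\Const\curr[\efftime]^{\texp}\leq\braket{\init[\dstate]}{\dev}+M$ on such stretches.

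The heart of the matter is then an induction on $\run$ showing that, on $\event$ and starting from $\init[\dstate]\in\dnhd$, the iterates $\curr=\mirror(\curr[\dstate])$ stay in $\cpt$ for \emph{every} $\run$. The base case is immediate: $\init[\dstate]\in\dnhd$ gives $\max_{\dev}\braket{\init[\dstate]}{\dev}<-R-M\leq-R$, hence $\init\in\cpt$. For the step, if the iterates lie in $\cpt$ up to stage $\run$, then the telescoped bound is valid up to $\run$, so on $\event$ we get $\braket{\next[\dstate]}{\dev}\leq\braket{\init[\dstate]}{\dev}+M<-R$ for every $\dev$, whence $\max_{\dev}\braket{\next[\dstate]}{\dev}<-R$ and $\next\in\cpt$. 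Once confinement is secured for all $\run$, the telescoped inequality holds for all $\run$; since $\curr[\efftime]\to\infty$ (because $\sum_{\run}\curr[\step]=\infty$) while $\curr[\aggnoise]+\curr[\aggbias]$ is of order $\curr[\efftime]^{\texp}=o(\curr[\efftime])$ on $\event$ — and $o(\curr[\efftime])$ almost surely by \eqref{eq:sub} — the right-hand side tends to $-\infty$. Hence $\braket{\curr[\dstate]}{\dev}\to-\infty$ for every $\dev\in\devs$, and \eqref{eq:target} gives $\curr=\mirror(\curr[\dstate])\to\set$ on $\event$, which is the desired conclusion.

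The step I expect to be the main obstacle is the confinement induction. In \cref{prop:primal-local} one stays inside a sublevel set of a genuine energy function, which is forward-invariant under the flow; here the set I must remain in — the primal neighborhood $\cpt$ — is not dynamically invariant, and the drift estimate $\braket{\vecfield(\point)}{\dev}\leq-\const$ is available only \emph{while} $\point\in\cpt$, \ie while \emph{all} $\nDevs$ functionals $\braket{\cdot}{\dev}$ sit below $-R$ simultaneously. Closing this bookkeeping requires exactly the uniform-in-$\run$, explicitly polynomial form of \eqref{eq:dom} (the weaker almost-sure estimate \eqref{eq:sub} alone would not let me pre-allocate a fixed margin $M$), and the finiteness $M<\infty$ is where the restriction $\texp<1$ enters decisively. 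A minor but genuine side task is verifying that $\dnhd$ is unbounded, which I would reduce to $0\notin\conv\devs$ via Gordan's theorem applied to \eqref{eq:polar}.
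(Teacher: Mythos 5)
Your argument is correct and is essentially the paper's own proof: the same telescoping of the coherent template inequality \eqref{eq:template-coherent}, the same pre-allocated margin (your $M$ is the paper's $\ediff$, finite because $\texp<1$), the same confinement induction on the event that the $\nDevs$ noise aggregates and the bias aggregate obey \eqref{eq:dom}, and the same union bound yielding $1-(\nDevs+1)\conf$ before concluding via \eqref{eq:target}. The only cosmetic differences are that you phrase confinement through a primal neighborhood $\cpt$ instead of the dual sublevel sets $\dnhd(\elvl)$, and you add an explicit (and welcome) verification via Gordan's theorem that the initialization set is unbounded, which the paper leaves implicit.
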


To prove \cref{prop:coherent-global,prop:coherent-local} \textendash\ and, through them \cref{thm:coherent,thm:rate} \textendash\ it will be convenient to introduce the family of sets
\begin{equation}
\label{eq:dset}
\txs
\dnhd(\elvl)
	= \setdef
		{\dpoint\in\dspace}
		{\max_{\dev\in\devs} \braket{\dpoint}{\dev} < -\elvl}.
\end{equation}
By \cref{def:coherent}, these sets are mapped to neighborhoods of $\set$ under $\mirror$, so they are particularly well-suited to serve as initialization domains for \eqref{eq:method}.
In particular, by the requirements of \cref{def:coherent} and the continuity of $\vecfield$, there exists some $\elvl$ such that $\const \defeq -\sup\setdef{\braket{\vecfield(\mirror(\dpoint))}{\dev}}{\dpoint\in\dnhd(\elvl), \dev\in\devs} < 0$.
With all this in hand, the proofs of \cref{prop:coherent-global,prop:coherent-local} are fairly straightforward.

\begin{proof}[Proof of \cref{prop:coherent-global}]
Since $\set$ is globally coherent, we can take $\elvl = -\infty$ in the definition of $\dnhd(\elvl)$ above.
Then, telescoping \eqref{eq:template-coherent} readily yields
\begin{equation}
\label{eq:template-agg}
\energy_{\dev}(\next[\dstate])
	\leq \energy_{\dev}(\init[\dstate])
		- \const \curr[\efftime]
		+ \curr[\aggnoise]
		+ \curr[\aggbias]
		\quad
		\text{for all $\dev\in\devs$}.
\end{equation}
Thus, if \eqref{eq:sub} holds, we get $\energy_{\dev}(\curr[\dstate]) \to -\infty$ for all $\dev\in\devs$, \ie $\curr = \mirror(\curr[\dstate]) \to \set$.
\end{proof}

\begin{proof}[Proof of \cref{prop:coherent-local}]
Let $\texp\in[0,1)$ be such that \eqref{eq:dom} holds for every $\dev\in\devs$ (recall that $\curr[\snoise]$ depends on $\dev$), and let
$\ediff = \max_{\run}\braces{\Const\curr[\efftime]^{\texp} - \const\curr[\efftime]}$.
Then, if $\init[\dstate]$ is initialized in $\dnhd \defeq \dnhd(\elvl + \ediff)$, we claim that $\curr[\dstate]\in\dnhd(\elvl)$ for all $\run$.
Indeed, this being trivially true for $\run=1$, assume it to be the case for all $\runalt = \running,\run$.
Then, by \eqref{eq:template-coherent} and our inductive hypothesis, we get
\begin{flalign}
\label{eq:template-agg2}
\energy_{\dev}(\next[\dstate])
	&\leq \energy_{\dev}(\init[\dstate])
		- \sum_{\runalt=\start}^{\run} \iter[\step] \braket{\vecfield(\iter)}{\dev}
		+ \curr[\aggnoise]
		+ \curr[\aggbias]
	\notag\\
	&\leq \energy_{\dev}(\init[\dstate])
		- \const \curr[\efftime]
		+ \Const \curr[\efftime]^{\texp}/2
		+ \Const \curr[\efftime]^{\texp}/2
	\leq -\elvl - \ediff
		+ \ediff
	\leq -\elvl
\end{flalign}
\ie $\next[\dstate] \in \dnhd(\elvl)$, as claimed.
Since $\curr[\dstate] \in \dnhd(\elvl)$ for all $\run$, we conclude that \eqref{eq:template-agg} holds \acl{wp1} on the event that \eqref{eq:dom-noise} and \eqref{eq:dom-bias} both hold for all $\dev\in\devs$.
Since \eqref{eq:dom-noise} involves $\abs{\devs} = \nDevs$ separate events (one for each $\dev\in\devs$) and $\curr[\aggbias]$ does not depend on $\dev$, it follows that $\energy_{\dev}(\curr[\dstate]) \to -\infty$ for all $\dev\in\devs$ with probability at least $1-(\nDevs+1)\conf$.
Our claim then follows from \cref{def:coherent}.
\end{proof}

We are now in a position to prove \cref{thm:coherent}.

\begin{proof}[Proof of \cref{thm:coherent}]
As in the case of \cref{thm:attract}, our proof will hinge on showing that \eqref{eq:sub} and \eqref{eq:dom} hold under the stated step-size and sampling parameter schedules.
Our claim will then follow by a direct application of \cref{prop:coherent-global,prop:coherent-local}.

First, regarding \eqref{eq:sub}, the law of large numbers for martingale difference sequences \citep[Theorem 2.18]{HH80} shows that $\curr[\aggnoise] / \curr[\efftime] \to 0$ \ac{wp1} on the event $\braces*{\sum_{\run} \curr[\step]^{2} \exof{\curr[\snoise]^{2} \given \curr[\filter]} / \curr[\efftime]^{2} < \infty}$.
However
\begin{equation}
\exof{\curr[\snoise]^{2} \given \curr[\filter]}
	\leq \norm{\dev}^{2} \exof{\dnorm{\curr[\noise]}^{2} \given \curr[\filter]}
	\leq \norm{\dev}^{2} \curr[\sdev]^{2}
	= \bigoh(\run^{2\noisexp})
\end{equation}
so, in turn, given that $\noisexp < 1/2$, we get
\begin{equation}
\sum_{\run} \frac{\curr[\step]^{2} \exof{\curr[\snoise]^{2} \given \curr[\filter]}}{\curr[\efftime]^{2}}
	= \bigoh\parens*{ \sum_{\run} \frac{\curr[\step]^{2} \curr[\sdev]^{2}}{\curr[\efftime]^{2}} }
	= \bigoh\parens*{ \sum_{\run} \frac{\run^{-2\stepexp} \run^{2\noisexp}}{\run^{2(1-\stepexp)}} }
	= \bigoh\parens*{ \sum_{\run} \frac{1}{\run^{2 - 2\noisexp}} }
	< \infty.
\end{equation}
This establishes \eqref{eq:sub} for $\curr[Z] \gets \curr[\aggnoise]$;
as for the case of $\curr[\aggbias]$, our claim follows by noting that $\sum_{\runalt=\start}^{\run} \iter[\step] \iter[\bbound] \big/ \sum_{\runalt=\start}^{\run} \iter[\step] \to 0$ if and only if $\curr[\bbound] \to 0$, which is immediate from \eqref{eq:schedule}.
This shows that \eqref{eq:sub} holds, so the first case of the theorem follows from \cref{prop:coherent-global}.

Now, for the second case of the theorem, since $\curr[\bbound]$ is deterministic and $\curr[\bbound] = \bigoh(1/\run^{\biasexp})$ for some $\biasexp>0$, it is always possible to find $\Const>0$ and $\texp\in(0,1)$ so that \eqref{eq:dom-bias} holds.
We are thus left to establish \eqref{eq:dom-noise}.
To that end, let $\curr[\maxnoise] = \sup_{1\leq\runalt\leq\run} \abs{\curr[\aggnoise]}$ and set $\curr[P] \defeq \probof*{\curr[\maxnoise] > \Const\curr[\efftime]^{\texp}/2}$ so
\begin{equation}
\label{eq:dom1}
\curr[P]
	\leq \frac{\exof{\abs{\curr[\aggnoise]}^{\qexp}}}{(\Const/2)^{\qexp}\curr[\efftime]^{\texp\qexp}}
	\leq \const_{\qexp}
		\frac{\exof{\parens*{\sum_{\runalt=\start}^{\run} \iter[\step]^{2} \dnorm{\iter[\noise]}^{2}}^{\qexp/2}}}{\curr[\efftime]^{\texp\qexp}}
\end{equation}
where $c_{\qexp}$ is a positive constant depending only on $\Const$ and $\qexp$, and we used Kolmogorov's inequality \citep[Corollary 2.1]{HH80} in the first step and the \acl{BDG} inequality \citep[Theorem 2.10]{HH80} in the second.
To proceed, we will require the following variant of Hölder's inequality \citep[p.~15]{Ben99}:
\begin{equation}
\label{eq:Holder}
\parens*{\sum_{\runalt=\start}^{\run} \iter[a]\iter[b]}^{\rho}
	\leq \parens*{\sum_{\runalt=\start}^{\run} \iter[a]^{\frac{\lambda\rho}{\rho-1}}}^{\rho-1}
		\sum_{\runalt=\start}^{\run} \iter[a]^{(1-\lambda)\rho} \iter[b]^{\rho}
\end{equation}
valid for all $\iter[a],\iter[b] \geq 0$ and all $\rho>1$, $\lambda\in[0,1)$.
Then, substituting $\iter[a] \gets \iter[\step]^{2}$, $\iter[b] \gets \dnorm{\iter[\noise]}^{2}$, $\rho \gets \qexp/2$ and $\lambda \gets 1/2 - 1/\qexp$, \eqref{eq:dom1} gives
\begin{align}
\label{eq:dom2}
\curr[P]
	\leq \const_{\qexp}
		\frac{
		\parens*{\sum_{\runalt=\start}^{\run} \iter[\step]}^{\qexp/2 - 1}
			\sum_{\runalt=\start}^{\run} \iter[\step]^{1+\qexp/2} \exof{\dnorm{\iter[\noise]}^{\qexp}}}
			{\curr[\efftime]^{\texp\qexp}}
	\leq \const_{\qexp}
		\frac{\sum_{\runalt=\start}^{\run} \iter[\step]^{1+\qexp/2} \iter[\sdev]^{\qexp}}
			{\curr[\efftime]^{1+(\texp-1/2)\qexp}}
\end{align}

We now consider two cases, depending on whether the numerator of \eqref{eq:dom2} is summable or not.
\begin{enumerate}
[left=0em,label={\itshape Case \arabic*:}]
\item
$\stepexp(1+\qexp/2) \geq 1 + \qexp\noisexp$.
In this case, the numerator of \eqref{eq:dom2} is summable under \eqref{eq:schedule}, so the fraction in \eqref{eq:dom2} behaves as $\bigoh(1/\run^{(1-\stepexp) (1 + (\texp-1/2)\qexp)})$.
\item
$\stepexp(1+\qexp/2) < 1 + \qexp\noisexp$.
In this case, the numerator of \eqref{eq:dom2} is not summable under \eqref{eq:schedule}, so the fraction in \eqref{eq:dom2} behaves as $\bigoh\parens*{\run^{1-\stepexp(1+\qexp/2) + \qexp\noisexp} \big/ \run^{(1-\stepexp) (1 + (\texp-1/2)\qexp)}}$.
\end{enumerate}
Thus, working out the various exponents, a straightforward \textendash\ if tedious \textendash\ calculation shows that there exists some $\texp\in(0,1)$ such that $\curr[P]$ is summable as long as $\noisexp < 1/2 - 1/\qexp$ and $0 \leq \stepexp < \qexp/(2+\qexp)$.
Hence, if $\step$ is sufficiently small relative to $\conf$, we conclude that
\begin{equation}
\txs
\probof{\curr[\aggnoise] \leq \Const\curr[\efftime]^{\texp}/2 \; \text{for all $\run$}}
	\geq 1 - \sum_{\run} \curr[P]
	\geq 1 - \conf/2.
\end{equation}
Finally, if $\stepexp>1/2+\noisexp$, \eqref{eq:dom-noise} is a straightforward consequence of \eqref{eq:stab} with $\curr[Z] \gets \curr[\aggnoise]$.
Our assertion then follows by putting everything together and invoking \cref{prop:coherent-local}.
\end{proof}

We conclude this section with the proof of our finite-time convergence result.

\begin{proof}[Proof of \cref{thm:rate}]
Since $\mirror$ is surjective, \cref{lem:mirror} shows that $\mirror^{-1}(\set)$ contains a shifted copy of $\union_{\point\in\set} \pcone(\point)$.
Thus, given that $\max_{\dev\in\devs} \braket{\curr[\dstate]}{\dev} \to -\infty$ by the proof of \cref{thm:coherent}, it follows that, for every $\elvl\in\R$, there exists some (possibly random) $\run_{0} \equiv \run_{0}(\elvl)$ such that $\max_{\dev\in\devs} \braket{\curr[\dstate]}{\dev} < -\elvl$ for all $\run\geq\run_{0}$.
This shows that $\curr[\dstate]$ converges to $\mirror^{-1}(\set)$ within a finite number of iterations, as claimed.
\end{proof}

\section{Concluding remarks}
\label{sec:discussion}

The proposed \ac{method} stochastic approximation framework captures a wide range of existing algorithms, both first- and zeroth-order, and it allows us to derive a series of convergence results in a unified way.
Conceptually speaking, an appealing feature of this framework lies in the fact that it provides an analysis bluepring that can be used in several other settings and algorithms of interest.
The associated workflow for this is as follows:
it suffices to simply estimate the bounds $\curr[\bbound]$ and $\curr[\sbound]$ for the method under study (in the sense of \cref{tab:algorithms});
the long-run behavior of the method may then be harvested from \cref{thm:ICT,thm:attract,thm:coherent}.
We leave the inclusion of even more general frameworks \textendash\ such as algorithms with adaptive step-sizes, asyncrhonous and/or delayed feedback \cite{ORKB19,HMZ20}, etc. \textendash\ to future work.

Another fruitful direction for future research concerns the wandering behavior of \ac{method} methods.
By Conley's decomposition theorem (the ``fundamental theorem of dynamical systems''), a flow decomposes into a chain recurrent part and an attracting part;
of these, the former is fragile because of the Kupka-Smale theorem, and is actually absent in all but a meager set of flows (in the Baire category sense of the term). 
This means that, generically, one would expect a learning process to wander about different basins of attraction, until, by the attainability theory of \citet{Ben97} and \citet{Duf97}, it is captured by one of them.
Making this statement precise would complement our results in an essential way, and would give us a better understanding of the complex phenomena that arise in game-theoretic learning.

\numberwithin{lemma}{section}		
\numberwithin{corollary}{section}		
\numberwithin{proposition}{section}		
\numberwithin{equation}{section}		
\appendix

\section{Regularizers and mirror maps}
\label{app:mirror}

In this appendix we present some basic properties of the mirror map $\mirror$.
To state them, recall first that the subdifferential of a $\hreg$ at $\point\in\points$ is defined as
\(
\subd\hreg(\point)
	\defeq \setdef
		{\dpoint\in\dspace}
		{\hreg(\pointalt) \geq \hreg(\point) + \braket{\dpoint}{\pointalt - \point} \; \text{for all $\pointalt\in\pspace$}},
\)
the \emph{domain of subdifferentiability} of $\hreg$ is $\dom\subd\hreg \defeq \setdef{\point\in\dom\hreg}{\subd\hreg \neq \varnothing}$,
and
the convex conjugate of $\hreg$ is defined as $\hconj(\dpoint) = \max_{\point\in\points} \{ \braket{\dpoint}{\point} - \hreg(\point) \}$ for all $\dpoint\in\dspace$.
We then have the following basic results.

\begin{lemma}
\label{lem:mirror}
Let $\hreg$ be a regularizer on $\points$, and let $\mirror\from\dspace\to\points$ be its induced mirror map.
Then:
\begin{enumerate}
\item
$\mirror$ is single-valued on $\dspace$:
in particular, for all $\point\in\points$, $\dpoint\in\dspace$, we have $\point = \mirror(\dpoint) \iff \dpoint \in \subd\hreg(\point)$.
\item
The prox-domain $\proxdom \defeq \im\mirror$ of $\hreg$ satisfies $\proxdom = \dom\subd\hreg$ and, hence, $\relint\points \subseteq \proxdom \subseteq \points$.
\item
$\mirror$ is $(1/\hstr)$-Lipschitz continuous and $\mirror = \nabla\hconj$.
\item
For all $\point\in\relint\points$,
we have $\dpoint,\dpointalt\in\subd\hreg(\point)$ if and only if $\braket{\dpointalt - \dpoint}{\pointalt - \point} = 0$ for all $\pointalt \in \points$.
\end{enumerate}
\end{lemma}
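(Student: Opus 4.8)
The plan is to dispatch the three assertions in turn, each being a routine consequence of the fact that $\hreg$ is a proper, lower semi-continuous, $\hstr$-strongly convex function with $\dom\hreg = \points$.

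\textbf{Part 1.} Fix $\dpoint\in\dspace$ and consider $g(\point) = \braket{\dpoint}{\point} - \hreg(\point)$. Since $\hreg$ is continuous on $\points$ and $+\infty$ off it, $g$ is upper semi-continuous; since $\hreg$ satisfies \eqref{eq:hstr}, $g$ is strongly concave, which in particular gives a quadratic lower bound $\hreg(\point) \geq \hreg(\base) + \braket{\dpoint_\base}{\point-\base} + (\hstr/2)\norm{\point-\base}^2$ for any fixed $\base\in\points$, $\dpoint_\base\in\subd\hreg(\base)$, so that $g(\point)\to-\infty$ as $\norm{\point}\to\infty$ within $\points$. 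Hence a maximizer exists, and strict concavity makes it unique, so $\mirror$ in \eqref{eq:mirror} is single-valued. The characterization is Fermat's rule: because $\dom\hreg=\points$, maximizing $g$ over $\points$ is the same as maximizing $g$ over all of $\pspace$, and $\point$ is a maximizer iff $0\in\subd(-g)(\point) = -\dpoint + \subd\hreg(\point)$, i.e. $\dpoint\in\subd\hreg(\point)$; conversely, if $\dpoint\in\subd\hreg(\point)$ then $\hreg(\pointalt)\geq\hreg(\point)+\braket{\dpoint}{\pointalt-\point}$ for all $\pointalt$, which rearranges to $g(\point)\geq g(\pointalt)$ for all $\pointalt$, i.e. $\point=\mirror(\dpoint)$.

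\textbf{Part 2.} The inclusion $\proxdom=\im\mirror\subseteq\points$ is immediate from \eqref{eq:mirror}. For $\relint\points\subseteq\proxdom$, fix $\point\in\relint\points=\relint\dom\hreg$; since $\hreg$ is a proper convex function that is finite (indeed continuous) relative to $\aff\points$ near $\point$, it is subdifferentiable there, i.e. $\subd\hreg(\point)\neq\varnothing$ (standard, \cf \cite{BC17,HUL01}). Picking any $\dpoint\in\subd\hreg(\point)$ and invoking Part 1 gives $\mirror(\dpoint)=\point$, so $\point\in\proxdom$.

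\textbf{Part 3.} For $\mirror=\nabla\hconj$ I would invoke the standard fact that the conjugate of a closed proper strongly convex function is Fréchet differentiable everywhere, with $\nabla\hconj(\dpoint)$ equal to the unique argmax in the conjugation formula, i.e. $\nabla\hconj(\dpoint)=\argmax_{\point\in\points}\{\braket{\dpoint}{\point}-\hreg(\point)\}=\mirror(\dpoint)$ (\cf \cite[Thm.~18.15]{BC17}). For the Lipschitz bound, I would first note that \eqref{eq:hstr} is equivalent to the subgradient form of strong convexity, so for $\dpoint_a\in\subd\hreg(\point_a)$ and $\dpoint_b\in\subd\hreg(\point_b)$, adding the two inequalities $\hreg(\point_b)\geq\hreg(\point_a)+\braket{\dpoint_a}{\point_b-\point_a}+(\hstr/2)\norm{\point_a-\point_b}^2$ and its $a\leftrightarrow b$ swap yields the strong monotonicity estimate $\braket{\dpoint_a-\dpoint_b}{\point_a-\point_b}\geq\hstr\norm{\point_a-\point_b}^2$. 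Taking $\point_a=\mirror(\dpoint_a)$, $\point_b=\mirror(\dpoint_b)$ (so the subgradient inclusions hold by Part 1) and applying Cauchy–Schwarz gives $\dnorm{\dpoint_a-\dpoint_b}\,\norm{\mirror(\dpoint_a)-\mirror(\dpoint_b)}\geq\hstr\norm{\mirror(\dpoint_a)-\mirror(\dpoint_b)}^2$, hence $\norm{\mirror(\dpoint_a)-\mirror(\dpoint_b)}\leq(1/\hstr)\dnorm{\dpoint_a-\dpoint_b}$.

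There is no real obstacle here: the proof is entirely a matter of quoting the right pieces of convex analysis in the correct order. The only point that deserves a sentence of care is the equivalence between the secant-form inequality \eqref{eq:hstr} used to define $\hstr$-strong convexity and the two consequences we need — the quadratic growth bound used for existence of the maximizer in Part 1, and the strong monotonicity of $\subd\hreg$ used in Part 3 — for which I would simply cite \cite{BC17} (and \cite{HUL01} for the nonsmooth subdifferential facts).
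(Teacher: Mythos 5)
Your proof is correct, and it follows the standard convex-analytic route (existence and uniqueness of the argmax by strong concavity, the subgradient characterization $\point = \mirror(\dpoint) \iff \dpoint \in \subd\hreg(\point)$, subdifferentiability on $\relint\points$, and differentiability of $\hconj$ plus strong monotonicity of $\subd\hreg$ with Cauchy\textendash Schwarz for the $(1/\hstr)$-Lipschitz bound). The paper itself omits the proof and defers to the cited references, whose arguments are essentially the ones you give, so there is nothing substantive to flag beyond the care you already note about passing from the secant form \eqref{eq:hstr} to the subgradient form of strong convexity.
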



Our second basic result concerns the Fenchel coupling
\begin{equation}
\label{eq:Fench}
\fench(\base,\dpoint)
	= \hreg(\base)
		+ \hconj(\dpoint)
		- \braket{\dpoint}{\base}
	\quad
	\text{for $\base\in\points$, $\dpoint\in\dspace$}.
\end{equation}
For our purposes, the most relevant properties of $\fench$ are as follows:

\begin{lemma}
\label{lem:Fench}
For all $\base\in\points$ and all $\dpoint,\dpointalt\in\dspace$, we have:
\begin{subequations}
\setlength{\abovedisplayskip}{3pt}
\setlength{\belowdisplayskip}{3pt}
\begin{flalign}
\label{eq:Fench-posdef}
\qquad
a)
	&\quad
	\fench(\base,\dpoint)
	\geq 0
	\quad
	\text{with equality if and only if $\base = \mirror(\dpoint)$}.
	&
	\\
\label{eq:Fench-norm}
\qquad
b)
	&\quad
	\fench(\base,\dpoint)
	\geq \tfrac{1}{2} \hstr \, \norm{\mirror(\dpoint) - \base}^{2}.
	&
	\\
\label{eq:Fench-bound}
\qquad
c)
	&\quad
	\fench(\base,\dpointalt)
	\leq \fench(\base,\dpoint) + \braket{\dpointalt - \dpoint}{\mirror(\dpoint) - \base} + \tfrac{1}{2\hstr} \dnorm{\dpointalt-\dpoint}^{2}.
\end{flalign}
\end{subequations}
In particular, if $\hreg(0) = 0$, we have
\begin{equation}
(\hstr/2) \norm{\mirror(\dpoint)}^{2}
	\leq \hconj(\dpoint)
	\leq -\min\hreg
		+ \braket{\dpoint}{\mirror(\dpoint)}
		+ (2/\hstr) \dnorm{\dpoint}^{2}
	\quad
	\text{for all $\dpoint\in\dspace$}
\end{equation}
\end{lemma}

Variants of \cref{lem:mirror,lem:Fench} already exist in the literature (see \eg \cite{MZ19} and references therein),
so we do not provide a proof.
Instead, we proceed below to show how the above extends to the \emph{setwise} Fenchel coupling
\begin{equation}
\label{eq:Fench-set}
\fench_{\set}(\dpoint)
	\defeq \hconj(\dpoint) - \hconj_{\set}(\dpoint)
	= \min_{\base\in\set} \{ \hreg(\base) + \hconj(\dpoint) - \braket{\dpoint}{\base} \}
	= \min_{\base\in\set} \fench(\base,\dpoint)
\end{equation}
where
$\set$ is a nonempty compact convex subset of $\points$
and
$\hconj_{\set}(\dpoint) = \max_{\point\in\set} \{\braket{\dpoint}{\point} - \hreg(\dpoint)\}$ denotes the convex conjugate of $\hreg$ relative to $\set$.
The most important properties of $\fench_{\set}$ are encoded in the following lemma.

\begin{lemma}
\label{lem:Fench-set}
With notation as above, we have:
\begin{enumerate}
\item
$\fench_{\set}(\dpoint) \geq 0$ with equality if and only if $\mirror(\dpoint) \in \set$.
Moreover, under the reciprocity condition \eqref{eq:rec}, we have $\fench_{\set}(\dpoint) \to 0$ if and only if $\mirror(\dpoint) \to \set$.
\item
$\fench_{\set}$ is differentiable and $\nabla\fench_{\set}(\dpoint) = \mirror(\dpoint) - \mirror_{\set}(\dpoint)$,
where $\mirror_{\set}(\dpoint) = \argmax_{\point\in\set} \{ \braket{\dpoint}{\point} - \hreg(\point) \}$.
\item
For all $\dpoint,\dpointalt\in\dspace$ we have $\norm{\nabla\fench_{\set}(\dpointalt) - \nabla\fench_{\set}(\dpoint)} \leq (2/\hstr) \dnorm{\dpointalt - \dpoint}$.
\end{enumerate}
\end{lemma}

\begin{proof}
Since $\set\subseteq\points$, we have $\hconj_{\set} \leq \hconj$ by definition, and hence $\fench_{\set} \geq 0$.
Moreover, since the minimum in \eqref{eq:Fench-set} must be attained in $\set$, we get $\fench_{\set}(\dpoint) = 0$ if and only if $\hreg(\base) - \hconj(\dpoint) - \braket{\dpoint}{\base} = 0$ for some $\base\in\set$;
by \cref{lem:Fench}, this occurs if and only if $\mirror(\dpoint) = \base \in \set$, so our first claim follows.

Moving forward, to show that $\fench_{\set}(\dpoint)\to0$ if and only if $\mirror(\dpoint) \to \set$, let $\curr[\dpoint]$ be a sequence in $\dpoints$, and let $\curr[\point] = \mirror(\curr[\dpoint])$.
For the ``if'' part, since $\set$ is compact, we may assume without loss of generality (and by descending to  a subsequence if necessary) that $\curr[\point]$ converges to some $\point\in\set$.
Observe now that
\begin{enumerate*}
[\itshape a\upshape)]
\item
$0 \leq \fench_{\set}(\curr[\dpoint]) \leq \fench(\point,\curr[\dpoint])$ by the minimum \eqref{eq:Fench-set};
and
\item
$\fench(\point,\curr[\dpoint])\to0$ by \eqref{eq:rec}.
\end{enumerate*}
Thus, by sandwiching, we conclude that $\fench_{\set}(\curr[\dpoint])\to0$.
Conversely, if $\fench_{\set}(\curr[\dpoint]) \to 0$, we may again assume by compactness (and by descending to a subsequence if necessary) that $\curr[\point] = \mirror(\curr[\dpoint])$ converges to some $\hat\point\in\points$.
If $\hat\point\not\in\set$, then, by \eqref{eq:rec}, we have $\lim_{\run\to\infty} \fench(\point,\curr[\dpoint]) > 0$ for all $\point\in\set$.
Since $\set$ is compact and $\fench(\point,\dpoint)$ is continuous in $\point$, we conclude that $\liminf_{\run\to\infty} \fench_{\set}(\curr[\dpoint]) > 0$, a contradiction which establishes our claim.

Our last two claims follow by applying \cref{lem:mirror} to $\hreg$ and $\hreg+\delta_{\set}$ where $\delta_{\set}$ denotes the convex indicator of $\set$.
\end{proof}

The next properties we discuss concern the way that different regions of $\dspace$ are mapped to $\points$ under $\mirror$.

\begin{lemma}[{\citeauthor{MS16}, \citeyear{MS16}, Prop.~A.1}]
\label{lem:support}
Let $\hreg$ be a regularizer on the simplex $\simplex(\pures) \subseteq \R^{\pures}$.
If $\dpoint_{\pure} - \dpoint_{\purealt} \to -\infty$, then $\mirror_{\pure}(\dpoint) \to 0$.
\end{lemma}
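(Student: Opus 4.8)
The plan is to extract a quantitative bound on the mass $\mirror_{\pure}(\dpoint)$ directly from the optimality condition defining $\mirror$, by comparing the maximizer against a single, cleverly chosen competitor. First I would write $\point \defeq \mirror(\dpoint) \in \simplex(\pures)$ for the (unique, by strong convexity of $\hreg$ and \cref{lem:mirror}) maximizer of $\braket{\dpoint}{\cdot} - \hreg(\cdot)$ over $\simplex(\pures)$, and let $M$ denote the oscillation $\max_{\point'\in\simplex(\pures)}\hreg(\point') - \min_{\point'\in\simplex(\pures)}\hreg(\point')$ of $\hreg$ on the simplex; this is finite because $\hreg$ is continuous on the compact set $\simplex(\pures)$ (note that only continuity — not strong convexity — is used here).

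The key step is to compare $\point$ against the competitor $\tilde\point \defeq \point - \point_{\pure}(\bvec_{\pure} - \bvec_{\purealt})$ obtained by relocating \emph{all} of the mass that $\point$ places on the pure strategy $\pure$ onto $\purealt$. I would first check that $\tilde\point$ remains in $\simplex(\pures)$: its $\pure$-coordinate is $0$, its $\purealt$-coordinate is $\point_{\pure} + \point_{\purealt} \in [0,1]$, the other coordinates are unchanged, and the total mass is preserved. Feeding $\tilde\point$ into the optimality inequality $\braket{\dpoint}{\point} - \hreg(\point) \geq \braket{\dpoint}{\tilde\point} - \hreg(\tilde\point)$ coming from \eqref{eq:mirror}, and using $\braket{\dpoint}{\tilde\point - \point} = \point_{\pure}(\dpoint_{\purealt} - \dpoint_{\pure})$, this rearranges to
\begin{equation}
\point_{\pure}\,(\dpoint_{\purealt} - \dpoint_{\pure})
	\leq \hreg(\tilde\point) - \hreg(\point)
	\leq M .
\end{equation}
Consequently, whenever $\dpoint_{\purealt} - \dpoint_{\pure} > 0$ we get $\mirror_{\pure}(\dpoint) = \point_{\pure} \leq M/(\dpoint_{\purealt} - \dpoint_{\pure})$, and letting $\dpoint_{\pure} - \dpoint_{\purealt} \to -\infty$ (equivalently $\dpoint_{\purealt} - \dpoint_{\pure} \to +\infty$) forces $\mirror_{\pure}(\dpoint) \to 0$, as claimed.

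Since the argument is essentially one line once the right competitor is identified, I do not anticipate a serious obstacle; the only points requiring a little care are (i) confirming that the relocated profile $\tilde\point$ is feasible — immediate on the simplex, but this is exactly where the simplex structure enters — and (ii) making the bound on $\hreg(\tilde\point) - \hreg(\point)$ uniform in $\dpoint$, which is why I phrase it through the oscillation $M$ rather than through any pointwise estimate. An alternative, equally short route is by contradiction: if $\mirror_{\pure}(\dpoint^{n}) \geq \delta > 0$ along a subsequence with $\dpoint^{n}_{\pure} - \dpoint^{n}_{\purealt} \to -\infty$, then moving a \emph{fixed} amount $\delta$ of mass from $\pure$ to $\purealt$ makes $\hreg(\tilde\point^{n}) - \hreg(\point^{n}) \geq \delta(\dpoint^{n}_{\purealt} - \dpoint^{n}_{\pure}) \to +\infty$, contradicting the boundedness of $\hreg$ on $\simplex(\pures)$.
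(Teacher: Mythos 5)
Your argument is correct, and it is essentially the approach the paper relies on: the paper does not reprove this lemma (it cites \citealp{MS16}, Prop.~A.1), but its sibling results \cref{lem:target,lem:poly} in the same appendix are established by exactly your device \textendash\ comparing $\point = \mirror(\dpoint)$ against a suitably relocated competitor via the optimality/subgradient inequality $\dpoint \in \subd\hreg(\point)$ and invoking the finiteness of $\hreg$ on the compact domain. Your quantitative bound $\mirror_{\pure}(\dpoint) \leq M/(\dpoint_{\purealt}-\dpoint_{\pure})$ is a clean (slightly sharper) packaging of the same idea, and both your direct and contradiction variants are sound.
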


\begin{lemma}
\label{lem:target}
Let $\hreg$ be a regularizer on $\points$, let $\curr[\dpoint]$, $\run=\running$ be a sequence in $\dspace$, and fix some $\point\in\points$.
If $\braket{\curr[\dpoint]}{\tvec} \to -\infty$ for every nonzero $\tvec\in\tcone(\point)$, we have $\mirror(\curr[\dpoint]) \to \point$.
\end{lemma}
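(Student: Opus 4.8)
\medskip

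The plan is to argue by contradiction, working with the Fenchel coupling of \cref{lem:Fench}. Write $\pointalt_{\run} = \mirror(\curr[\dpoint])$ and $\fench_{\run} \defeq \fench(\point,\curr[\dpoint]) = \hreg(\point) + \hconj(\curr[\dpoint]) - \braket{\curr[\dpoint]}{\point}$; since $\fench_{\run} \geq \tfrac{1}{2}\hstr\norm{\pointalt_{\run} - \point}^{2}$ by \cref{lem:Fench}, it is enough to show $\fench_{\run} \to 0$. Unwinding the definitions of $\mirror$ and $\hconj$ gives the variational identity
\begin{equation*}
\fench_{\run}
	= \sup_{\tvec \in \points - \point} \braces[\big]{ \braket{\curr[\dpoint]}{\tvec} - \parens{\hreg(\point + \tvec) - \hreg(\point)} },
\end{equation*}
which is nonnegative (take $\tvec = 0$) and attained at $\tvec = \pointalt_{\run} - \point$. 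The idea is to show that this supremum falls below any prescribed $\delta > 0$ for all large $\run$, by splitting $\points - \point$ into a small ball $\braces{\norm{\tvec} \leq r}$ and its complement; to that end we must control the linear term $\braket{\curr[\dpoint]}{\tvec}$ \emph{uniformly} over directions in $\tcone(\point)$.

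\medskip

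The crux is therefore the following uniform strengthening of the hypothesis: the quantity $\varsigma_{\run} \defeq \sup\setdef{\braket{\curr[\dpoint]}{\tvec}}{\tvec \in \tcone(\point),\ \norm{\tvec} = 1}$ satisfies $\varsigma_{\run} \to -\infty$ (the degenerate case $\points = \braces{\point}$ being trivial). First, the hypothesis forces $\tcone(\point)$ to be \emph{pointed} (to contain no line through the origin): if some nonzero $\tvec$ and $-\tvec$ both lay in $\tcone(\point)$, we would have $\braket{\curr[\dpoint]}{\tvec} \to -\infty$ and $\braket{\curr[\dpoint]}{-\tvec} = -\braket{\curr[\dpoint]}{\tvec} \to -\infty$ at once, a contradiction. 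When $\tcone(\point)$ is polyhedral — the case in all our applications, where $\points$ is a polytope or $\point$ a vertex of a simplex — it is finitely generated, say $\tcone(\point) = \operatorname{cone}\braces{\tvec_{1}, \dotsc, \tvec_{\nDevs}}$, and the hypothesis reads $\braket{\curr[\dpoint]}{\tvec_{j}} \to -\infty$ for each $j$; since every unit $\tvec \in \tcone(\point)$ can be written $\tvec = \sum_{j} \coef_{j} \tvec_{j}$ with $\coef_{j} \geq 0$ and $\sum_{j} \coef_{j} \geq 1 / \max_{j}\norm{\tvec_{j}} \eqdef \const > 0$, once all the $\braket{\curr[\dpoint]}{\tvec_{j}}$ are negative we get $\braket{\curr[\dpoint]}{\tvec} = \sum_{j}\coef_{j}\braket{\curr[\dpoint]}{\tvec_{j}} \leq \const \max_{j}\braket{\curr[\dpoint]}{\tvec_{j}} \to -\infty$, uniformly in $\tvec$. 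I expect this step to be the main obstacle in the general (non-polyhedral) case; there it can be recovered by passing to the normalized iterates $\curr[\dpoint]/\dnorm{\curr[\dpoint]}$, whose cluster points must all lie in the polar cone $\pcone(\point)$ (and generically in its interior, which already yields uniformity by compactness of $\tcone(\point) \cap \braces{\norm{\tvec}=1}$), together with an induction on the face structure of $\tcone(\point)$ to deal with cluster points on the relative boundary of $\pcone(\point)$.

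\medskip

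Granted the estimate $\varsigma_{\run} \to -\infty$, the argument closes quickly. Fix $\delta > 0$. By continuity of $\hreg$ at $\point$ (and $\hreg(\point) < \infty$), pick $r > 0$ with $\hreg(\point + \tvec) - \hreg(\point) \geq -\delta/2$ whenever $\norm{\tvec} \leq r$ and $\point + \tvec \in \points$; also set $P \defeq \hreg(\point) - \inf\hreg < \infty$, finite since the strongly convex $\hreg$ is bounded below on $\points$. Choose $\run$ large enough that $\varsigma_{\run} < 0$ and $\varsigma_{\run} r + P < \delta$. For any $\tvec \in (\points - \point)\setminus\braces{0}$ we have $\tvec/\norm{\tvec} \in \tcone(\point)$, hence $\braket{\curr[\dpoint]}{\tvec} \leq \varsigma_{\run}\norm{\tvec}$; consequently, on $\braces{\norm{\tvec} \leq r}$,
\begin{equation*}
\braket{\curr[\dpoint]}{\tvec} - \parens{\hreg(\point+\tvec) - \hreg(\point)}
	\leq \varsigma_{\run}\norm{\tvec} + \tfrac{\delta}{2}
	\leq \tfrac{\delta}{2},
\end{equation*}
while on $\braces{\norm{\tvec} > r}$, using $\varsigma_{\run} < 0$ (so $\varsigma_{\run}\norm{\tvec} < \varsigma_{\run} r$) and $\hreg(\point+\tvec) - \hreg(\point) \geq -P$,
\begin{equation*}
\braket{\curr[\dpoint]}{\tvec} - \parens{\hreg(\point+\tvec) - \hreg(\point)}
	\leq \varsigma_{\run} r + P
	< \delta.
\end{equation*}
Taking the supremum over $\tvec \in \points - \point$ gives $\fench_{\run} \leq \delta$ for all large $\run$; as $\delta > 0$ was arbitrary and $\fench_{\run} \geq 0$, this yields $\fench_{\run} \to 0$, and therefore $\mirror(\curr[\dpoint]) = \pointalt_{\run} \to \point$, as claimed.
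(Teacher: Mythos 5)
Your reduction to showing $\fench(\point,\curr[\dpoint])\to 0$ is sound, and your treatment of the polyhedral case is correct, but the lemma is stated for an arbitrary regularizer on an arbitrary closed convex $\points$, and your entire argument hinges on the uniform estimate $\varsigma_{\run}\to-\infty$, which you only establish when $\tcone(\point)$ is finitely generated and explicitly defer otherwise. This is not a repairable loose end along the route you sketch: the uniform claim is simply \emph{false} for non-polyhedral tangent cones. Take $\points = \setdef{(x_{1},x_{2},x_{3})\in\R^{3}}{x_{3}\geq\sqrt{x_{1}^{2}+x_{2}^{2}}}$, $\point = 0$ (so $\tcone(\point)=\points$), $\hreg = \twonorm{\cdot}^{2}/2$, and $\curr[\dpoint] = \run\,(\cos\theta_{\run},\sin\theta_{\run},-1)$ with $\theta_{\run} = \run^{-1/4}$. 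Then $\braket{\curr[\dpoint]}{\tvec}\to-\infty$ for \emph{every} fixed nonzero $\tvec\in\tcone(\point)$ (linearly for interior directions, and like $-\run\theta_{\run}^{2}\to-\infty$ for boundary rays), so the hypotheses of the lemma hold \textendash\ and indeed $\mirror(\curr[\dpoint])=0$ because $\curr[\dpoint]$ lies in the polar cone \textendash\ yet $\varsigma_{\run}=0$ for all $\run$, since $\braket{\curr[\dpoint]}{\cdot}$ vanishes on the moving boundary ray with direction $(\cos\theta_{\run},\sin\theta_{\run},1)$. So the ``crux'' you isolate cannot be proved in general, and neither the pointedness observation nor an induction on faces (unavailable for a cone with a continuum of extreme rays) can rescue it.

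The deeper issue is that uniform control over all unit tangent directions is more than the statement needs: the supremum defining $\fench(\point,\curr[\dpoint])$ is attained at $\tvec = \mirror(\curr[\dpoint])-\point$, so it suffices to control the linear term along this one \emph{realized} direction, where the structural link $\curr[\dpoint]\in\subd\hreg(\mirror(\curr[\dpoint]))$ is available. This is exactly the paper's argument: assuming $\curr[\point]=\mirror(\curr[\dpoint])\not\to\point$, it invokes the subgradient inequality to get $\hreg(\point)\geq\hreg(\curr[\point])-\braket{\curr[\dpoint]}{\curr[\tvec]}\,\norm{\curr[\point]-\point}$ with $\curr[\tvec]=(\curr[\point]-\point)/\norm{\curr[\point]-\point}$, passes to a subsequence along which $\curr[\tvec]$ converges to a unit vector $\tvec\in\tcone(\point)$, and transfers the divergence $\braket{\curr[\dpoint]}{\tvec}\to-\infty$ to the realized directions to contradict the finiteness of $\hreg(\point)$. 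Your polyhedral computation is still worth keeping \textendash\ it is essentially what \cref{ex:sharp,ex:linear} require \textendash\ but as written the proposal does not prove the lemma in the generality in which it is stated.
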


\begin{proof}
Assume that $\limsup_{\run} \norm{\curr[\point] - \point} > 0$.
Then, given that $\curr[\dpoint] \in \subd\hreg(\curr[\point])$, we get
\(
\hreg(\point)
	\geq \hreg(\curr[\point])
		+ \braket{\curr[\dpoint]}{\point - \curr[\point]}
	\geq \hreg(\curr[\point])
		- \braket{\curr[\dpoint]}{\curr[\tvec]} \norm{\curr[\point] - \point},
\)
where we set $\curr[\tvec] = (\curr[\point] - \point) / \norm{\curr[\point] - \point}$.
If we further assume (by descending to a subsequence if needed) that $\curr[\tvec]$ converges in the unit sphere of $\norm{\cdot}$, there exists some $\tvec\in\tcone(\point)$ with $\norm{\tvec} = 1$ and such that $\braket{\curr[\dpoint]}{\curr[\tvec]} \leq (1+\eps) \braket{\curr[\dpoint]}{\tvec}$ for some $\eps>0$.
Thus, taking the $\limsup$ of the above estimate gives $\hreg(\point) \geq \infty$, a contradiction which proves our claim.
\end{proof}

\begin{lemma}
\label{lem:poly}
Let $\hreg$ be a regularizer on a convex polytope $\poly$ of $\pspace$, let $\set$ be a face of $\poly$, and let $\devs = \braces{\dev_{1},\dotsc,\dev_{\nDevs}}$ be a set of unit vectors of $\pspace$ such that every point $\point\in\poly\setminus\set$ can be written as $\point = \base + \coef\dev$ for some $\base\in\set$, $\dev\in\devs$ and $\coef>0$.
If $\max_{\dev\in\devs} \braket{\dpoint}{\dev} \to -\infty$, then $\mirror(\dpoint) \to \set$.
\end{lemma}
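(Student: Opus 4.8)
The plan is to argue by contradiction, combining the subgradient characterization of $\mirror$ from \cref{lem:mirror} with the fact that a regularizer is continuous \textendash{} hence bounded \textendash{} on the compact polytope $\poly$. Write $\curr[\point] = \mirror(\curr[\dpoint])$; by \cref{lem:mirror} we have $\curr[\dpoint] \in \subd\hreg(\curr[\point])$, so that
\[
\hreg(\basealt) \geq \hreg(\curr[\point]) + \braket{\curr[\dpoint]}{\basealt - \curr[\point]}
\qquad
\text{for every $\basealt \in \poly$.}
\]
Let $\Const > 0$ be such that $\abs{\hreg} \leq \Const$ on $\poly$ (such a constant exists since $\hreg$ is continuous by \cref{def:hreg} and $\poly$, being a polytope, is compact). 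The one genuinely non-routine idea is to test this inequality \emph{at the base point} of the decomposition supplied by the hypothesis on $\devs$.

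Concretely, fix $\delta > 0$; I will show $\dist(\curr[\point],\set) < \delta$ for all large $\run$. Suppose not, so that $\dist(\curr[\point],\set) \geq \delta$ for infinitely many $\run$. For each such $\run$ we have $\curr[\point] \in \poly \setminus \set$ (a face of a polytope is closed), so by hypothesis $\curr[\point] = \base_{\run} + \coef_{\run}\dev_{\run}$ for some $\base_{\run}\in\set$, $\dev_{\run}\in\devs$, $\coef_{\run}>0$; since $\dev_{\run}$ is a unit vector, $\coef_{\run} = \norm{\curr[\point] - \base_{\run}} \geq \dist(\curr[\point],\set) \geq \delta$. Applying the subgradient inequality with $\basealt \gets \base_{\run} \in \set \subseteq \poly$ gives
\[
\hreg(\base_{\run}) \geq \hreg(\curr[\point]) - \coef_{\run}\braket{\curr[\dpoint]}{\dev_{\run}},
\]
whence $\braket{\curr[\dpoint]}{\dev_{\run}} \geq \bigl(\hreg(\curr[\point]) - \hreg(\base_{\run})\bigr)/\coef_{\run} \geq -2\Const/\delta$, and in particular $\max_{\dev\in\devs}\braket{\curr[\dpoint]}{\dev} \geq -2\Const/\delta$. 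Since by assumption $\max_{\dev\in\devs}\braket{\curr[\dpoint]}{\dev}\to -\infty$, this can hold for only finitely many $\run$, a contradiction; as $\delta>0$ was arbitrary, $\curr[\point] = \mirror(\curr[\dpoint])\to\set$.

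This argument runs parallel to the proof of \cref{lem:target}: in both cases a subgradient at $\curr[\point]$ paired against a fixed feasible direction is pinned down by the bounded values of $\hreg$. I do not anticipate a serious obstacle; the only points requiring a little care are (i) that ``$\curr[\point]\to\set$'' must be read as $\dist(\curr[\point],\set)\to0$, since $\set$ is a set rather than a point, and (ii) extracting, uniformly in $\run$, a coefficient $\coef_{\run}$ bounded away from zero \textendash{} which is immediate once one observes $\coef_{\run}\geq\dist(\curr[\point],\set)$ and uses that the vectors in $\devs$ are unit (equivalently, of uniformly bounded norm, $\devs$ being finite).
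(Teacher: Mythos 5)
Your proof is correct, and it uses the same two ingredients as the paper's own argument \textendash\ the identification $\curr[\dpoint]\in\subd\hreg(\curr[\point])$ from \cref{lem:mirror} and the boundedness of $\hreg$ on the compact polytope $\poly$, tested against a base point of the decomposition $\point = \base + \coef\dev$ \textendash\ but the execution is genuinely different in one respect. The paper first extracts a convergent subsequence $\curr[\point]\to\point$, assumes $\point\notin\set$, decomposes the \emph{limit} as $\point = \base + \coef\dev$, and then lets the normalized directions $\curr[\tvec] = (\curr[\point]-\base)/\norm{\curr[\point]-\base}\to\dev$ carry the divergence $\braket{\curr[\dpoint]}{\dev}\to-\infty$ into the subgradient inequality, concluding $\hreg(\base)=\infty$. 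You instead decompose \emph{each iterate} $\curr[\point] = \base_{\run} + \coef_{\run}\dev_{\run}$ whenever $\dist(\curr[\point],\set)\geq\delta$, note $\coef_{\run}\geq\delta$ because the $\dev$'s are unit vectors and $\base_{\run}\in\set$, and obtain the uniform bound $\max_{\dev\in\devs}\braket{\curr[\dpoint]}{\dev}\geq -2\Const/\delta$, contradicting the hypothesis directly. This buys two things: no subsequence extraction is needed, and you sidestep the slightly delicate step in the paper where divergence of $\braket{\curr[\dpoint]}{\dev}$ must be transferred to $\braket{\curr[\dpoint]}{\curr[\tvec]}$ with $\curr[\tvec]$ only \emph{converging} to $\dev$ while $\curr[\dpoint]$ is unbounded (the paper handles the analogous point in \cref{lem:target} with an explicit $(1+\eps)$ comparison, and is terser here); the price is that you use the two-sided bound $\abs{\hreg}\leq\Const$ on $\poly$, which is harmless since $\hreg$ is continuous on the compact set $\poly$ by \cref{def:hreg}. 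All the supporting steps you flag \textendash\ $\coef_{\run}=\norm{\curr[\point]-\base_{\run}}\geq\dist(\curr[\point],\set)$, closedness of the face $\set$, and reading convergence to $\set$ as $\dist(\curr[\point],\set)\to0$ \textendash\ check out.
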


\begin{proof}
By the compactness of $\poly$ (and descending to a subsequence if necessary), we may assume that $\curr[\point] = \mirror(\curr[\dpoint])$ converges to some $\point\in\poly$.
If $\point\notin\set$, there exist $\base\in\set$, $\dev\in\devs$ and $\coef>0$ such that $\point = \base + \coef\dev$.
In turn, this gives $\hreg(\base) \geq \hreg(\curr[\point]) + \braket{\curr[\dpoint]}{\base - \curr[\point]} = \hreg(\curr[\point]) - \braket{\curr[\dpoint]}{\curr[\tvec]} \norm{\curr[\point] - \base}$ where we set $\curr[\tvec] = (\curr[\point] - \base)/\norm{\curr[\point] - \base}$.
Since $\curr[\tvec] \to \tvec$, taking $\run\to\infty$ yields $\hreg(\base) \geq \infty$, a contradiction which shows that $\point = \lim \curr[\point] \in\set$, as claimed.
\end{proof}

We conclude this appendix with the dynamical properties of the Fenchel coupling under \eqref{eq:MD}.
The heavy lifting will be provided by the following simple lemma:

\begin{lemma}
\label{lem:Fench-MD}
Let $\trajof{\time} = \mirror(\dtrajof{\time})$ be an orbit of \eqref{eq:MD}.
Then, for every nonempty closed convex subset $\set$ of $\points$, we have
\begin{equation}
\label{eq:Fench-MD}
\dot\fench_{\set}(\dpoint)
	= \braket{\vecfield(\point)}{\point - \point_{\set}}
\end{equation}
where $\point_{\set} = \mirror_{\set}(\dpoint)$ denotes the mirror image of $\dpoint$ on $\set$.
In particular, if $\set = \{\base\}$, we have $\dot\fench(\base,\dpoint) = \braket{\vecfield(\point)}{\point - \base}$.
\end{lemma}

\begin{proof}
Simply note that $\dot\dpoint = -\vecfield(\point)$ and apply \cref{lem:Fench-set}.
\end{proof}

\section{Omitted proofs and calculations}
\label{app:proofs}

\subsection{Error bounds for specific algorithms}

Our aim in this appendix is to prove the bounds on the bias and magnitude of $\curr[\signal]$ reported in \cref{prop:algorithms,tab:algorithms}.

\begin{proof}[Proof of \cref{prop:algorithms}]
We proceed in a method-by-method basis starting with the oracle-based methods of \cref{sec:algorithms}, that is, \crefrange{alg:SGA}{alg:EW}.
For this, we will make free use of the fact that we can take $\curr[\sbound]^{\qexp} = 3^{\qexp-1} (\vbound^{\qexp} + \curr[\bbound]^{\qexp} + \curr[\sdev]^{\qexp})$ in \eqref{eq:errorbounds}, \cf the discussion after \eqref{eq:signal}.

\itempar{\cref{alg:SGA}: \Acl{SGA}}

For \eqref{eq:SGA}, we have $\curr[\noise] = \err(\curr;\curr[\seed])$ and $\curr[\bias] = 0$, so our claim follows immediately from the stated assumptions for \eqref{eq:SFO}.

\itempar{\cref{alg:EG}: \Acl{EG}}

For \eqref{eq:EG}, we have $\curr[\signal] = \orcl(\lead;\lead[\seed])$ so $\exof{\curr[\signal] \given \curr[\filter]} = \exof{\vecfield(\lead) \given \curr[\filter]}$.
We thus get
\begin{align}
\dnorm{\curr[\bias]}
	= \dnorm{\exof{\curr[\signal] \given \curr[\filter]} - \vecfield(\curr)}
	&\leq \exof{\dnorm{\vecfield(\lead) - \vecfield(\curr)} \given \curr[\filter]}
	\notag\\
	&\leq \lips \exof{\norm{\lead - \curr} \given \curr[\filter]}
	\notag\\
	&\leq \curr[\step] \lips \exof{\dnorm{\orcl(\curr;\curr[\seed])} \given \curr[\filter]}
	\notag\\
	&= \curr[\step] \lips \exof{\dnorm{\vecfield(\curr) + \err(\curr;\curr[\seed])} \given \curr[\filter]}
	\notag\\
	&\leq \curr[\step] \lips (\vbound + \sdev)
	= \bigoh(\curr[\step])
		= \bigoh(1/\run^{\stepexp})
\end{align}
and, analogously
\begin{equation}
\dnorm{\curr[\noise]}
	= \dnorm{
		\curr[\signal]
		- \exof{\curr[\signal] \given \curr[\filter]}
		}
	= \dnorm{
		\vecfield(\lead)
		- \exof{\vecfield(\lead) \given \curr[\filter]}
		+ \err(\lead;\lead[\seed])}
\end{equation}
so $\exof{\dnorm{\curr[\noise]}^{\qexp} \given \curr[\filter]} = \bigoh(\vbound^{\qexp} + \sdev^{\qexp}) = \bigoh(1)$ under \eqref{eq:oracle}, as claimed.

\itempar{\cref{alg:OG}: \Acl{OG}}

For \eqref{eq:OG}, we have again $\exof{\curr[\signal] \given \curr[\filter]} = \exof{\vecfield(\lead) \given \curr[\filter]}$, so the same series of arguments as above gives
\begin{align}
\dnorm{\curr[\bias]}
	&= \dnorm{\exof{\curr[\signal] \given \curr[\filter]} - \vecfield(\curr)}
	\notag\\
	&\leq \lips \exof{\norm{\lead - \curr} \given \curr[\filter]}
	\notag\\
	&\leq \curr[\step] \lips \exof{\dnorm{\orcl(\beforelead;\prev[\seed])} \given \curr[\filter]}
	\notag\\
	&= \curr[\step] \lips \exof{\dnorm{\vecfield(\beforelead) + \err(\beforelead;\prev[\seed])} \given \curr[\filter]}
	\notag\\
	&\leq \curr[\step] \lips (\vbound + \sdev)
	= \bigoh(\curr[\step])
		= \bigoh(1/\run^{\stepexp})
\end{align}
under \eqref{eq:oracle} with $\qexp=\infty$.
The noise term $\curr[\noise]$ can be bounded in exactly the same way, so we omit the calculations.

\itempar{\cref{alg:EW}: \Acl{EW}}

We consider two cases, based on the information available to the players.
For the full information oracle \eqref{eq:SFO-finite-full}, we have $\curr[\signal] = \vecfield(\curr)$ so $\curr[\bias] = \curr[\noise] = 0$ by definition (\ie the oracle is perfect).
Otherwise, under the realization-based oracle \eqref{eq:SFO-finite-realized}, we have $\exof{\curr[\signal] \given \curr[\filter]} = \exof{\vecfield(\curr[\pure]) \given \curr[\filter]} = \vecfield(\curr)$ because $\curr[\pure]$ is sampled according to $\curr$.
We thus get $\curr[\bias] = 0$ and $\curr[\noise] = \bigoh(1)$, which proves our assertion.

%

\smallskip
We now proceed with the payoff-based methods of \cref{sec:algorithms}, namely \crefrange{alg:SPSA}{alg:EXP3}.

\itempar{\cref{alg:SPSA}: \Acl{SPSA}}

Since $\pay_{\play}$ is assumed bounded in the context of \eqref{eq:SPSA}, the bound for $\curr[\sbound]$ follows trivially.
As for the bias of \eqref{eq:SPSA}, it will be convenient to set
$\orcl_{\play}^{\mix}(\point;\unitvec) = (\vdim_{\play}/\mix) \, \pay_{\play}(\point+\mix\unitvec) \, \unitvec_{\play}$ so, in obvious notation, $\signal_{\play,\run} = \orcl_{\play}^{\curr[\mix]}(\curr;\curr[\unitvar])$.
Thus, if we fix a pivot point $\pivot\in\points$ and a query point $\querypoint = \pivot + \mix\unitvec$ for some $\unitvec\in\bvecs = \prod_{\play} \bvecs_{\play}$, a first-order Taylor expansion of $\pay_{\play}$ with integral remainder gives
\begin{subequations}
\label{eq:Taylor}
\begin{align}
\label{eq:Taylor1}
\orcl_{\play}^{\mix}(\pivot;\unitvec)
	= \frac{\vdim_{\play}}{\mix} \pay_{\play}(\querypoint) \cdot \unitvec_{\play}
	= \frac{\vdim_{\play}}{\mix} \pay_{\play}(\pivot) \cdot \unitvec_{\play}
		&+ \frac{\vdim_{\play}}{\mix} \braket{\nabla\pay_{\play}(\pivot)}{\perturb}
		\cdot \unitvec_{\play}
	\\
\label{eq:Taylor2}
		&+ \int_{0}^{1}
				\braket{\nabla\pay_{\play}(\pivot+\tau\perturb) - \nabla\pay_{\play}(\pivot)}{\perturb}
				\dd\tau \cdot
		\unitvec_{\play}
\end{align}
\end{subequations}
where we set $\perturb = \querypoint - \pivot = \mix\unitvec$.
Hence, if $\unitvec$ is drawn uniformly at random from $\bvecs$, taking expectations yields
\begin{align}
\label{eq:Taylor1-temp}
\exof{\eqref{eq:Taylor1}}
	&= \frac{\vdim_{\play}}{\mix}
		\exof{\braket{\vecfield_{\play}(\pivot)}{\perturb_{\play}} \,
			\unitvec_{\play}}
	+ \frac{\vdim_{\play}}{\mix}
		\sum_{\playalt\neq\play}
		\braket{\nabla_{\pivot_{\playalt}} \pay_{\play}(\pivot)}{\exof{\perturb_{\playalt}}} \,
		\exof{\unitvec_{\play}}
	\notag\\
	&= \vdim_{\play} \exof{\braket{\vecfield_{\play}(\pivot)}{\unitvec_{\play}} \, \unitvec_{\play}}
	= \vdim_{\play} \cdot
		\frac{1}{2\vdim_{\play}}
		\sum_{\ell=1}^{\vdim_{\play}}
			\bracks{\vecfield_{\play\ell}(\pivot) \bvec_{\play\ell} - \vecfield_{\play\ell}(\pivot) (-\bvec_{\play\ell})}
	= \vecfield_{\play}(\pivot)
\end{align}
where we used the fact that $\exof{\unitvec_{\play}} = 0$ for all $\play\in\players$ and that $\unitvec_{\play}$ and $\unitvec_{\playalt}$ are independent for all $\play,\playalt\in\players$, $\play\neq\playalt$.
As for the second term, \cref{asm:game} readily yields
\begin{align}
\label{eq:Taylor2-temp}
\norm{\exof{\eqref{eq:Taylor2}}}
	&\leq \frac{\vdim_{\play}}{\mix}
		\int_{0}^{1} \lips_{\play} \mix^{2} \norm{\unitvec}^{2} \tau\dd\tau
	= \bigoh\parens*{\lips\mix}.
\end{align}
Thus, by combining \eqref{eq:Taylor1-temp} and \eqref{eq:Taylor2-temp}, we conclude that $\bias_{\play,\run} = \exof{\orcl_{\play}^{\curr[\mix]}(\curr;\curr[\unitvar]) \given \curr[\filter]} - \vecfield_{\play}(\curr) = \bigoh(\curr[\mix])$, which immediately yields the desired bound $\curr[\bbound] = \bigoh(\curr[\mix]) = \bigoh(1/\run^{\mixexp})$  for \eqref{eq:SPSA}.

\itempar{\cref{alg:DGA}: \Acl{DGA}}

Recall that $\signal_{\play,\run} = \run\cdot\log\parens{1 + (\pay_{\play}(\lead) - \pay_{\play}(\curr)) \unitvar_{\play,\run}}$.
Since $\pay_{\play}(\lead) - \pay_{\play}(\curr) = (1/\run) \vecfield_{\play}(\curr) \unitvar_{\play,\run} + \bigoh(1/\run^{2})$ by the definition of $\lead$,
expanding the logairthm readily yiels $\curr[\bbound] = \bigoh(1/\run)$ and $\curr[\sbound] = \bigoh(1)$.
Our claim then follows as above.

\itempar{\cref{alg:EXP3}: \Acl{EXP3}}

Since $\curr[\purequery]$ is sampled according to $\curr[\query]$, we readily get $\exof{\signal_{\play,\run} \given \curr[\filter]} = \vecfield_{\play}(\curr[\query])$, so $\curr[\bbound] = \bigoh(\norm{\curr[\query] - \curr}) = \bigoh(\curr[\mix]) = \bigoh(1/\run^{\mixexp})$.
Moreover, since $\query_{\play\pure_{\play},\run} \geq \curr[\mix]/\nPures_{\play}$, it follows that $\dnorm{\curr[\signal]} = \bigoh(1/\curr[\mix]) = \bigoh(\run^{\mixexp})$, and our proof is complete.
\end{proof}

\subsection{Energy function derivations}

Our aim in this last appendix is to prove the energy properties of the Fenchel coupling as stated in \cref{lem:energy-VS,lem:energy-cvx}.
For concision, we will prove both as a special case of the following general result:

\begin{proposition}
\label{prop:energy-general}
Let $\set$ be a nonempty compact convex subset of $\points$, and assume that there exists a neighborhood $\nhd$ of $\set$ such that
\begin{equation}
\label{eq:VS-set}
\braket{\payv(\point)}{\point - \base}
	\leq 0
	\quad
	\text{for all $\point\in\nhd$, $\base\in\set$},
\end{equation}
with equality if and only if $\point\in\set$.
If \eqref{eq:rec} holds and $\gauge$ is defined as in \eqref{eq:gauge}, the function $\energy\from\dspace\to\R$ given by
\begin{equation}
\label{eq:energy-global}
\energy(\dpoint)
	= \gauge(\fench_{\set}(\dpoint))
	\quad
	\text{for all $\dpoint\in\dspace$}
\end{equation}
is a local energy function for $\set$ under \eqref{eq:MD}.
In addition, if $\nhd = \points$, $\energy$ is a global energy function for $\set$.
\end{proposition}

\begin{proof}
We will verify the requirements of \cref{def:energy} in order.
\begin{enumerate}
\item
For the first (Lipschitz continuity and smoothness),
note that $\nabla\energy(\dpoint) = \gauge'(\fench_{\set}(\dpoint)) \nabla\fench_{\set}(\dpoint)$, so, lettting  $\point = \mirror(\dpoint)$ and $\point_{\set} = \mirror_{\set}(\dpoint)$, \cref{lem:Fench-set} yields
\begin{equation}
\label{eq:energy-grad}
\nabla\energy(\dpoint)
	= (\point - \point_{\set})
		\cdot \begin{dcases*}
			1
				&\quad
				if $\fench_{\set}(\dpoint) \leq 1$,
			\\
			1/\sqrt{\fench_{\set}(\dpoint)}
				&\quad
				otherwise.
		\end{dcases*}
\end{equation}
Furthermore, by \cref{lem:Fench}, we also have $\fench(\base,\dpoint) \geq (\hstr/2) \norm{\point - \base}^{2}$ for all $\base\in\set$, so, by minimizing over $\base\in\set$, we get
\begin{equation}
\label{eq:Fench-lower}
\fench_{\set}(\dpoint)
	\geq (\hstr/2) \dist(\point,\set)^{2}
	= (\hstr/2) \norm{\point - \proj_{\set}(\point)}^{2}
\end{equation}
where $\proj_{\set}(\point) \defeq \argmin_{\base\in\set} \norm{\point - \base}$.
In turn, this gives $\norm{\point - \proj_{\set}(\point)} \leq \sqrt{2/\hstr}$ whenever $\fench_{\set}(\dpoint) \leq 1$, so we get
\begin{equation}
\label{eq:energy-grad-bound1}
\norm{\nabla\energy(\dpoint)}
	= \norm{\point - \point_{\set}}
	\leq \norm{\point - \proj_{\set}(\point)}
		+ \norm{\proj_{\set}(\point) - \point_{\set}}
	\leq \sqrt{2/\hstr}
		+ \diam(\set)
\end{equation}
whenever $\fench_{\set}(\dpoint) \leq 1$.
On the other hand, if $\fench_{\set}(\dpoint) \geq 1$, we have
\begin{align}
\norm{\nabla\energy(\dpoint)}
	= \frac{\norm{\nabla\fench_{\set}(\dpoint)}}{\sqrt{\fench_{\set}(\dpoint)}}
	&\leq \sqrt{\frac{2}{\hstr}} \frac{\norm{\point - \point_{\set}}}{\norm{\point - \proj_{\set}(\point)}}
	\explain{by \eqref{eq:energy-grad} and \eqref{eq:Fench-lower}}
	\\
	&\leq \sqrt{\frac{2}{\hstr}}
		\parens*{1 + \frac{\norm{\proj_{\set}(\point) - \point_{\set}}}{\norm{\point - \proj_{\set}(\point)}}}
	\explain{by the triangle inequality}
	\\
	&\leq \sqrt{\frac{2}{\hstr}}
		\parens*{1 + \frac{\diam(\set)}{\norm{\point - \proj_{\set}(\point)}}}
\label{eq:energy-grad-bound2}
\end{align}
By the reciprocity condition \eqref{eq:rec}, it follows that the set $\setdef{\point = \mirror(\dpoint)}{\fench_{\set}(\dpoint) \geq 1}$ is well-separated from $\set$, so $\norm{\point - \proj_{\set}(\point)}$ is bounded away from zero if $\fench_{\set}(\dpoint) \geq 1$.
Thus, by combining \cref{eq:energy-grad-bound1,eq:energy-grad-bound2}, we conclude that $\norm{\nabla\energy(\dpoint)}$ is bounded.
Finally, again by \cref{lem:Fench}, $\fench(\dpoint)$ is $(1/\hstr)$-Lipschitz smooth, so
\cref{eq:dsmooth} \textendash\ which is equivalent to the Lipschitz smoothness of $\energy$ \textendash\ follows immediately from the concavity of $\gauge$.

\item
For the positive-definiteness requirement of \cref{def:energy}, note that \cref{lem:Fench-set} and the reciprocity condition \eqref{eq:rec} yield $\mirror(\dpoint) \to \set$ if and only if $\fench_{\set}(\dpoint) \to 0$.
Thus, given that $\gauge(z) = z$ for small $z$, the same will hold for $\energy = \gauge\circ\fench_{\set}$, and our claim follows.

\item
Finally, for the Lyapunov properties of $\energy$ under \eqref{eq:MD}, recall that \cref{lem:Fench-MD} gives $\dot\fench_{\set}(\dpoint) = \braket{\vecfield(\point)}{\point - \point_{\set}}$, so
\begin{equation}
\dot\energy(\dpoint)
	= \braket{\dot\dpoint}{\nabla\energy(\dpoint)}
	= \gauge'(\fench(\dpoint))
		\braket{\vecfield(\point)}{\point - \point_{\set}}
	< 0
	\quad
	\text{whenever $\point \in \nhd \setminus \set$}
\end{equation}
where we used the defining property \eqref{eq:VS-set} of $\set$ (recall that $\point_{\set} \in \set$ by construction).
Moving forward, by \cref{lem:Fench-set}, there exists some $\emax>\ground$ such that the sublevel set $\dbasin = \setdef{\dpoint\in\dspace}{\fench_{\set}(\dpoint) \leq \emax}$ is mapped to $\nhd$ under $\mirror$, \ie $\mirror(\dpoint) \in \nhd$ whenever $\fench_{\set}(\dpoint) \leq \emax$.
Thus, putting everything together, we conclude that $\dot\energy(\dpoint)\to0$ if and only if $\fench_{\set}(\dpoint)\to0$, which implies that $\sup\setdef{\dot\energy(\dpoint)}{\loen < \energy(\dpoint) < \hien} < 0$ for all $\loen \in (\ground,\hien)$, and our proof is complete.
\qedhere
\end{enumerate}
\end{proof}

\section*{Acknowledgments}
\begingroup
\small
%
%
P.~Mertikopoulos is grateful for financial support by
the French National Research Agency (ANR) in the framework of
the ``Investissements d'avenir'' program (ANR-15-IDEX-02),
the LabEx PERSYVAL (ANR-11-LABX-0025-01),
MIAI@Grenoble Alpes (ANR-19-P3IA-0003),
and the bilateral ANR-NRF grant ALIAS (ANR-19-CE48-0018-01).
This project has also received funding from the European Research Council (ERC) under the European Union's Horizon 2020 research and innovation programme (grant agreement \textnumero\ 725594-TIME-DATA) and from the Swiss National Science Foundation (SNSF) under grant number 200021-205011.
\endgroup

\bibliographystyle{icml}
\bibliography{bibtex/IEEEabrv,Bibliography,bibtex/Bibliography-YPH}

\end{document}